\numberwithin{equation}{section}
\theoremstyle{plain}
\newtheorem{theorem}{Theorem}[section]
\newtheorem{prop}[theorem]{Proposition}
\newtheorem{lemma}[theorem]{Lemma}
\newtheorem{cor}[theorem]{Corollary}
\newtheorem{conj}[theorem]{Conjecture}
\theoremstyle{definition}
\newtheorem{defi}[theorem]{Definition}
\newtheorem{exam}[theorem]{Example}
\newtheorem{rema}[theorem]{Remark}
\newcommand{\CC}{\ensuremath{\mathbb C}}
\newcommand{\EE}{\ensuremath{\mathcal{E}}}
\newcommand{\nfree}{\ensuremath{\mathcal{N}}} 
\newcommand{\ZZ}{\ensuremath{\mathbb Z}}
\newcommand{\coheight}{\mathop{\mathrm{coheight}}}
\def\Coker{\mbox{Coker\,}}
\def\deg{\mbox{deg\,}}
\def\dim{\ensuremath{\mbox{dim\,}}}
\def\ext{\mbox{Ext}}
\def\gh{\mbox{gh\,}}   
\def\gd{\mbox{gl. dim\,}}
\def\gr{\mbox{gr\,}}
\newcommand{\height}{\mathop{\mathrm{height}}}
\newcommand{\Hom}{\mathop{\mathrm{Hom}}}
    \newcommand{\homol}{\mathop{\mathrm{H\!}}}
\def\Id{\mbox{Id\,}}
\def\im{\mbox{Im\,}}  
\def\ker{\mbox{Ker\,}}
\def\lt{\mbox{lt\,}}
\def\mod{\mbox{mod\,}}
\def\ord{\mbox{ord\,}}
\newcommand{\pd}{\mathop{\mathrm{pd}}}
\def\Res{\mbox{Res\,}}
\def\rk{\mbox{rk\,}}
\newcommand{\supp}{\mathop{\mathrm{supp}}}
\def\tor{\mbox{Tor\,}}
\def\Zpl{\mbox{\bf Z}_+}
\begin{document}

\title{Degree of freedom count in linear gauge invariant PDE systems}

\author{Simon Lyakhovich and Dmitri Piontkovski}

     \address{Laboratory of Theoretical and Mathematical Physics, Physics Faculty, Tomsk State University, Tomsk, 634050, Russia \\ Department of Mathematics for Economics,
Myasnitskaya str. 20, HSE University, 
Moscow 101990, Russia}

\email{sll@phys.tsu.ru, dpiontkovski@hse.ru}



\date{\today}

 \begin{abstract}
Suppose a system of not necessarily Lagrangian partial linear differential equations (PDE) with constant coefficients describes a classical field theory. Einstein proposed a definition of the ``strength'' of such a field theory that defines its degree of freedom (DoF). Einsteinian strength is based on the asymptotic number of free Taylor series coefficients of bounded degree in the general solution of the PDE system. The direct count of DoF in this way is a complex and
technically demanding process. Here, proceeding from Einsteinian strength of equations and making use of commutative algebra tools, we deduce another DoF count recipe which is formulated in terms of orders of the field equations, their gauge symmetries and gauge identities. This DoF count recipe covers the case of reducible gauge field theories, and it is easy to use.

We begin with interpreting the matrix of the system as a linear map between polynomial modules. First, proceeding from Einsteinian definition, we derive an explicit formula for DoF as the multiplicity of a certain extension module. Second, we prove (for homogeneous and certain more general systems) another explicit formula for  DoF in terms of orders of equations and gauge generators. A notable consequence of this formula is that two Hermitian conjugate systems have identical DoF. 

Every system of classical field theory defines the BRST complex which has the natural ${\mathbb Z}$-grading known as the ghost number. We equip this complex with another ${\mathbb Z}$-grading, which we call differential order.  This grading is 1 for every space time-derivative, while all the fields, ghosts, and antifields are assigned with this degree in a certain way, depending on their ghost number and the orders of equations and gauge generators. We compute the Euler characteristic of the BRST complex with respect to this new ${\mathbb Z}$-grading. This provides homological interpretation of DoF for linear gauge system as minus
the residue at infinity of the logarithmic derivative of the Euler characteristics for its BRST complex.  
 \end{abstract}

\maketitle

\section{Introduction}
The main goal of this article is to provide a convenient and explicitly covariant tool for computing the physical degree of freedom (DoF) for classical linear field theories. The classical field equations are assumed to form a system of partial differential equations (PDE) with constant coefficients. 

In classical field theory, there are two conceptually different ways of defining DoF. One way is based on the idea of the number of initial data required to fix a unique solution for the PDE system modulo gauge transformations. This concept implies considering field equations as evolutionary ones, with time derivatives playing a special role, unlike derivatives with respect to the other independent variables, considered as ``space coordinates''. In this concept,  DoF is understood as the number of independent arbitrary gauge invariant functions on the space involved in the general solution of the system. 

For a PDE system arising from the least action principle, the equations can be reformulated in a constrained Hamiltonian form. Applying the Dirac-Bergmann algorithm to identify secondary constraints, one can always find the gauge symmetry and count DoF~\cite{Dirac}.
Non-variational differential equations can also be systematically brought into certain first-order normal forms with respect to time derivatives \cite{Lyakhovich:2008hu}. This method, being, in a sense, a non-variational extension of the Dirac-Bergmann constrained analysis, also allows one to find the gauge symmetry of the system and to count DoF. However, these DoF control methods are not manifestly covariant, which makes their application to relativistic field theories problematic.  
Furthermore, these methods imply bringing the PDE system
to certain normal forms with respect to time derivatives of the fields, which often may require inverting differential operators with respect to space coordinates. This may break the spatial locality and seems inappropriate for the local field theory.

The second approach to counting DoF for PDE is based on 
the notion of Einsteinian strength of equations
\cite{einstein}. We briefly explain this DoF count method in Subsection 2.2; for a systematic exposition and applications, see \cite{Seiler}. The general idea of this method is to evaluate the dimension of the solution space for the equations by studying the number
of the Taylor series coefficients of bounded degree in the general solution. 
The limit of this number divided by the same number
of the Taylor series coefficients for the unrestricted general function is then proportional to the degree of freedom. The advantage of the method is that it is a manifestly covariant procedure. 
It has been  
demonstrated under certain assumptions that the Einsteinian strength method leads to the same DoF number as the amount of initial data, see \cite[Ch.~8]{Seiler}, \cite{Schutz1975}. The strength method is sometimes used in specific field theory models, especially when the manifest covariance matters; see, for example, \cite{Garecki:2002xc}, \cite{Belenchia}. The limitation of this method, which restricts its application, lies in the fact that studying the asymptotic growth of a dimension is a complex and
technically demanding process.
 Even for the D'Alambert equation, this is not an immediate count. 

In \cite{KLS2013}, a manifestly covariant method for DoF counting  is proposed for general PDE systems. This method requires bringing the system into a certain form of  involution and determining sequences of gauge symmetries and gauge identities for the involutive closure.  These sequences, which are detailed in the next section, are used for an immediate DoF computation in terms of the orders of the differential operators corresponding to the equations, gauge symmetries, and gauge identities.
For three-term sequences (i.e., in the case of irreducible gauge symmetries and gauge identities), \cite{KLS2013} provides the proof that this method agrees with the Einsteinian DoF definition. For the general case of longer sequences, a conjecture  is proposed \cite{KLS2013} for DoF computation in terms of all gauge structure orders though this conjecture remains unproven. This conjectural formula is widely used in various models of gravity and higher spin fields (e.g., see \cite{Basile:2022mif},  \cite{Rahman:2020qal}, \cite{Joung:2016naf}, \cite{Kulaxizi:2014yxa}, \cite{AFHL2022}),
 because the manifest covariance matters for these models, while the Einsteinian strength count is too cumbersome in these theories.
 We clarify the meaning of the differential orders used in this  formula and prove a more general version of this conjectural formula  in the situations  where the orders are well-defined.   
 
Our first objective is to 
derive for linear PDEs an explicit formula for DoF in the Einsteinian sense using the language of commutative algebra (Theorem~\ref{th: deg_free_via_Ext}). 
That is, let $T$ denote the matrix defining the left-hand side of the system of PDEs. Since we consider the equations with constant coefficients, the entries of $T$ belong to the ring $P = \CC [\partial_0, \dots, \partial_{d-1}]$ of differential operators. Thus,  $T$ defines some map of free modules over the ring $P$.
Then DoF can be expressed in terms of the cokernel $W$ of this map as
$$
{\mathcal N} = 
    e(\ext^1_P(W,P), d-1).
$$
The formula may seem unusual, as using the commutative module multiplicity $e$ and the extension functor $\ext^1$ is uncommon in physics. Nevertheless, it provides a direct and efficient way to calculate DoF. For a given system in a specific dimension $d$, the computation requires only a few lines of code in a common computer algebra system (e.g.: Macaulay2). Proceeding from this reformulation of Einstein's DoF definition, we deduce the above simple DoF counting algorithm~\cite{KLS2013}. 

Moreover, we give a homological interpretation 
of the degree of freedom in terms of BRST complex of the PDE system. 
To give such an interpretation, 
we restrict ourselves to homogeneous systems, that is, we assume that one can assign  dimensions to the fields  in such a way that the equations of the system become homogeneous. 
Equivalently, we assign integer differential order to each field variable to ensure that in each  equation of the system, all terms have the same differential order (assuming that the order of  partial derivatives is one). Then the differential order is define for all generatings of the BRST complex. Moreover, the BRST differential has zero differential order. If the initial differential orders of the filed variables are sufficiently large, the Euler characteristic of the BRST complex is well-defined as an analytical function. We show that the degree of freedom is equal to minus the residue at infinity of
the BRST complex Euler characteristic logarithmic derivative.

We also present new facts about DoF for linear PDEs, including the result that conjugate linear systems (related by formal Hermitian conjugation of their differential operators) have  the same DoF under mild assumptions. 

The article is organized as follows. In the next section, we briefly outline the basics of linear gauge field theories, including 
the gauge symmetry transformations and gauge identities 
of PDE systems. 
The sequences of gauge identities and gauge symmetries 
of the system defines the BRST complex. 
We recall the generalities of the construction of the BRST complex for not necessarily Lagrangian systems \cite{KazLS:2005} in Subection~\ref{subs: BRST}.  For linear gauge field theory, we have a natural sequence of modules over the ring of polynomials of commutative variables $\partial_\mu$ being partial derivatives by space. The operators of equations, gauge symmetries, and gauge identities define the sequence of maps between these modules. This sequence is a complex that can be thought of as the Fourier transform of the configuration space for the BRST complex of the PDE system. The DoF count algorithm of the article \cite{KLS2013} is formulated in terms of this complex. 

In Section~\ref{sec:gauge_and _polynom}, we translate
the above consideration into the framework. 
of polynomial modules. First, we consider complexes of free 
polynomial modules with differentials defined by the operators of gauge identities, gauge symmetries, and their transposes. In this way, we clarify the connection with the BRST complex. Then (in Subsection~\ref{subs:N=e}) we prove the above formula for the Einsteinian DoF in terms of polynomial modules. 

In the next Subsection~\ref{subs: homogen}, we 
focus on the key case of homogeneous differential equations.
In this case,  the orders of the equations, gauge symmetries, and gauge identities are well-defined. We use it to prove that the Einsteinian DoF is equal to the one defined by the orders of gauge structures. Thus, we confirm the conjecture mentioned above for homogeneous systems. 
We conclude that for such systems,  DoF does not change after the conjugation of the system. 
Moreover, in Subsection~\ref{subs: BRST_DoF} we give a homological interpretation of the degree of freedom in terms of the BRST complex.  Besides the ghost number, we equip the complex with one more $\ZZ$-grading we call differential order. 
In Proposition~\ref{prop: DoF via BRST}  we compute Euler characteristics of the BRST complex with respect to this grading. We show that DoF equals to  minus the residue at infinity of the Euler characteristic logarithmic derivative.
Analogous formula in terms of the algebraic dual version of BRST complex involves the residue of the Euler characteristic logarithmic derivative
at zero, see the same Proposition~\ref{prop: DoF via BRST}.

The case of a general (non-homogeneous) system 
is considered in Section~\ref{sec: non-homogen}. 
We refine the notion of ``involutive system'' in the sense of~\cite{KLS2013}  by introducing the concepts of {\em weakly involutive} and {\em doubly weakly involutive} systems. Here ``doubly''  means that both the system and its conjugate are weakly involutive. For doubly weakly involutive systems, we prove the conjecture from~\cite{KLS2013}, and we simultaneously establish the equality of DoF for a pair of conjugate systems. We illustrate the formulae by  examples of the Maxwell equations (Examples~\ref{ex: maxwell} and~\ref{ex: maxwell via Euler}),
massive spin two field equations (Example~\ref{ex: massive spin 2}), and the Proca equations~(Example~\ref{ex: proca}).

In Appendix~\ref{app: algebra}, we collect some commutative algebra definitions and results necessary for the main part of the paper. The most important of them are the definition of the dimension and the multiplicity (see Proposition~\ref{prop:dim_n_mult}) and bounds for the dimension of $Ext$ modules in Proposition~\ref{prop: groth} (the last is mostly a part of the Grothendieck local duality theorem). Then, we connect the multiplicity of a graded module with its Euler characteristic in Proposition~\ref{prop:Q and multiplicity}. 

\section{Linear gauge systems}

\subsection{Configuration space and equations of motion}
Linear field theory begins with a certain space-time $X$, which is supposed to be a $d$-dimensional linear space. The space-time is sometimes referred to as the source 
space for the field theory. For certainty, we assume the source is the Minkowski space, $X={\mathbb{R}}^{1,d-1}$. The linear coordinates on $X$ are denoted $x^\mu, \mu=0,1,\dots ,d-1$, and
$x=(x^0,\dots , x^{d-1})\in X$.

Then, in linear field theory, there is a target space $M$ which is supposed to be an $m$-dimensional linear space. We denote the linear coordinates on $M$ by $\phi^i,\, i=1, \dots, m$, $\phi=(\phi^1, \dots\phi^m)\in M$. 

The fields $\phi(x)$ are the smooth functions that map $X$ to $M$, $\phi^i\in \mathcal{C}^\infty(X)$.
We denote by $\mathcal{M}$ the linear space of all smooth functions $\phi: X \mapsto M$. $\mathcal{M}$ is understood as the configuration space of field theory.

The true configurations of fields are defined as solutions of the field equations (also known as equations of motion, EoMs) being the system of $n$ linear PDEs with constant coefficients,
\begin{equation}
\label{EoMs}
  T_a\equiv \hat{T}_{ai}\phi^i=0 \, ,  \quad a=1, \dots, n,
\end{equation}
where the elements of matrix $\hat{T}_{ai}$ are the differential operators with constant coefficients,
\begin{equation}
\label{T-hat} 
\hat{T}_{ai}=\sum_{k=0}^{k_{ai}^{max}}
\sum_{\mu_1,\dots, \mu_k}T_{ai}^{\mu_1\dots\mu_k}\partial_{\mu_1} \dots \partial_{\mu_k}
 , \quad  T_{ai}^{\mu_1\dots\mu_k}=const, \qquad \partial_\mu=\frac{\partial}{\partial x^\mu} \, .
\end{equation}
The left hand side $T_a$ of the system (\ref{EoMs}) belongs to the linear space $\Bar{\mathcal{M}}$ of the $n$-tuples of smooth functions, $T:X\mapsto \bar{M}$, where $\bar{M}$ is an $n$-dimensional linear space. In other words, $T$ is a section of the rank $n$ trivial vector bundle over $X$.  We refer to $\bar{M}$ as the conjugate target space, while $\mathcal{\bar{M}}$ is termed a conjugate configuration space.

The linear space of solutions of the field equations (\ref{EoMs}), being the subspace of the configuration space $\mathcal{M}$, is usually referred to as the \emph{mass shell} of the field theory. We denote the mass shell as $\Sigma$:
\begin{equation}\label{Sigma}
  \Sigma=\{\phi\in\mathcal{M}\,|\,\hat{T}\phi=0\}\, .
\end{equation}

One more relevant notion for linear gauge systems is the formal Hermitian conjugation.
Given the  matrix differential operator $\hat{O} (\partial){}_{ai}$ whose entries are polynomials in partial derivatives $\partial_\mu$, the matrix of the Hermitian conjugate operator $\hat{O}{}^\dagger$ is defined by the rule
\begin{equation}\label{O-dagger}
    \hat{O}{}^\dagger{}_{ia}(\partial)=\hat{O}{}_{ai}(-\partial)\, .
\end{equation}
This means that the matrix of the Hermitian conjugate operator is defined by the transposition and reflection of the formal variables $\partial_\mu$.
The original matrix differential operator $\hat{O}$ maps the configuration space to the conjugate configuration space, while the Hermitian conjugate operator $\hat{O}{}^\dagger$ defines the inverse mapping of the corresponding dual spaces:
\begin{equation}
\label{OO-dagger}
\hat{O}{}\,:\, \mathcal{M}\,\mapsto\, \Bar{\mathcal{M}}\,,\quad\hat{O}{}^\dagger :\,  \Bar{\mathcal{M}}^*\,\mapsto\,\mathcal{M}^*\, .
\end{equation}
There is an important special case when $\Bar{\mathcal{M}}=\mathcal{M}^*$. In this case, both $\hat{O}{}$ and $\hat{O}{}^\dagger$ map $\mathcal{M}$ to $\mathcal{M}^*$.
Hence, any operator mapping these space decomposes into a Hermitian and anti-Hermitian part,
\begin{equation}\label{Herm-decomp}
   \hat{O}{}\,:\, \mathcal{M}\,\mapsto\, \mathcal{M}^*\, \quad\Rightarrow \quad\hat{O}=\hat{H} +\hat{A}, \quad \hat{H}=\hat{H}{}^\dagger, \,\, \hat{A}=-\hat{A}{}^\dagger.
\end{equation}
If $\Bar{\mathcal{M}}=\mathcal{M}^*$ and the operator $\hat{T}$ of the equations 
(\ref{EoMs}) is Hermitian, then the equations are Lagrangian, i.e. they follow from the least action principle.  Given the Hermitian matrix operator $\hat{T}=\hat{T}{}^\dagger$, it defines the bi-linear action functional
\begin{equation}\label{S}
    S[\phi(x)]=\frac{1}{2}\int dx \phi^i  \hat{T}_{ij}\phi^j\,.
\end{equation}
The Euler--Lagrange equations for this action reproduce the original field equations
(\ref{EoMs}). If the operator $\hat{T}$ in action (\ref{S}) was not Hermitian (while the number of fields is equal to the number of equations), the Euler--Lagrange equations would involve only the Hermitian part, while the anti-Hermitian part  drops out from the equations (\ref{EoMs}):
\begin{equation}\label{E-L-Eqs}
    \frac{\delta S}{\delta \phi^i}=\frac{1}{2}\left(\hat{T}_{ij}+ \hat{T}{}^\dagger_{ij}\right)\phi^j \, .
\end{equation}


Two systems of linear field equations (\ref{EoMs}) say $T_a=0$ and $T'_{a'}=0$, are said to be equivalent if they are differential consequences of each other. This means, the differential operators $\hat{D}^a_{a'}$ and ${}'\!\hat{D}_a^{a'}$
exist such that
\begin{equation}\label{TTprime}
  \hat{D}^{a}_{a'} T_{a}=T_{a'}, \, {}'\!\hat{D}_{a}^{a'}T_{a'}=T_a  , \quad a=1,\dots , n,\, a'=1,\dots , n'\,.
\end{equation}
In general, $n\neq n'$, i.e. the number of equations can be different for equivalent systems. For example, if any set of differential consequences complements the system of equations, it is an equivalent system. Obviously, the solutions coincide for equivalent systems of field equations.

Let us discuss ordes of the differnetial equations $T_a$. 
We assign an integer differential order $\theta_i$ to each field variable
$\phi^i$ ($i=1, \dots, m$) in~(\ref{EoMs}). It can be thought of as the inverse dimension of length. 
In the simplest situation (which we refer to as {\em standard grading}), all 
$\theta_i$ are equal to zero. 
We assume that each partial differentiation $\partial_\mu$ increases the dimension of variables by 1, it increases the differential order by 1.
Then each differential operator assigms its differential order as $\ord \phi^i = \theta_i$, $\ord (\partial_\mu \phi^i) = \theta_i +1$, etc. 
By the order of the specific differential equation $T_a=0$  of the system (\ref{EoMs}) we understand the maximal order of the partial derivatives involved in the equation:
\begin{equation}\label{OrderT}
  \ord(T_a)= \max_i\{k_a^i + \theta_i \}\, .
\end{equation}
In the case of standard grading, we have $ \ord(T_a)=\max_i\{k_a^i\}$. We denote $k_a = \ord T_a$. 

Let us explain now the notion of gauge symmetry transformation for the system of linear field equations (\ref{EoMs}).
Consider a set of differential operators $\hat{R}^i$ that maps the linear space of smooth functions to the configuration space of fields:
\begin{equation}\label{R}
  \phi^i_\epsilon=\hat{R}{}^i\epsilon(x)\in\mathcal{M}, \quad\hat{R}{}^i=\sum_{k=0}^{r_i}R^{i\,\mu_1\dots\mu_k}\partial_{\mu_1}\dots\partial_{\mu_k},\quad R^{i\,\mu_1\dots\mu_k}\in\mathbb{C}, \quad\epsilon\in\mathcal{C}^\infty(X).
\end{equation}
The arbitrary function $\epsilon$ being involved in this map is called the gauge transformation parameter.

For example, let the fields be the components of the one-form $A_\mu$. The operator $\hat{R}{}_\mu$ can be the partial derivative $\partial_\mu$ that maps the smooth functions to one-forms, $A_\mu^\epsilon=\partial_\mu\epsilon$.
For another example, let us consider the fields to be the components of a symmetric second rank tensor $h_{\mu\nu}=h_{\nu\mu}$ on the Minkowski space.
Consider the operator $\hat{R}_{\mu\nu}=\partial_\mu\partial_\nu$ being the second partial derivative. It maps any smooth function $\epsilon$ to the space of symmetric tensors $h_{\mu\nu}^\epsilon=\partial_\mu\partial_\nu\epsilon$.

The linear map (\ref{R}) from the space of smooth functions to the configuration space of fields is said to be a gauge symmetry transformation if $\phi^i_\epsilon$ is a solution of the field equations (\ref{EoMs}) for arbitrary $\epsilon\in\mathcal{C}^\infty$:
\begin{equation}\label{GS}
 \hat{T}_{ai}\phi^i_\epsilon=0\,, \quad\forall\epsilon\in\mathcal{C}^\infty(X)\, .
\end{equation}
The solution of the field equations is considered trivial if it is a gauge transformation for any specific gauge parameter. 

Any two solutions are considered equivalent if their difference is trivial:
\begin{equation}\label{equiv}
  \phi^i\sim\phi'{}^i\quad\Leftrightarrow\quad \exists\epsilon\in\mathcal{C}^\infty(X):\,\,  \phi^i-\phi'{}^i=\hat{R}^i\epsilon
\end{equation}
The operator $\hat{R}^i$ from (\ref{R}) generating a trivial solution (\ref{GS}) is not unique. 
For example, the action of any differential operator with constant coefficients on a trivial solution results in another trivial solution. In general, there can be different trivial solutions which do not necessarily reduce to the derivatives of the unique trivial solution.

Let us assume that field equations (\ref{EoMs}) admit the set of $m_1$ different trivial solutions
\begin{equation}\label{triv-set}
\phi^i_1=\hat{R}{}^i_1\epsilon^1,\, \dots, \, \phi^i_{m_1}=\hat{R}{}^i_{m_1}\epsilon^{m_1}
\end{equation} 
of the field equations, i.e. for every $\alpha=1,\dots , m_1$ the field $\phi^i_\alpha$ is a gauge symmetry transformation (\ref{GS}).
Any linear combination of trivial solutions and their derivatives is considered a trivial solution,
\begin{equation}\label{lin-comb}\phi^i_{(\epsilon_1\dots\epsilon_{m_1})}=\hat{R}^\alpha\phi^i_\alpha,
\end{equation}
where $\hat{R}^\alpha, \alpha=1,\dots, m_1$ are differential operators with constant coefficients.

The trivial solutions $\phi_\alpha^i, \alpha=1,\dots , m_1 $ defined in  (\ref{triv-set})
 are said to form a generating set of gauge symmetry transformations if any trivial solution (\ref{R}), (\ref{GS}) reduces to a linear combination of solutions  (\ref{triv-set}) and their derivatives:
\begin{equation}\label{G-set}
  \hat{T}_{ai}\hat{R}{}^i\epsilon=0,\,\,\forall\epsilon\in\mathcal{C}^\infty(X)\quad\Leftrightarrow\quad\exists\hat{R}{}^\alpha\,: \hat{R}{}^i=\hat{R}{}^\alpha\hat{R}{}^i_\alpha\, ,
\end{equation}
where $\hat{R}{}^\alpha$ are differential operators with constant coefficients.

The set of gauge transformation parameters $\epsilon^\alpha, \, \alpha=1,\dots, m_1$ 
maps the space-time $X$ to the linear space $M_1, \, \dim M_1=m_1$. 
The gauge parameters $\epsilon$ belong to the space of smooth functions $\mathcal{M}_1$, $\epsilon\,:X\,\mapsto M_1$.  We refer to $M_1$ as the target space of the gauge parameters, and $\mathcal{M}_1$ is said to be the configuration space of the gauge parameters.

The general trivial solution is understood as the linear combination of the elements of the generating set of trivial solutions involving independent arbitrary functions $\epsilon^\alpha$:
\begin{equation}\label{G-Triv}
  \phi^i_\epsilon=\hat{R}{}^i_\alpha\epsilon^\alpha.
\end{equation}
Any trivial solution of the field equations is also known as ``a pure gauge", or a gauge transformation of the fields.  Any pure gauge solution can be derived from the general trivial solution (\ref{G-Triv}) by selecting specific gauge parameters $\epsilon^\alpha=\hat{R}{}^\alpha\epsilon$, cf. (\ref{G-set}). 

Now, let us explain the physical reasons behind the idea to consider the solutions of the field equations (\ref{EoMs}) as trivial if they reduce to a linear combination of the derivatives of the arbitrary functions $\epsilon\in\mathcal{C}^{\infty}(M)$. The reason is that the observable classical dynamics are assumed to be uniquely determined by specifying the initial data and/or boundary conditions. This determinism is understood as the classical causality of dynamics. The initial or boundary conditions restrict the values of fields $\phi^i (x), x\in X$ on certain submanifold $
Y
\subset X, \, \dim 
Y
<\dim X$. The values of the initial/boundary data should unambiguously determine the mass shell. The fields on $
Y
\subset X$ are the functions of less than $d$ coordinates on $X$. Imposing any boundary or initial conditions, one cannot restrict the gauge parameters $\epsilon\in\mathcal{C}^\infty (M)$ involved in the pure gauge solution (\ref{G-Triv}) as it is a solution with arbitrary $\epsilon$. Since the pure gauge solutions involve arbitrary functions of all  $d$ coordinates, these solutions cannot contribute to any 
physical observable
being a function of the fields and their derivatives.  To put it  simply, arbitrarily evolving quantities cannot be observable. Since the different solutions of the field equations define the same observable quantities, they describe the same physical reality. Because of this  any two solutions to the field equations (\ref{EoMs}) are considered equivalent if their difference is a gauge transformation.

\subsection{Einsteinian strengths of equations and degree of freedom}

\label{subs:deg_freedom_definition}

Einstein~\cite{einstein} proposed a measure for the degree of freedom admitted by the system~(\ref{EoMs}). 
Intuitively, the measure counts the number of independent 
functions of $d-1$ independent variables needed to define a general solution of the system~(\ref{EoMs}) modulo arbitrary functions of $d$ variables. The latter are related to nonphysical ambiguity in the solutions that originate from the gauge symmetry, see \cite{einstein}. In the case of variational systems, the equations can be brought to Dirac's Hamiltonian constrained form \cite{Dirac} that allows for another way to systematically count the DoF. For not necessarily Lagrangian systems the Dirac-Bergmann method is extended in the article \cite{Lyakhovich:2008hu}.

Let us explain Einsteinian DoF count.
The field
$\phi:X\mapsto M $ is expanded as a Taylor series
\begin{equation}\label{eq:phi_Taylor_expansion}
\phi(x) =
\sum_{\alpha = (\alpha_0, \dots, \alpha_{d-1}), \alpha_i \ge 0} 
\phi_\alpha x^\alpha .
\end{equation}
Given $N\ge 0$,  consider the projection $p_N$
which sends the above function $\phi(x)$ into the polynomial 
$$ p_N(\phi(x)) = \sum_{\alpha\ge 0,\, \alpha_0+\dots +\alpha_{d-1} \le N}  \phi_\alpha x^\alpha =
\phi(x)+o(|x|^N).$$

Let $\Sigma$ denote the set of all solutions of the system~(\ref{EoMs}). Then its image $p_N(\Sigma)$ under $p_N$ is a vector space of polynomials of degree at most $N$. The integral function $h_{\Sigma} (N) = \dim p_N(\Sigma) - \dim p_{N-1}(\Sigma)$ 
(called the Hilbert function of $\Sigma$) 
measures the growth of the solution space. Essentially, its value is equal to the number of independent $N$-th order coefficients in the Taylor series expansion of the general solution. 
For example, if the system is trivial (that is, all coefficients of the operators $T_a$ are zero%
), we have $h_{\Sigma} (N) = m\binom{N+d}{N}- m\binom{N+d-1}{N-1} = m\binom{N+d-1}{N}$. Similarly, if ${\Sigma'}\subset \Sigma$ denotes the set of all trivial solutions, the corresponding  Hilbert function $h_{\Sigma'} (N) = \dim p_N(\Sigma') - \dim p_{N-1}(\Sigma')$ measures the growth of the space of trivial solutions. Thus, the growth of the physically meaningful (that is, non-trivial) solutions is measured by the difference
$h_{\Sigma} (N) - h_{\Sigma'} (N) $.
Finally, the {\em physical degree of freedom} for the system~(\ref{EoMs})  is defined as
\begin{equation}
\label{eq:deg_free_definition}
{\mathcal N} = \frac{1}{d-1} \lim_{N\to \infty} N \frac{h_{\Sigma} (N) - h_{\Sigma'} (N)}{\binom{N+d-1}{N}}. 
    \end{equation}
According to Einstein, the larger the physical degree of freedom, the weaker is the system.
Note that Einstein has used the term {\em coefficient of freedom} for the value 
$$
z_1 = (d-1 ){\mathcal N} =  \lim_{N\to \infty} N \frac{h_{\Sigma} (N) - h_{\Sigma'} (N)}{\binom{N+d-1}{N}}. $$

For example, for the trivial system $0=0$ we have zero coefficient of freedom, since in this case $h_{\Sigma} (N) = h_{\Sigma'} (N) $. Einstein 
shows by counting the equation strengths \cite{einstein} 
  that for both Maxwell's and Einstein's equations the degree of freedom 
is equal to 4. 
Note that the Hilbert function difference 
$h_{\Sigma} (N) - h_{\Sigma'} (N)$ for these two theories are not equal, so that
the strength of these two theories is the same only at the local level. 

\subsection{Further gauge transformations and gauge identities}

Let us consider the equations
\begin{equation}\label{Triv0}
\hat{R}{}^i_\alpha\epsilon^\alpha=0,
\end{equation}
where the gauge parameters $\epsilon^\alpha$ are treated as unknown fields. These equations state that with such gauge parameters, the gauge transformation vanishes of the original fields $\phi^i$. 
If these equations do not admit gauge symmetry of their own, the generating set (\ref{G-set}) of gauge transformations is said to be irreducible.
 The irreducibility condition reads
 \begin{equation}\label{Gauge-irred}
   \hat{R}{}^i_\alpha\hat{R}{}^\alpha\epsilon_{(1)}=0, \, \forall\epsilon_{(1)}\in\mathcal{C}^\infty(X)\quad\Leftrightarrow\quad\hat{R}{}^\alpha=0\, .
 \end{equation}

 If the equations (\ref{Triv0}) admit gauge symmetry of their own, the gauge symmetry (\ref{G-Triv}) of original field equations (\ref{EoMs}) is said reducible.

Let us consider the general trivial solution of the equations (\ref{Triv0}),
\begin{equation}\label{G-for-Gauge-transform}
\epsilon^\alpha=  \hat{R}_{(1)}{}^\alpha_{\alpha_1}\epsilon_{(1)}^{\alpha_1} \,, \quad \alpha_1=1,\dots, m_1 .
\end{equation}
The arbitrary functions  $\epsilon_{(1)}^{\alpha_1}$ involved in the solution (\ref{G-for-Gauge-transform}) are said to be the first stage gauge parameters for the original gauge symmetry parameters. The solution above of  equations (\ref{Triv0}) is also known as a gauge for gauge symmetry transformation. These gauge for gauge transformations can be reducible or irreducible in the same sense as the gauge transformations of the original fields. If the gauge for gauge transformations (\ref{G-for-Gauge-transform}) are reducible, the equations 
\begin{equation}\label{G-for-G2}
\hat{R}_{(1)}{}^\alpha_{\alpha_1}\epsilon_{(1)}^{\alpha_1} = 0
\end{equation}
admit general trivial solution
\begin{equation}\label{G2-for-Gauge-transform}
\epsilon_{(1)}^{\alpha_1}=  \hat{R}_{(2)}{}^{\alpha_1}_{\alpha_2} \epsilon_{(2)}^{\alpha_2} \,, \quad \alpha_1=1,\dots, m_2 .
\end{equation}
These gauge transformations of the second stage can be either irreducible or reducible once again. The sequence of gauge transformations of gauge transformations ends at some final stage $k$, when the symmetry turns out to be irreducible. The length of the sequence can depend on the choice of the generating set of the gauge transformations at each stage. The generating set can be chosen at each stage in such a way that the length of the sequence is finite with $k\leq d$. We explain this fact later.

Let us define the order of the gauge for gauge transformations (\ref{G-for-Gauge-transform}), (\ref{G2-for-Gauge-transform}). For any gauge parameter $\epsilon^\alpha$ of the generating set (\ref{G-Triv}) of gauge transformations, the differential order $r_\alpha$ is defined similarly to (\ref{OrderT}),
\begin{equation}\label{r-alpha}
r_\alpha=\max_{i}{(r^i_\alpha + \theta_i)} \, ,
\end{equation}
where $r^i_\alpha$ is the maximal order of the derivatives in the operator $\hat{R}{}^i_\alpha$, c.f. (\ref{R}).
For the standard grading $\theta_i =0 $, one gets therefore $r_\alpha=\max_{i} r^i_\alpha$.

The order $r_{\alpha_1}^{(1)}$ of the gauge for gauge transformation with parameter $\epsilon^{\alpha_1}_{(1)}$ is defined as follows:
\begin{equation}\label{r-alpha1}
    r_{\alpha_1}^{(1)}=\max_{\alpha}{(r_{\alpha_1}^\alpha+r_\alpha)} \, ,
\end{equation}
where $r_{\alpha_1}^\alpha$ is a maximal order of the derivatives in the operator $\hat{R}{}_{\alpha_1}^\alpha$, c.f. (\ref{R}).

Let us consider the general gauge for gauge transformation of the stage $k$, with gauge parameters $\epsilon^{\alpha_k}$ generated by the set of differential operators
$\hat{R}{}_{(k)\alpha_k}^{\alpha_{k-1}}, \,  \alpha_k=1,\dots,m_1$ of  order 
$r_{\alpha_k}^{(k)\alpha_{k-1}}$, 
\begin{equation}\label{k-stage}
  \hat{R}{}^{(k-2)\alpha_{k-2}}_{\alpha_{k-1}} \hat{R}{}_{{(k)\alpha}_k}^{\alpha_{k-1}}\epsilon^{\alpha_k}=0, \quad \forall\epsilon^{\alpha_k}\in\mathcal{C}^\infty(M)\, .
\end{equation}
The order of gauge for gauge transformation above is inductively defined, starting from $k=1$ (\ref{r-alpha1}):
\begin{equation}\label{r-alpha-k}
    r_{\alpha_k}^{(k)} =
    \max_{\alpha_{k-1}}{(r_{(k)\alpha_k}^{\alpha_{k-1}}
    +r_{\alpha_{k-1}}^{(k-1)})  } \, ,
\end{equation}
For completeness, we denote $r^{(0)}_{\alpha} = r^i_\alpha$ and $\alpha_0=\alpha$.

Now, let us discuss gauge identities between the field equations (\ref{EoMs}). By gauge identities we understand the linear differential relations between the left hand sides of the field equations which holds true for any fields:
\begin{equation}\label{L-A}
    \hat{L}{}^a_A \hat{T}_{ai}\phi^i\equiv 0\,, \quad\forall\phi\in\mathcal{E}\,\quad A=1,\dots ,l.
\end{equation}
The above set of $l$ gauge identities is not necessarily unique. The differential operators $\hat{L}{}^a _A, \, A=1,\dots, l$ are said to form the generating set for gauge identities if any other identity is a differential consequence of the identities (\ref{L}) :
\begin{equation}\label{L}
    \hat{L}{}^a \hat{C}_{ai}\phi^i\equiv 0\,, \quad\forall\phi\in\mathcal{E}\quad\Leftrightarrow\quad\exists\,\hat{L}{}^A:\hat{L}{}^a\phi_a=\hat{L}{}^A\hat{L}{}^a_A\phi_a, \, \forall\phi_a\in\mathcal{C}^\infty(M) \, .
\end{equation}
The set of gauge identity generators $\hat{L}{}^a_A$ (\ref{L-A}) is said irreducible if the equations
\begin{equation}
\label{L-zero}
\hat{L}{}^a_A\phi_a =0
\end{equation}
do not admit gauge identities of their own. In the opposite case, the gauge identities (\ref{L-A}) are said reducible. The reducibility of identities means, there exist differential operators $\hat{L}_{(1)}{}^A_{A_1}$ such that generate identities for equations (\ref{L-zero}):
\begin{equation}
\label{L-1}
    \hat{L}_{(1)}{}^A_{A_1}\hat{L}{}^a_A\phi_a =0\, ,\quad \forall\phi^a\in\mathcal{C}^\infty (M).
\end{equation}
These identities can be reducible again in the sense that there can exist identities between the identities. One can see that  equations (\ref{EoMs}) can have the sequence of identities of identities much like they can admit gauge for gauge symmetries.
The identities for identities of the stage $k$ read
\begin{equation}\label{L-k}
    \hat{L}_{(k)}{}^{A_{k-1}}_{A_k}\hat{L}_{(k-1)}{}^{A_{k-2}}_{A_{k-1}}\phi_{A_{k-2}} =0\, ,\quad \forall\phi_{A_{k-2}}\in\mathcal{C}^\infty (M) .
\end{equation}
The generating set of gauge identities can be chosen at each stage in such a way that the length of the sequence is finite with $k\leq d$. We explain this fact later along with a similar fact concerning the sequence of gauge symmetries for symmetries.

The order $l^{(0)}_A$ of gauge identities (\ref{L}) is defined by the orders $l^a_A$ of the differential operators $\hat{L}{}_A^a$ and the orders (\ref{OrderT}) of the equations of motion (\ref{EoMs}):
\begin{equation}\label{L-order0}
    l^{(0)}_A=\max_a({l^a_A+\ord({T}_a)}).
\end{equation}
The order of the $k$-th stage gauge identity (\ref{L-k}) is iteratively defined by the rule:
\begin{equation}\label{L-orderk}
    l^{(k)}_{A_k}=\max_{A_{k-1}}\left(l^{A_{k-1}}_{A_k}+l^{(k-1)}_{A_{k-1}}\right).
\end{equation}

The generators of gauge identities $\hat{L}{}^a_A$, being the matrix differential operators, map the  conjugate configuration space  $\Bar{\mathcal{M}}$ to the space $\Bar{\mathcal{M}}_1$ of $\bar{m}_1$-tuples of smooth functions $\bar{\phi}{}^A$.
The functions $\bar{\phi}{}^A$ map the space-time $X$ to linear space $M_1, \dim {\Bar{M}_1}=\bar{m}_1$. We refer to ${\Bar{M}_1}$ as the conjugate gauge target space.
This terminology and notation naturally extends to identities of identities. The generators  of $\bar{k}$-th stage gauge identity (\ref{L-k}) map the $\bar{k}-1$ stage conjugate gauge configuration space $\Bar{\mathcal{M}}{}_{\bar{k}-1}$ to the conjugate configuration space $\Bar{\mathcal{M}}{}_{\bar{k}}$  of the next stage.
Correspondingly, the elements of  $\Bar{\mathcal{M}}_{\bar{k}}$ 
are the $\bar{m}_{\bar{k}}$-tuples of smooth functions that map the space-time $X$ 
to the $\bar{k}$-th stage conjugate target space $\bar{M}_{\bar{k}}, \,\dim\bar{M}_{\bar{k}}=\bar{m}_{\bar{k}}$. We obtain a sequence of maps
\begin{equation}\label{Gauge-for-Gauge-Sequence}
   0\leftarrow \Bar{\mathcal{M}}_{\Bar{k}_{max}} \leftarrow\cdots\stackrel{\hat{L}_{(\bar{k})}}{\longleftarrow}\cdots\leftarrow
\Bar{\mathcal{M}}_{1}\stackrel{\hat{L}}{\longleftarrow}\Bar{\mathcal{M}}\stackrel{\hat{T}}{\longleftarrow}\mathcal{M}\stackrel{\hat{R}}
   {\longleftarrow}\mathcal{M}_{1}{\leftarrow} \cdots \stackrel{\hat{R}_{(k)}}{\longleftarrow}
    \cdots{\leftarrow}\mathcal{M}_{\Bar k_{max}}\leftarrow 0\,.
\end{equation}
The image of every map in the sequence belongs to the kernel of the next map. The sequence is not exact, in general, in the segment 
toward the left of the map $\hat T$
as the the pure gauge solutions do not necessarily exhaust the mass shell, and the LHS of the field equations do not necessarily span the kernel of the gauge identity generators.

The formal Hermitian conjugation of all the operators in the sequence (\ref{Gauge-for-Gauge-Sequence}) results in the conjugate sequence
\begin{equation}\label{Gauge-for-Gauge-Sequence-Conjugate}
   0\rightarrow \Bar{\mathcal{M}}^*_{\Bar {k}_{max}} 
   \rightarrow\cdots\stackrel{\hat{L}{}^{\dagger}_{(\bar{k})}}{\longrightarrow}\cdots\rightarrow\cdots \Bar{\mathcal{M}}^*_1
   \stackrel{\hat{L}{}^{\dagger}}{\longrightarrow}
 \Bar{\mathcal{M}}^*
\stackrel{\hat{T}{}^{\dagger}}{\longrightarrow}
\mathcal{M}^* \stackrel{\hat{R}{}^{\dagger}}
   {\longrightarrow}\mathcal{M}^*_{1}{\rightarrow} \cdots \stackrel{\hat{R}{}^{\dagger}_{(k)}}{\longrightarrow}
    \cdots{\rightarrow}\mathcal{M}^*_{k_{max}}\rightarrow 0\,.
\end{equation}
For Lagrangian systems (\ref{E-L-Eqs}), $\Bar{\mathcal{M}}=\mathcal{M}^*$, i.e. the left hand sides of the equations (\ref{EoMs}) are the elements of the dual space to the configuration space, and the matrix operator of equations $\hat{T}$ is Hermitian. This leads to the fact that the generators of gauge symmetries $\hat{R}$ and gauge identities $\hat{L}$ are connected by Hermitian conjugation, 
\begin{equation}\label{R-L}
    \phi^i\hat{T}_{ij}\hat{R}{}^j\epsilon =0, \quad \epsilon\hat{L}^i\hat{T}_{ij}\phi^j=0, \quad\forall\phi^i\in\mathcal{M}, \,\epsilon\in\mathcal{C}^\infty,\quad \hat{T}= \hat{T}{}^\dagger \quad \Rightarrow \quad\hat{R}^\dagger=\hat{L}\, .
\end{equation}
This pairing between gauge symmetries and gauge identities in Lagrangian systems can be viewed as a form of the second Noether theorem.
Given the fact that $\hat{R}^\dagger=\hat{L}$ for Lagrangian systems, the sequence of the gauge symmetries and identities (\ref{Gauge-for-Gauge-Sequence}) coincides with the conjugate sequence (\ref{Gauge-for-Gauge-Sequence-Conjugate}) in this case:
\begin{equation}\label{Gauge-for-Gauge-Sequence-Lagrangian}
   0\leftarrow \mathcal{M}^*_{{k}_{max}} \leftarrow\cdots\stackrel{\hat{R}^\dagger_{(\bar{k})}}{\longleftarrow}\cdots\leftarrow
{\mathcal{M}^*}_{1}\stackrel{\hat{R}^\dagger}{\longleftarrow}{\mathcal{M}^*}\stackrel{\hat{T}}{\longleftarrow}\mathcal{M}\stackrel{\hat{R}}
   {\longleftarrow}\mathcal{M}_{1}{\leftarrow} \cdots \stackrel{\hat{R}_{(k)}}{\longleftarrow}
    \cdots{\leftarrow}\mathcal{M}_{k_{max}}\leftarrow 0\,.
\end{equation}
This sequence can be split in two segments sharing the central element $\hat{T}:\mathcal{M}\mapsto\mathcal{M}^*$ which corresponds to the map of the fields into the left hand sides of the Lagrangian equations (\ref{E-L-Eqs}). The right segment includes the gauge symmetries of the equations,  and the sequence of symmetries for symmetries. This sub-sequence ends up with the map $\hat{T}:\mathcal{M}\mapsto\mathcal{M}^*$. The left segment of the sequence begins with the same map, and it continues to the left with gauge identities between the left hand sides of the equations, the identities of identities, etc.    
These two sub-sequences mirror each other in Lagrangian case because they are connected by Hermitian conjugation. That is why in the theory of Lagrangian gauge systems \cite{HT94}, only one sub-sequences matters as another sub-sequence is just the conjugation of the first one. For non-Lagrangian equations, there is no pairing between gauge symmetries and gauge identities. And these two sub-sequences of sequence (\ref{Gauge-for-Gauge-Sequence})  are not conjugate to each other.

\subsection{Coefficient of freedom in terms of the orders of gauge transformations and identities  }

The above orders of gauge transformations and gauge symmetries are used in the following formula for the physical degree of freedom. This formula has been proposed in~\cite{KLS2013} as a conjecture.  Suppose that the system~(\ref{EoMs}) is {\em homogeneous}, that is, in each $a$-th equation $T_a$(for $ a=1, \dots, n$), each term has the same order $k_a$, 
$\hat{T}_{ai}=
\sum_{\mu_1,\dots, \mu_{k_a -\theta_i}} \partial_{\mu_1} \dots \partial_{\mu_{k_a-\theta_i}}
T_{ai}^{\mu_1\dots\mu_{k_a-\theta_i}}.  
$
In the case of standard grading, this means that the linear operator $\hat T$ is defined by a homogeneous differential polynomial matrix. 
Then the linear operators $\hat{R}_{(k)}$ and $\hat{L}_{(k)}$ can be chosen homogeneous in the following sense: each term  in the differential polynomial $\hat{R}{}_{(k)\alpha_k}^{\alpha_{k-1}}$ (respectively, $\hat{L}_{(k)}{}^{A_{k-1}}_{A_k}$) has order $r{}_{(k)\alpha_k}^{\alpha_{k-1}}$ (respectively,
${l}_{(k)}{}^{A_{k-1}}_{A_k}$).

\begin{theorem}
\label{th:main_homogeneous}
Suppose that the system~(\ref{EoMs}) is homogeneous.
In the notation above, its physical degree of freedom is equal to the sum
	$${\mathcal N} = \sum_{a} k_a -
    \sum_i \theta_i -
    \sum_{k= 0}^{\Bar {k}_{max}} (-1)^k \sum_{A_k} l^{(k)}_{A_k} - 
 \sum_{k= 0}^{{k}_{max}} (-1)^k \sum_{\alpha_k}  r_{\alpha_k}^{(k)}
	. 
	$$

If the grading is standard, the formula have the form~\cite{KLS2013}:
	$${\mathcal N} = \sum_{a} k_a-\sum_{k= 0}^{\Bar {k}_{max}} (-1)^k \sum_{A_k} l^{(k)}_{A_k} - 
 \sum_{k= 0}^{{k}_{max}} (-1)^k \sum_{\alpha_k}  r_{\alpha_k}^{(k)}
	. 
	$$  
\end{theorem}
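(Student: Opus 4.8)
The plan is to evaluate the expression $\mathcal N=e(\ext^1_P(W,P),d-1)$ supplied by Theorem~\ref{th: deg_free_via_Ext}, feeding into it the two graded free resolutions that the gauge data of a homogeneous system provide, and then to read the multiplicity off the numerators of the relevant Hilbert series. I would regard the Fourier transform of the sequence~(\ref{Gauge-for-Gauge-Sequence}) as a single cochain complex $K^\bullet$ of graded free $P$-modules: the fields sit in degree $0$, the equations in degree $1$, the stage-$k$ gauge identities in degree $2+k$, and the stage-$k$ gauge symmetries in degree $-1-k$, with differentials $\hat R$, $\hat T$, $\hat L$, etc. Homogeneity is precisely what allows me to choose grading shifts making every differential homogeneous of degree $0$, namely $K^0=\bigoplus_i P(\theta_i)$, $K^1=\bigoplus_a P(k_a)$, $K^{2+k}=\bigoplus_{A_k}P(l^{(k)}_{A_k})$, and --- crucially with the opposite sign, because these arrows point toward $K^0$ rather than away from it --- $K^{-1-k}=\bigoplus_{\alpha_k}P(-r^{(k)}_{\alpha_k})$. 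This sign asymmetry between the two arms is what ultimately produces the minus sign in front of the symmetry sum.

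Next I would extract the two resolutions hidden in $K^\bullet$. Because the stage-$k$ gauge symmetries form a complete generating set of the $k$-th syzygies, the complex is exact at $K^0$ and at every $K^{-k}$, so $\cdots\to K^{-1}\xrightarrow{\hat R}K^0\xrightarrow{\hat T}K^1\twoheadrightarrow W$ is a graded free resolution of $W=\Coker\hat{T}$. Dually, applying $\Hom_P(-,P)$ to $\hat T$ identifies $\ext^0_P(W,P)=\Hom_P(W,P)$ with $\ker(\hat{T}^{*})$, which is exactly the module of gauge identities; completeness of the identity tower then makes the $P$-duals of the left arm, $\cdots\to (K^3)^{*}\to (K^2)^{*}\xrightarrow{\hat{L}^{*}}\Hom_P(W,P)$, a graded free resolution of $\ext^0_P(W,P)$. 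Writing $H_M(t)=\sum_j(\dim M_j)t^j$ and taking alternating sums of Hilbert series of the free terms, the numerator $\mathcal B(t):=(1-t)^dH_{\ext^0_P(W,P)}(t)$ and the numerator $\mathcal C(t):=(1-t)^d\sum_s(-1)^sH_{\ext^s_P(W,P)}(t)$ of the Euler characteristic of $\Hom_P(-,P)$ applied to the resolution of $W$ come out as
\[
\mathcal B(t)=\sum_{k}(-1)^k\sum_{A_k}t^{\,l^{(k)}_{A_k}},\qquad
\mathcal C(t)=\sum_a t^{\,k_a}-\sum_i t^{\,\theta_i}+\sum_{k}(-1)^k\sum_{\alpha_k}t^{-r^{(k)}_{\alpha_k}},
\]
the exponents $-r^{(k)}_{\alpha_k}$ being the imprint of the shift $K^{-1-k}=\bigoplus P(-r^{(k)}_{\alpha_k})$.

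It remains to isolate the top-dimensional multiplicity. By Proposition~\ref{prop: groth} one has $\dim\ext^s_P(W,P)\le d-s$, so $\ext^1$ is the unique summand of the dualized complex that can attain dimension $d-1$, while every $\ext^s$ with $s\ge2$ contributes only strictly lower-order poles at $t=1$. Since $\sum_s(-1)^sH_{\ext^s}=H_{\ext^0}-H_{\ext^1}+(\text{lower-dimensional terms})$, multiplying by $(1-t)^{d-1}$ and letting $t\to1$ gives
\[
e(\ext^1_P(W,P),d-1)=\lim_{t\to1}(1-t)^{d-1}\bigl(H_{\ext^0}(t)-\textstyle\sum_s(-1)^sH_{\ext^s}(t)\bigr)=\lim_{t\to1}\frac{\mathcal B(t)-\mathcal C(t)}{1-t}=\mathcal C'(1)-\mathcal B'(1),
\]
the last step being l'Hopital's rule together with $\mathcal B(1)=\mathcal C(1)=e(\Hom_P(W,P),d)=\rk W$ (the higher $\ext^s$ cannot reach dimension $d$). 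Differentiating the two Laurent polynomials at $t=1$ yields $\mathcal C'(1)=\sum_a k_a-\sum_i\theta_i-\sum_k(-1)^k\sum_{\alpha_k}r^{(k)}_{\alpha_k}$ and $\mathcal B'(1)=\sum_k(-1)^k\sum_{A_k}l^{(k)}_{A_k}$, and their difference is exactly the claimed formula for $\mathcal N$; the standard-grading version is the case $\theta_i\equiv0$.

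The main obstacle I anticipate is not the final residue-style computation but the justification of the two resolution claims in the graded category: that the stage-$k$ generators genuinely generate the $k$-th syzygy module, so that each arm is a bona fide free resolution rather than a mere complex --- recall that the paper itself warns that the undualized left arm need not be exact --- and that homogeneity forces the displayed shifts to be globally consistent, the delicate point being the $\theta_i$-bookkeeping on the symmetry side, where a term-by-term homogeneity of each generator is needed for the shift of $K^{-1-k}$ to be well defined independently of $i$. The auxiliary facts $\dim\ext^s\le d-s$ and $\mathcal B(1)=\mathcal C(1)$ I would import from Proposition~\ref{prop: groth}; once they are secured the extraction of $\mathcal C'(1)-\mathcal B'(1)$ is routine. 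Finally, since the resulting expression is built symmetrically from the two arms, which are interchanged (up to the harmless automorphism $\partial\mapsto-\partial$) by Hermitian conjugation, the equality of $\mathcal N$ for a system and its conjugate drops out as a byproduct.
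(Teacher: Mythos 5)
Your proposal is correct and follows essentially the same route as the paper: you start from ${\mathcal N}=e(\ext^1_P(W,P),d-1)$ (Theorem~\ref{th: deg_free_via_Ext}), build graded free resolutions from the gauge-symmetry and gauge-identity towers, use the bound $\dim\ext^s_P(W,P)\le d-s$ from Proposition~\ref{prop: groth} to suppress the higher Ext contributions, and extract the answer by differentiating Hilbert-series numerators at $1$, exactly as in Theorem~\ref{th:deg_free n graded 2-sided} and Corollary~\ref{cor:main_homogeneous_general}. The only differences are cosmetic --- you keep the two arms as separate series $\mathcal B,\mathcal C$ (resolving $\Hom_P(W,P)=\ker T^*$, a truncation of the paper's resolution of $V=\Coker T^*$) and apply l'H\^opital to $(\mathcal B-\mathcal C)/(1-t)$, while the paper glues both arms into the single two-sided complex ${\bf F}$ and computes $-Q_{\bf F}'(1)$; even the $\theta_i$-shift bookkeeping you flag as delicate is handled (with the same sign conventions) in the paper's Betti-number correspondence.
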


This theorem will be proved later, see Corollaries~\ref{cor:main_homogeneous_general} and~\ref{cor:main_homogeneous}.
Moreover, this formula holds for non-homogeneous systems under mild assumptions  ({\em doubly weakly involutive} systems), see Proposition~\ref{prop:Hilbert_pols_for_complexes}.

\subsection{BRST complex}

\label{subs: BRST}
The sequence (\ref{Gauge-for-Gauge-Sequence}) corresponds to the BRST complex related
to the system of equations (\ref{EoMs}). Construction of this complex begins with replacing the sequence of the target spaces $\Bar{M}_{\Bar{k}}, \, M_k, \, k=1,\ldots , K, \, \Bar{k}=1,\ldots , \Bar{K} $ by $\ZZ$ and ${\ZZ}_2$ graded linear spaces $\Bar{V}_{\Bar{k}} ,\, V_k$ of the same dimensions.  ${\ZZ}$-grading is known as a ghost number which is assigned to every linear coordinate. 

Another ${\ZZ}$-grading  is induced by the differential order. Recall that we have assigned differenatial order $\theta_i$ to each field variable $\phi^i$. This assignment (reflecting the dimensions of the field variables) uniquely defines the differential order values for further gauge structure by~(\ref{r-alpha}, \ref{r-alpha1}, 
\ref{r-alpha-k}, \ref{L-order0}, \ref{L-orderk}).

The ghost number distribution and the differential orders of the variables are arranged in the table below.
\begin{table}[ht]
\caption{}
\begin{center}
\begin{tabular}{| l | c | c | c | c | c | c | c | c |}
\hline
Graded target space &$ \overline{V}_{\overline{k}_{max}}$  & $ \cdots$ & $\phantom{'} \overline{V}_1$ \phantom{'}&$ \overline{V}_0$ & $V_0$  & $V_1$ & $\cdots$ & $ V_{K}$ \\ 
\hline
Linear coordinates 
& $\overline{\phi}{}_{A_{\overline{k}_{max}-1}}$ &    
$ \cdots$ &  
$\overline{\phi}{}_{A_0}$  & 
$\overline{\phi}{}_a$  
& $\phi^i$
&$\phantom{{}^{(1)}}$$\phi^{\alpha_1}$\phantom{@} & 
$\cdots$ & $\phi^{\alpha_{k}}$\phantom{@}\\ 
\hline
Ghost number & $-\bar{k}_{max}-1$ & $\cdots$& $-2$ & $-\,1$ & $0$ & $1$& $\cdots$& $k_{max}$\\
\hline
Differential order & $ l^{(\bar{k}_{max})-1}_{a_{\bar{k}_{max}-1}}$ & $\cdots$& $ l^{(0)}_{A}$ & $k_a$ & $\theta_i$ & $r_{\alpha}$& $\cdots$& $r_{\alpha_{k_{max-1}}}^{(k_{max}-1)}$\\\hline
\end{tabular}
\end{center}
\end{table}

We adopt the usual notation of general theory of gauge field theories \cite{HT94} where the ghost number is denoted by the symbol $\gh$, while the Grassmann parity is denoted by $\varepsilon$. Grassmann parity is defined by the ghost number modulo 2. For example,
\begin{equation}\label{gh}
    \gh(\phi^{\alpha_k})=k\, ,\quad \varepsilon(\phi^{\alpha_k})=\frac{1}{2}\left(1+ (-1)^{k+1}\right) \, , \qquad \gh(\phi^i)=\varepsilon(\phi^i)=0\, , \quad \gh(\Bar{\phi}_a)=-1\,, \quad \varepsilon(\Bar{\phi}_a)=1\, .
\end{equation}
The original target space $M$ is identified with $V_0$. In contrast, the graded target space $\Bar{V}_0$ is defined as the original conjugate target space $\Bar{M}$ with shifted Grassman parity and ghost number $-1$. The ghost number of a product is defined as the sum of the factors' ghost numbers, so any monomial of the graded coordinates has a certain ghost number. 

Given the graded target spaces arranged in the table, we can consider the graded fields 
that map the space-time $X$ to the corresponding vector space. In this way, we arrive at the spaces of graded fields $\mathcal{V}_k\ni\phi^{\alpha_k}:X\mapsto V_k,\,\Bar{\mathcal{V}}_{\bar{k}}\ni\bar{\phi}_{a_{\bar{k}}}:X\mapsto \bar{V}_{\bar{k}}$.
The fields with non-negative ghost numbers are termed just as fields, while the fields with negative ghost numbers are termed anti-fields. The fields with strictly positive ghost numbers are termed as ghosts. For Lagrangian system the fields and anti-fields belong to the dual spaces \cite{HT94}, while for non-Lagrangian equations, there are no pairing between fields and anti-fields \cite{KazLS:2005}.

The field-anti-field configuration space of the BRST  theory $\mathcal{W}$ is a 
direct sum of all the spaces of graded fields: 
$\mathcal{W}=\mathcal{V}_K\oplus \cdots \oplus\mathcal{V}_0 \oplus \Bar{\mathcal{V}}_0 \oplus \cdots \oplus\Bar{\mathcal{V}}_{\Bar{K}}$. 

The space-time coordinates have zero ghost number, so the ghost number of the derivative with respect to any  $x^\mu$ does not change the ghost number of any field or anti-field.

 The BRST differential $Q$ is the Grassmann-odd vector field  with ghost number 1, which acts on the field-antifield configuration space $\mathcal{W}$,
 \begin{equation}\label{Q-grad}
     Q:\, \mathcal{W}\mapsto \mathcal{W}, \quad \gh (Q f) =\gh(f)+1, \quad \varepsilon (Q f) = 
     \varepsilon ( f) + 1 \, \mod 2\, .
 \end{equation}
 The action of the BRST differential on the field-anti-field variables reads
\begin{equation}\label{Q}
    Q\phi^i=\hat{R}^i_{\alpha_1}\phi^{\alpha_1}, \quad   Q\phi^{\alpha_k}=\hat{R}^{\alpha_k}_{\alpha_{k+1}}\phi^{\alpha_{k+1}},\qquad Q\bar{\phi}_a=\hat{T}_{ai}\phi^i\, , \quad Q{\bar{\phi}}_{A_{\bar{k}+1}}= \hat{L}{}^{A_{\bar{k}}}_{A_{\bar{k}+1}}{\bar{\phi}}_{A_{\bar{k}}}.
     \end{equation}

Given the definitions of generators of gauge symmetries (\ref{r-alpha}), (\ref{r-alpha-k}) and gauge identities (\ref{L-A}), (\ref{L-k}), one can see that $Q$ squares to zero.
This equips the  field-anti-field configuration space 
$\mathcal{W}$ with the structure of complex. 
   The BRST complex is the free graded commutative algebra generated by the differential graded module $\mathcal{W}$.   
 The action of $Q$ is extended from the fields and anti-fields to polynomials of their derivatives by the Leibnitz rule.

The BRST cohomology groups are naturally graded by the ghost number. These groups admit natural physical interpretation. In particular, zero ghost number cohomology group elements are interpreted as equivalence classes of gauge invariants of the original system. Interpretation of some other important BRST cohomology groups with the other ghost numbers can be found in the review \cite{BarnichBrandtHenneaux2000} for Lagrangian equations. For study of the BRST cohomology groups
for general PDE's we refer to the article \cite{KapLS2011}.

In general, the BRST differential 
is not homogeneous with respect to the differential order. 
Nevertheless, in the case of homogeneous system the BRST differential $Q$ is indeed homogeneous with zero differential order. We use this grading to define Euler characteristic of BRST complex and to calculate the degree of freedom using this Euler characteristic, see Subsection~\ref{subs: BRST_DoF}.

\section{Linear gauge systems and polynomial modules}

\label{sec:gauge_and _polynom}

In this section, we look over gauge symmetries and identities through a looking glass of commutative algebra.
One can interpret the polynomial matrices, that define gauge symmetries and identities as maps of polynomial modules. This leads us to a two-sided complex of free polynomial modules~(\ref{eq:double_resolution}). 
We use this complex to prove a formula for the degree of freedom in terms of homological commutative algebra in Theorem~\ref{th: deg_free_via_Ext}.  In particular, it follows that the degree of freedom is a well-defined nonnegative integer for each linear system.  

The most important results are obtained in the case of homogeneous system. In this case we prove formulae for the degree of freedom in terms of differential orders of the gauge structure elements
(Corollaries~\ref{cor:main_homogeneous_general} and \ref{cor:main_homogeneous}) and in terms of the BRST complex Euler characteristic (Proposition~\ref{prop: DoF via BRST}). 

The necessary commutative algebra  definitions and results are collected in Appendix~\ref{app: algebra}.

\subsection{Algebraic interpretation of gauge symmetries and identities}


Consider the two-sided complex~(\ref{Gauge-for-Gauge-Sequence}).  
The linear maps ${\hat{L}_{(\bar{k})}}, \hat{T},\hat{R}, \hat{R}_{(k)}$ in this complex are interpreted as matrices.
The entries of the matrices are differential operators with constant coefficients, that is, 
polynomials of $d$ variables $\partial_0, \dots , \partial_{d-1} $. 

Let $P = \CC [\partial_0, \dots , \partial_{d-1}]$ denote the ring of differential operator polynomials with constant coefficients. 
It is the ring of complex polynomials 
of the variables $ \partial_0, \dots , \partial_{d-1}$. Since the maps  
${\hat{L}_{(\bar{k})}}, \hat{T},\hat{R}, \hat{R}_{(k)}$ are defined by matrices with entries from $P$. For example,
the map $\widehat T$ (and its matrix) defines 
a polynomial map $T^P: P^m \to P^n$. 

Moreover, one can define the gauge symmetries and gauge identities in terms of this polynomial map $T^P$.  By definition, gauge symmetries $R^i$ are defined by  the column vectors $R^i \in P^m$ such that  
$\hat{T}_{ai}\hat{R}{}^i\epsilon=0,\,\,\forall\epsilon\in\mathcal{E}$, cf.~(\ref{G-set}). This means 
that all gauge symmetries form a submodule 
$\Omega^1$ in $P^m$ which consist of all elements $r$ such that $T^P r =0$, that is,
$\Omega^1 = \ker T^P$. The ring $P$ is Noetherian, that is, any submodule of $P^m$ admits a finite generating set (by Hilbert Basis Theorem). Then any finite generating set $\{ R^i_\alpha|\alpha=1, \dots, m_1 \}$ for some $m_1\ge 0$
defines the generating set  $\{ \phi^i_\alpha = R^i_\alpha  \epsilon^\alpha |\alpha=1, \dots, m_1 \}$ for the gauge symmetries 
( as before , here $\epsilon^\alpha \in C^{\infty} (X)$ is  a generic function, cf.~(\ref{triv-set})). Vice versa, a generating set for the gauge symmetries as before defines a generating set for the submodule $\Omega^1$. Thus, the matrix $
\hat R $ is defined as a matrix with entries in $P$ such that its columns form a generating set for the submodule $\Omega^1 = \ker T^P$.

By the way, the matrix $\hat R$ defines a linear operator $R^P: P^{m_1} \to P^m$. 
In the same way as before, the matrix $\hat R_{(1)}$ can be defined as any matrix whose columns form a generating set of the submodule $\Omega^2 = \ker R^P$ in $P^{m_1}$. 
Inductively, one can define all matrices $\hat R_{(k)}$ for all positive $k$, that is, all matrices in the right-hand half of the sequence~(\ref{Gauge-for-Gauge-Sequence}).  

By construction, we get an exact sequence of polynomial modules
\begin{equation}
\label{eq:resolution_of_N}
   {P}^{n}\stackrel{T}{\longleftarrow}{P}^{m}\stackrel{R}
   {\longleftarrow}{P}^{m_1}{\leftarrow} \cdots \stackrel{{R}_{(k)}}{\longleftarrow}    \cdots
\end{equation}
This is a free resolution of the polynomial module $W = \Coker T$. By Hilbert Syzygy Theorem, one can choose the $d$-th map in the resolution such that the next term is zero. It follows that for some $k_{max} \le d-1$ the resolution ends, that is, the resolution has the form
\begin{equation}\label{eq:right_resolution}
   {P}^{n}\stackrel{T}{\longleftarrow}{P}^{m}\stackrel{R}
   {\longleftarrow}{P}^{m_1}{\leftarrow} \cdots \stackrel{{R}_{(k)}}{\longleftarrow}
   \cdots
   {\leftarrow}{P}^{m_{k_{max}}}
   \leftarrow 0\,.
\end{equation}

Now, consider the gauge identities between the field equations. By~(\ref{L-A}), a row vector $\hat{L}_A \in P^n$ form a gauge identity if and only if $\hat{L}{}^a_A \hat{T}_{ai}\phi^i\equiv 0$, that is,
$\hat{L}_A $ belongs to the kernel of the dual operator $T^*$ to the operator defined by the matrix $\hat T$ (in the dual basis, this dual operator is defined by the transpose matrix of $\hat T$).
It follows that the generating set of gauge identities is any set of gauge identities 
which generates the submodule ${\Omega^1}' =\ker T^* $ in $P^n$. Similarly to the case of gauge symmetries, this set can be chosen finite by Hilbert Basis Theorem. Then one can define the matrix $\hat L^*$ as an arbitrary matrix such that its column form a generating set for the module ${\Omega^1}'$. 

The matrix $\hat L^*$ defines a linear operator 
$L^*: P^l\to P^{n}$.
Then one can define the matrix $\hat L_{(1)}^*$ as the one whose columns form a generating set for the submodule ${\Omega^2}' = \ker \hat L^*$ in $P^l$. Inductively, one define 
$\hat L_{(k)}^*$ as the one whose columns form a generating set for the submodule  
${\Omega^k}' = \ker \hat L_{(k-1)}^*$ in $P^{k-1}$. Then, one gets a free resolution 
\begin{equation*}
   {P}^{m}\stackrel{T^*}{\longleftarrow}{P}^{n}\stackrel{L^*}
   {\longleftarrow}{P}^{l}{\leftarrow} \cdots \stackrel{{L}_{(k)}^*}{\longleftarrow}    \cdots
\end{equation*}
of the $P$-module $V  = \Coker T^*$. 
Similarly to the above, the Hilbert Syzygy Theorem guarantees that for some $l_{\Bar{k}_{max}} \le d-2$, the corresponding matrix can be chosen in such a way that the next term vanishes. Thus, 
we get a finite resolution of the module 
$V$:
\begin{equation}
\label{eq:left_resolution}
   {P}^{m}\stackrel{T^*}{\longleftarrow}{P}^{n}\stackrel{L^*}
   {\longleftarrow}{P}^{l}{\longleftarrow} \cdots \stackrel{{L}_{(k)}^*}{\longleftarrow}    \cdots
   {\longleftarrow} {P}^{l_{k_{max}}}
   \longleftarrow 0\,.
\end{equation}
Take the usual dual of the exact complex~(\ref{eq:left_resolution}), that is, consider the homomorphisms  to $P$ from each term. We obtain the following complex, which is not necessary exact:
\begin{equation}
\label{eq:dual_left_resolution}
0 
{\longleftarrow}{P}^{l_{k_{max}}}
{\longleftarrow} \cdots
\stackrel{{L}_{(k)}}{\longleftarrow}  
\cdots
 {\longleftarrow} {P}^{l}
 \stackrel{L}{\longleftarrow}
 {P}^{n}
 \stackrel{T}{\longleftarrow}
 {P}^{m}.
\end{equation}
We can unite this complex with the resolution~(\ref{eq:right_resolution}). 
Then we we obtain a complex which is a polynomial version of the complex (\ref{Gauge-for-Gauge-Sequence}):
\begin{equation}
\label{eq:double_resolution}
{\bf F}^*:
0 
{\longleftarrow}{P}^{l_{k_{max}}}
{\longleftarrow} \cdots
\stackrel{{L}_{(k)}}{\longleftarrow}  
\cdots
 {\longleftarrow} {P}^{l}
 \stackrel{L}{\longleftarrow}
 {P}^{n}
 \stackrel{T}{\longleftarrow}
 {P}^{m}
   \stackrel{R}
   {\longleftarrow}{P}^{m_1}{\leftarrow} \cdots \stackrel{{R}_{(k)}}{\longleftarrow}
   \cdots
   {\leftarrow}{P}^{m_{k_{max}}}
   \leftarrow 0\,.
\end{equation}
Recall that this complex is exact in all terms toward the right from $T$. (We will essentially use its dual complex ${\bf F}$, see~(\ref{eq: 2-sided_resolution}); this explains the star in the notation).

Note that the matrices of the differentials in this complex are transposes to the ones defining the differential $Q$ on the generators of the BRST complex~(\ref{Q}). The Fourier transform  
of the complex~(\ref{eq:double_resolution})
is isomorphic to the configuration space $\mathcal{W}$ of the BRST complex defined above by~\cite[Proposition~2]{AN}. Thus, the whole BRST complex is isomorphic to
$$ 
\bigwedge \overline{\mathcal F} \left( {\bf F}^*\right),
$$
where $\overline{\mathcal F}$ denotes the inverse Fourier transform and $\bigwedge $ 
is the notation for the free differential graded commutative algebra. Vice versa, in terms of the configuration space $\mathcal{W}$ of the BRST  complex, the complex ${\bf F}^*$ is isomorphic to the Fourier transform ${\mathcal F} \left( \mathcal{W} \right)$.

\subsection{Degree of freedom and polynomial modules}

\label{subs:N=e}

In this subsection, we prove the following useful formula 
for the degree of freedom in terms of polynomial module $N$ 
defined above. Note that there is another method to calculate the degree of freedom in terms of the involutive closure of the system, see~\cite[Ch.~8]{Seiler}.

Let 
$W$
denotes the the cokernel of the polynomial map $P^m\to P^n$ defined by the matrix $T$. 

\begin{theorem}
\label{th: deg_free_via_Ext}
The  number of physical degree of freedom is finite and is uniquely determined in terms of  the module $W$ 
by the formula
$$
{\mathcal N} = 
    e(U, d-1) \mbox{, where } U =\ext^1_P(W,P). 
$$

As a corollary, the number of physical degree of freedom is a well-defined nonnegative integer. 
\end{theorem}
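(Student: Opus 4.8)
The plan is to identify the space of solutions modulo pure gauge with the graded Matlis dual of $U=\ext^1_P(W,P)$ and then to read off $\mathcal N$ as its multiplicity. First I would set up Macaulay duality: let $E$ be the inverse system $\CC[x_0,\dots,x_{d-1}]$ (or its completion $\CC[[x_0,\dots,x_{d-1}]]$ for formal solutions), on which $P$ acts by letting $\partial_\mu$ differentiate. With the grading reversed, $E$ is the graded injective hull of the residue field, so $\Hom_P(-,E)$ is an exact contravariant functor. The essential computation is that under this duality the PDE operator $\hat T\colon E^m\to E^n$ is exactly $\Hom_P(T^*,E)$, where $T^*\colon P^n\to P^m$ is the transpose of the matrix map $T$; likewise $\hat R=\Hom_P(R^*,E)$ and similarly for the higher $\hat R_{(k)}$. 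This is the algebraic content of the Fourier transform used above, and it is checked directly on generators. (Here I would also record, as is standard for constant-coefficient systems, that smooth and formal solutions have the same truncations $p_N$, so the count is unchanged if one works with the formal solution space $\Sigma=\ker\hat T$.)

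Given this, I would apply $\Hom_P(-,P)$ to the free resolution $\cdots\to P^{m_1}\xrightarrow{R}P^m\xrightarrow{T}P^n\to W\to 0$ of $W$, obtaining the cochain complex $0\to P^n\xrightarrow{T^*}P^m\xrightarrow{R^*}P^{m_1}\to\cdots$ whose cohomology computes $\ext^i_P(W,P)$. Dualizing this complex into $E$ returns precisely the function-side complex $\cdots\to E^{m_1}\xrightarrow{\hat R}E^m\xrightarrow{\hat T}E^n\to 0$; since $E$ is injective, $\Hom_P(-,E)$ commutes with cohomology, so its homology in degree $i$ is $\ext^i_P(W,P)^\vee$. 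In homological degree $1$ this is $\ker\hat T/\im\hat R$, that is, $\Sigma/\Sigma'\cong U^\vee$ with $U=\ext^1_P(W,P)$: the solution space modulo pure gauge is the Matlis dual of $U$. One also records $\Sigma=\ker\hat T=V^\vee$ with $V=\Coker T^*$, and that $U=\ker R^*/\im T^*$ is a subquotient of $P^m$.

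Next I would pass from modules to Hilbert functions. The Einstein count is governed by the order filtration of $E^m$ by Taylor degree; giving $\Sigma'\subseteq\Sigma$ the induced filtration and $\Sigma/\Sigma'$ the quotient filtration makes the sequence of associated gradeds $0\to\gr\Sigma'\to\gr\Sigma\to\gr(\Sigma/\Sigma')\to 0$ exact, so that $h_\Sigma(N)-h_{\Sigma'}(N)=\dim_{\CC}\gr_N(\Sigma/\Sigma')$ exactly, with no correction terms. It then remains to compare this with $U$. Because the increasing degree filtration $P_{\le j}$ on $P$ is the annihilator of the decreasing order filtration on $E$, Macaulay duality is compatible with associated gradeds: $\gr(U^\vee)\cong(\gr U)^\vee$ as graded $P$-modules, where $\gr U$ is the symbol (leading-form) module of $U$ for a good degree filtration. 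Hence $h_\Sigma(N)-h_{\Sigma'}(N)$ equals the Hilbert function of the graded module $\gr U$. I expect this filtered compatibility to be the main obstacle: in the homogeneous case the filtration is a grading and the identity is immediate, but in general one must verify that $\gr$ of the subquotient $U^\vee\subseteq E^m$ is dual to the symbol module of $U$, invoking the Artin--Rees lemma to ensure that the dimension and multiplicity extracted are independent of the chosen good filtration.

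Finally I would extract the multiplicity. By the Grothendieck bound (Proposition~\ref{prop: groth}), $\dim U=\dim\ext^1_P(W,P)\le d-1$, so $e(U,d-1)$ is well defined and equals the normalized leading coefficient of the Hilbert polynomial of $\gr U$: one has $\dim_{\CC}\gr_N(\Sigma/\Sigma')=\tfrac{e(U,d-1)}{(d-2)!}N^{d-2}+O(N^{d-3})$ when $\dim U=d-1$, and $O(N^{d-3})$ otherwise, where $e(U,d-1)=0$. Substituting $h_\Sigma(N)-h_{\Sigma'}(N)$ into $\mathcal N=\tfrac1{d-1}\lim_{N\to\infty}N\,\frac{h_\Sigma(N)-h_{\Sigma'}(N)}{\binom{N+d-1}{N}}$ and using $\binom{N+d-1}{N}\sim N^{d-1}/(d-1)!$ yields $\mathcal N=e(U,d-1)$. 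Since a multiplicity is a finite nonnegative integer, this simultaneously proves the formula and the corollary that $\mathcal N$ is a well-defined nonnegative integer.
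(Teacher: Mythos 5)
Your proposal is correct, and it reaches the central identifications by a genuinely different, more structural route than the paper. The paper proceeds by two elementary counting statements (Propositions~\ref{prop:h(M)} and~\ref{prop:phys_solutions}): it views the differential consequences of order at most $N$ as linear constraints on the Taylor coefficients $\phi_\mu$ and counts ranks, obtaining $\tilde h_{\Sigma}(N)=\tilde h_V(N)$ and $\tilde h_{\mathcal K}(N)-\tilde h_{\mathcal K'}(N)=\tilde h_U(N)$ directly; only afterwards does it identify $U=\ker R^*/\im T^*$ with $\ext^1_P(W,P)$ by dualizing the free resolution into $P$ (Proposition~\ref{prop:U = ext}). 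You instead dualize into the injective module $E$ (Macaulay/Matlis duality), so that exactness of $\Hom_P(-,E)$ yields in one stroke the module isomorphisms $\Sigma\cong\Hom_P(V,E)$ and $\Sigma/\Sigma'\cong\Hom_P(\ext^1_P(W,P),E)$, not merely equalities of Hilbert functions; your endgame (the bound $\dim U\le d-1$ from Proposition~\ref{prop: groth} plus the asymptotic comparison with $\binom{N+d-1}{N}$) coincides with the paper's Lemma~\ref{lem: deg_free} and its corollary. What each approach buys: the paper's count is self-contained linear algebra; your duality makes manifest that $\Sigma/\Sigma'$ depends on $T$ alone (independence of the choice of $R$ and of the generating set of gauge symmetries) and connects directly to the Fourier-transform picture of the BRST complex that the paper invokes elsewhere. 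The one step you rightly flag as delicate --- compatibility of the duality with filtrations, $\gr(U^{\vee})\cong(\gr U)^{\vee}$ --- does not in fact require Artin--Rees or good-filtration machinery: since $E^m\cong\Hom_{\CC}(P^m,\CC)$ as a $P$-module, the restriction map to $F_NP^m$ carries the annihilator of any submodule $Q\subset P^m$ onto the annihilator of $F_NQ$ (a functional on $F_NP^m$ killing $F_NQ$ extends to one on $P^m$ killing $Q$), which gives the exact, not merely asymptotic, equalities $\dim p_N(\Sigma)=\tilde h_V(N)$ and $\dim p_N(\Sigma)-\dim p_N(\Sigma')=\tilde h_U(N)$ for the filtrations induced from $P^m$; the independence of dimension and multiplicity from the choice of admissible filtration is then exactly Proposition~\ref{prop:dim_n_mult}. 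Finally, both you and the paper pass between smooth and formal solutions at the level of $N$-jets; your explicit remark that this reduction is harmless for constant-coefficient systems is, if anything, more careful than the source, which silently works with analytic germs in the proof of Proposition~\ref{prop:h(M)}.
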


The proof consists of the following  sequence of statements. 

We begin with the following two propositions which are essentially are proved in~\cite[Appendix A]{KLS2013}.

As before, suppose that 
$\Sigma$ denotes the set of all solutions of the system~(\ref{EoMs}). That is, $\Sigma$ is the kernel of the differential operator defined by the matrix $T$.
Let $T^* $ be the transpose matrix and $V$ the cokernel of the 
polynomial map defined by $T^*$, so that there is an exact sequence of (filtered) finitely generated $P$-modules
\begin{equation}
	P^n \stackrel{T^*}{\to} P^m 
	\to V \to 0. 
\end{equation}
Consider the filtration on $V$ induced by the standard filtration on $P^n$ here. 

\begin{prop}
	\label{prop:h(M)}
	In the notation above, the following equality of the Hilbert functions holds:
 $$\tilde h_V(N) = \tilde h_{\Sigma}(N)  \mbox{ for all } N\ge 0.
 $$
\end{prop}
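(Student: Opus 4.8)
The plan is to read the statement through the apolarity (Macaulay inverse system) duality between the operator ring $P=\CC[\partial_0,\dots,\partial_{d-1}]$ and the power series module on which it acts by differentiation. Set $S=\CC[x^0,\dots,x^{d-1}]$, so that its completion is $\CC[[x]]$, and consider the pairing $\langle p,s\rangle=\big(p(\partial)s\big)(0)$ between $P$ and $\CC[[x]]$. A one-variable check shows $\partial_\mu$ acts on $\CC[[x]]$ by $\partial/\partial x^\mu$, that $\CC[[x]]\cong\Hom_\CC(P,\CC)$ as $P$-modules, and that in each degree $\langle\,,\rangle$ restricts to a perfect pairing $P_d\times S_d\to\CC$ (hence $(P_{\le N})^*\cong S_{\le N}$). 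Throughout I read the tilded Hilbert functions cumulatively, $\tilde h_\Sigma(N)=\dim p_N(\Sigma)$ and $\tilde h_V(N)=\dim F_N V$, where $F_\bullet V$ is the filtration induced from the standard degree filtration of the free module $P^m$ surjecting onto $V$; the excerpt's $h$ is then the first difference of $\tilde h$.

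First I would dualize the presentation. Applying the exact functor $\Hom_\CC(-,\CC)$ to the right-exact sequence $P^n\xrightarrow{T^*}P^m\to V\to 0$ gives a left-exact sequence $0\to V^*\to (P^m)^*\to (P^n)^*$. Under the identifications $(P^m)^*\cong\CC[[x]]^m$ and $(P^n)^*\cong\CC[[x]]^n$, a short computation carried out one variable at a time shows that the dual of the transpose map $T^*$ is exactly the differential operator $\hat T$ acting on power series (the apolarity action turns multiplication by $\hat T_{ai}(\partial)$ into the operator $\hat T_{ai}(\partial)$ applied to $\CC[[x]]$, with the correct sign). Hence $V^*\cong\ker\big(\hat T:\CC[[x]]^m\to\CC[[x]]^n\big)=\Sigma_{\mathrm{fps}}$, the space of formal power series solutions of~(\ref{EoMs}). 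This is the algebraic heart of the statement: the full linear dual of $\Coker T^*$ is the formal solution space of $\hat T$.

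Next I would dualize the filtration. Since $F_N V$ is a finite-dimensional quotient of $(P_{\le N})^m$, the inclusion $F_N V\hookrightarrow V$ dualizes to a surjection $\Sigma_{\mathrm{fps}}=V^*\twoheadrightarrow (F_N V)^*$ whose kernel consists of the formal solutions $\phi$ annihilating $F_N V$, i.e. with $\langle\phi,p\rangle=0$ for all $p\in(P_{\le N})^m$. As the pairing $\langle\phi,p\rangle$ reads off the Taylor coefficients of $\phi$ of degree $\le N$, this kernel is precisely $\{\phi\in\Sigma_{\mathrm{fps}}:p_N\phi=0\}$, so $(F_N V)^*\cong\Sigma_{\mathrm{fps}}/\ker p_N\cong p_N(\Sigma_{\mathrm{fps}})$. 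Taking dimensions yields $\tilde h_V(N)=\dim F_N V=\dim p_N(\Sigma_{\mathrm{fps}})$.

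It then remains to identify $p_N(\Sigma_{\mathrm{fps}})$ with $p_N(\Sigma)$. One inclusion is elementary, since the Taylor series of a smooth solution is a formal solution. The reverse inclusion — that every finite jet of a formal solution is realized by an honest smooth solution of the constant-coefficient system — is the one genuinely analytic input, and I expect it to be the main obstacle: it is not formal, and it is exactly where one must use the structure of the solution sheaf. I would supply it through the Ehrenpreis--Palamodov fundamental principle (superpositions of exponential--polynomial solutions realize all finite jets), or simply invoke the corresponding argument of~\cite[Appendix~A]{KLS2013}. Granting this, $\tilde h_\Sigma(N)=\dim p_N(\Sigma)=\dim p_N(\Sigma_{\mathrm{fps}})=\tilde h_V(N)$ for all $N\ge 0$, which is the assertion.
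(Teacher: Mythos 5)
Your proposal is correct and is, in substance, the paper's own argument: the paper's count of ``Taylor-coefficient variables minus the linear constraints coming from $\im T^*$'' is precisely your Macaulay/apolarity pairing carried out degree by degree instead of functorially, so your identification $(F_N V)^* \cong p_N(\Sigma_{\mathrm{fps}})$ and the paper's equality $\tilde h_\Sigma(N) = \tilde h_{P^m}(N) - \tilde h_{\im T^*}(N) = \tilde h_V(N)$ are the same computation in different packaging. Both arguments also rest on the same non-formal analytic input --- that every jet compatible with the polynomial consequences (equivalently, every jet of a formal solution) is the jet of an honest smooth solution --- which you rightly isolate and delegate to the Ehrenpreis--Palamodov principle and \cite[Appendix~A]{KLS2013}, exactly the point at which the paper's terse proof silently relies on the same sources.
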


Note that  for $n=1$, this is shown in \cite[Lemma~4, Prop.~2]{khovanski_chulkov}.

\begin{proof}
For a function $\phi$ which is analytical in a neighbourhood of $x_0$, its derivative $\partial^\mu$ at $x=x_0$ for each index $\mu\in \Zpl^d$ is
 equal to the coefficient $\phi_{\mu}$ in the power series expansion~(\ref{eq:phi_Taylor_expansion}). 
 Consider the set ${\mathcal T}$ of all consequences of the system~(\ref{EoMs}) which are polynomials with constant coefficients. The left-hand sides of all the equations are in bijections with the elements of the polynomial module $\im T^* \subset P^n$. These left-hand sides 
   are linear combinations of the derivatives $\partial^\mu$ which are equal (at $x=x_0$) to the 
unknowns $\phi_\mu$. So, the system ${\mathcal T}$ is equivalent to the system of linear equations of the variables $\phi_\mu$. Let us denote 
the space $\mathcal{C}^\infty(X)$ of smooth functions on $X$ by $\EE$
The total number of such variables having degree $|\mu|\le N$  is $\tilde h_{\EE^n}(N) = \tilde h_{P^n} (N)  $.  Then $\tilde h_{\Sigma}(N)$
is equal to the maximal number of the variables of this degree which are linearly independent modulo the equations. The number of the equations which involve only the variables $\phi_\mu$ of degree at most $q$ is 
equal to $\tilde \tilde h_{\im T^*}(N) $. Thus,
$$
\tilde h_{\mathcal{K}} (N) = h_{\EE^n}(N) - \tilde h_{\im T^*}(N)  = 
\tilde h_{P^n}(q)  - \tilde h_{\im T^*}(N)  =  
\tilde h_{V}(N).
$$
\end{proof}

Consider the initial segments of the exact sequence~(\ref{eq:right_resolution}):
$$
   {P}^{n}\stackrel{T}{\longleftarrow}{P}^{m}\stackrel{R}
   {\longleftarrow}{P}^{m_1}
   .
   $$
   Its dual has the form
\begin{equation}
	\label{eq: C' to S'}
	P^n \stackrel{T^*}{\to} P^m \stackrel{R^*}{\to} P^{m_1}
 .
\end{equation}
Let $Q = \ker R^* \subset P^m$, so that the homology module in the central term here is the module $U = Q/\im {T^*}$. 
We consider the Hilbert series of this module 
for 
the filtration induced by the standard filtration of  $P^m$.

\begin{prop}
	\label{prop:phys_solutions}
 In the notation of Subsection~\ref{subs:deg_freedom_definition}, 
 there is an equality of Hilbert functions
 $\tilde h_U(N) = \tilde h_{\mathcal{K}}(N) - \tilde h_{\mathcal{K'}}(N)$ for all $N\ge 0$.
\end{prop}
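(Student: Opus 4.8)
The plan is to prove an image-analogue of Proposition~\ref{prop:h(M)} for the space $\mathcal{K}' = \Sigma'$ of trivial solutions, and then to combine it with Proposition~\ref{prop:h(M)} itself (for $\mathcal{K}=\Sigma$). Rewriting the latter through $V = \Coker T^* = P^m/\im T^*$, it reads $\tilde h_{\mathcal{K}}(N) = \tilde h_V(N) = \tilde h_{P^m}(N) - \tilde h_{\im T^*}(N)$. I would establish the parallel identity
$$\tilde h_{\mathcal{K}'}(N) = \tilde h_{P^m}(N) - \tilde h_Q(N), \qquad Q = \ker R^*.$$
Since the dual complex $P^n \to P^m \to P^{m_1}$ (with maps $T^*$ and $R^*$) satisfies $R^* T^* = 0$, we have $\im T^* \subseteq Q$, and the induced filtrations turn the short exact sequence $0 \to \im T^* \to Q \to U \to 0$ into $\tilde h_U = \tilde h_Q - \tilde h_{\im T^*}$. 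Subtracting the two Hilbert-function identities then gives
$$\tilde h_{\mathcal{K}}(N) - \tilde h_{\mathcal{K}'}(N) = \tilde h_Q(N) - \tilde h_{\im T^*}(N) = \tilde h_U(N),$$
which is the assertion.

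The core is the identity for $\tilde h_{\mathcal{K}'}$. Here I would reuse the perfect pairing, in each degree, between the polynomial module $P^m$ of constant-coefficient operators and the space of $m$-tuples of Taylor coefficients, $\langle g, \phi\rangle = \sum_i (g_i(\partial)\phi^i)|_{x_0}$, exactly as in the proof of Proposition~\ref{prop:h(M)}. A one-line computation shows that the gauge operator $\hat R$ acting on power series is the adjoint of the transpose matrix $R^*\colon P^m \to P^{m_1}$, i.e. $\langle g, \hat R\epsilon\rangle = \langle R^* g, \epsilon\rangle$. A trivial solution is any $\phi = \hat R\epsilon$, so $p_N(\mathcal{K}')$ is the image of the finite-dimensional map $\tilde R_N\colon \epsilon \mapsto p_N(\hat R\epsilon)$; as $\hat R$ is a differential operator of some finite order $r$, only the part of $\epsilon$ of degree $\le N+r$ contributes, so $\tilde R_N$ is well defined on the degree-$\le N+r$ truncation and $\dim p_N(\mathcal{K}') = \rk \tilde R_N$.

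The main obstacle is precisely this degree shift: $\hat R$ lowers polynomial degree while $R^*$ raises it, so $\dim p_N(\mathcal{K}')$ cannot be read off from a single graded piece. I would bypass this by passing to the transpose and using $\rk \tilde R_N = \rk \tilde R_N^*$. Unwinding the pairing identifies $\tilde R_N^*$ with the restriction $R^*|_{P^m_{\le N}}\colon P^m_{\le N} \to P^{m_1}_{\le N+r}$ (nothing is lost on the target, since $R^*$ raises degree by at most $r$), whose kernel is exactly $Q_{\le N} = Q \cap P^m_{\le N}$. Hence
$$\dim p_N(\mathcal{K}') = \dim P^m_{\le N} - \dim Q_{\le N} = \tilde h_{P^m}(N) - \tilde h_Q(N),$$
which is the required identity. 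As a check, the inclusion $p_N(\mathcal{K}') \subseteq (Q_{\le N})^{\perp}$ is immediate from the pairing and a dimension count forces equality; in the homogeneous case this equality can be verified one graded degree at a time.
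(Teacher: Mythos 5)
Your proof is correct and follows essentially the same route as the paper's: both identify $\tilde h_{\mathcal{K}'}(N)$ with $\dim R^*(P^m_{\le N}) = \tilde h_{P^m}(N) - \tilde h_{Q}(N)$ via the jet--operator duality pairing (the paper phrases this as $\tilde h_{\mathcal{K}'}(N) = \tilde h_{\im R^*}(N)$ for the filtration induced from $P^m$, using the exact sequence $0\to Q \to P^m \to \im R^* \to 0$), and then combine this with Proposition~\ref{prop:h(M)} and the filtered sequence $0\to \im T^* \to Q \to U \to 0$ exactly as you do. Your rank-of-the-adjoint computation is simply a more explicit rendering of the step the paper compresses into the assertion that jets of pure-gauge solutions correspond to functionals on $\im R^*$, so there is no substantive difference in method.
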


\begin{proof}
	In the notation of  Subsection~\ref{subs:deg_freedom_definition}, 
 consider the factor module 
 $\mathcal{K}/\mathcal{K'}$.

	The module $\Sigma'$ consists of all elements of the form $s \phi$, where $s$ runs the module $\im R \subset P^m$ and $\phi \in \EE^m$. 
	So, a jet $j \in \pi_q(\mathcal{K})$ 
 belongs to $\pi_q(\mathcal{K'})$ 
 iff it satisfies some equation of the form 
	$sj =0$, where $s \in \im R $  has degree at most $q$. The linear space of such equations has dimension  $\tilde h_{\im R^*}(N)$ (where the filtration on the module $\im R^*$ is induced by the one on $P^m$). So, the dimension of the quotient module $\dim	\pi_N(\mathcal{K}/\mathcal{K'})=\tilde h_{\mathcal{K}/\mathcal{K'}}(N)$
	is equal to $\tilde h_{\mathcal{K}}(N) - \tilde h_{\im R^*}(N)$.
	
	 Recall that $\tilde h_{\mathcal{K}}(N) = \tilde h_V(N)$ 
	by Proposition~\ref{prop:h(M)}.  
	Moreover, $\tilde h_U(N) = \tilde h_Q(N) - \tilde h_{\im R^*}(N) =  \tilde h_Q(N) - \tilde h_{P^m}(N)+ 
	\tilde h_V(N)$ (where all filtrations are induced by the one on $P^m$). From the exact sequence 
	$$
	0\to Q \to P^m \to \im R^* \to 0
	$$
 of $P$-modules
	we deduce the equality $ \tilde h_{\im R^*}(N) =  \tilde h_{P^m}(N)- \tilde h_{Q}(N)$. Finally, we obtain
	$$
	\tilde h_{\mathcal{K}}(N) -\tilde  h_{\mathcal{K'}}(N)
  = \tilde h_{\mathcal{K}/\mathcal{K'}}(N)  = \tilde  h_{\mathcal{K}}(N) - \tilde h_{\im R^*}(N) = 
	\tilde h_M(N) - \tilde h_{P^m}(N)+\tilde h_Q(N) = \tilde h_Q(N).
	$$
\end{proof}
	
	\begin{lemma}
		\label{lem: deg_free}
Let $U$ denotes the $P$-module $\ker R^*/\im T^*$ as above. Then the 
physical degree of freedom 
is equal to
$$
{\mathcal N} =  \left\{ \begin{array}{ll}
	0, & \dim U \le d-2, \\
    e(U, d-1)
	\ne 0, & \dim U=d-1,\\
	\infty, & \dim U = d. 
\end{array}
\right.
$$
	\end{lemma}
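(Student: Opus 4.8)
The plan is to reduce the analytic limit defining $\mathcal N$ in~(\ref{eq:deg_free_definition}) to the asymptotics of the Hilbert function of the module $U$, and then to read off the three regimes from the dimension--multiplicity dictionary of Proposition~\ref{prop:dim_n_mult}.

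First I would combine the two preceding propositions. Proposition~\ref{prop:phys_solutions} gives the equality of cumulative Hilbert functions $\tilde h_U(N)=\tilde h_{\mathcal K}(N)-\tilde h_{\mathcal K'}(N)$, and Proposition~\ref{prop:h(M)} together with the descriptions of $\Sigma$ and $\Sigma'$ identifies $\tilde h_{\mathcal K}$ and $\tilde h_{\mathcal K'}$ with $\dim p_N(\Sigma)$ and $\dim p_N(\Sigma')$. Taking the first difference in $N$ then converts this into a statement about the graded Hilbert function $h_U(N):=\tilde h_U(N)-\tilde h_U(N-1)$ of the associated graded module $\gr U$:
\[
h_\Sigma(N)-h_{\Sigma'}(N)=h_U(N).
\]
Since $U=\ker R^*/\im T^*$ is a subquotient of the free module $P^m$, the module $\gr U$ for the induced filtration is a finitely generated graded $P$-module, so its dimension and multiplicity (which by definition are $\dim U$ and $e(U)$) are governed by the usual Hilbert-polynomial theory.

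Next I would apply Proposition~\ref{prop:dim_n_mult}: for $N\gg 0$ the function $h_U(N)$ agrees with a polynomial of degree $\dim U-1$ whose leading term is $\frac{e(U)}{(\dim U-1)!}N^{\dim U-1}$ (with $h_U\equiv 0$ if $U=0$). Feeding this into~(\ref{eq:deg_free_definition}) and using $\binom{N+d-1}{N}=\frac{N^{d-1}}{(d-1)!}\bigl(1+O(1/N)\bigr)$, the prefactors collapse to
\[
\mathcal N=\frac{1}{d-1}\lim_{N\to\infty}N\,\frac{h_U(N)}{\binom{N+d-1}{N}}=(d-2)!\,\lim_{N\to\infty}\frac{h_U(N)}{N^{d-2}}.
\]
A case analysis on $\delta:=\dim U$ (which always satisfies $\delta\le d$, as $U$ is a $P$-module) finishes the argument: for $\delta\le d-2$ the Hilbert polynomial has degree $\delta-1<d-2$ and the limit is $0$; for $\delta=d-1$ the limit is the leading coefficient $\frac{e(U)}{(d-2)!}$, so $\mathcal N=e(U)=e(U,d-1)$, which is nonzero since a nonzero module of dimension $d-1$ has positive multiplicity; and for $\delta=d$ the degree exceeds $d-2$ and the limit diverges, so $\mathcal N=\infty$.

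I expect the substantive work to be bookkeeping rather than a genuine obstruction: one must check that the filtration induced on $U$ from $P^m$ is precisely the one for which Propositions~\ref{prop:h(M)} and~\ref{prop:phys_solutions} were proved, so that passing from the cumulative functions $\tilde h$ to their first differences is legitimate, and that the two-argument multiplicity $e(U,d-1)$ of Proposition~\ref{prop:dim_n_mult} is normalized exactly as the leading coefficient produced by the limit. Once these conventions are matched, all three cases are immediate from comparing $\dim U-1$ with the threshold $d-2$.
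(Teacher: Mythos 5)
Your proposal is correct and follows essentially the same route as the paper's own proof: both identify $h_\Sigma(N)-h_{\Sigma'}(N)$ with the Hilbert function of $U$ via Propositions~\ref{prop:h(M)} and~\ref{prop:phys_solutions}, substitute the asymptotics from Proposition~\ref{prop:dim_n_mult} into the defining limit~(\ref{eq:deg_free_definition}), and read off the three regimes by comparing $\dim U - 1$ with $d-2$. The only difference is cosmetic (you split into three explicit cases, while the paper merges $\dim U \le d-1$ into one case using the convention $e(U,d-1)=0$ for $\dim U \le d-2$), so there is nothing to fix.
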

	
	\begin{proof}
By the definition (\ref{eq:deg_free_definition}),
$$
{\mathcal N} = \frac{1}{d-1}\lim_{p\to \infty} p \frac{h_{\Sigma} (p) - h_{\Sigma'} (p)}{\binom{p+d-1}{p}} = 
\lim_{p\to \infty} \frac{p h(U,p) }{(d-1)\binom{p+d-1}{p}},
$$		
where 
$\binom{p+d-1}{p} = h(\EE,p) = \frac{1}{(d-1)!}p^{d-1}+o(p^{d-1})$.
If $\dim U=d$, then $h(U, p) = \frac{e(U,d)}{(d-1)!}p^{d-1}+o(p^{d-1})$ by Proposition~\ref{prop:dim_n_mult}, so that 
$$
{\mathcal N} = \lim_{p\to \infty}
\frac{p \cdot e(U,d)}{d-1}
= \infty.
$$
Now, suppose that $\dim U\le d-1$. According to Proposition~\ref{prop:dim_n_mult}, we have
$h(U, p) = \frac{e(U,d-1)}{(d-2)!}p^{d-2}+o(p^{d-2})$.
Thus, 
$$
{\mathcal N} = \lim_{p\to \infty}
\frac{p}{d-1} \frac{e(U,d-1)p^{d-2}/(d-2)!}{
	p^{d-1}/(d-1)!} = 
e(U,d-1).
$$ 
In remains to note that $e(U,d-1) = 0$ if $\dim U\le d-2$.  
	\end{proof}

By definition, the module $U$ here is uniquely determined by the matrix $T$ only (and does not depend on the choice of $R$ and $L$). 
To describe it in homological terms,
consider the free resolution~(\ref{eq:resolution_of_N}) of the cokernel $W$ of the polynomial map $T$.
 Taking the homomorphism of this resolution to $P$, we obtain a complex whose homology calculates $\ext^\cdot_P(W,P)$. This complex is isomorphic to the left-hand part of the complex~(\ref{eq:double_resolution}):
 \begin{equation*}    
0 
{\longleftarrow}{P}^{l_{k_{max}}}
{\longleftarrow} \cdots
\stackrel{{L}_{(k)}}{\longleftarrow}  
\cdots
 {\longleftarrow} {P}^{l}
 \stackrel{L}{\longleftarrow}
 {P}^{n}
 \stackrel{T}{\longleftarrow}
 {P}^{m}
 \end{equation*} 
On the other hand, the homology in the term $P^n$ here is, by definition, the module $U$. We have obtained 
 
 \begin{prop}
 	\label{prop:U = ext}
 	In the above notation, the module $U$ is isomorphic to $\ext^1_P(W,P)$. 
 \end{prop}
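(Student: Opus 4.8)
The plan is to read $U$ off directly as the first cohomology of the complex obtained by dualizing a free resolution of $W$. Recall that $\ext^i_P(W,P)$ is, by definition, the $i$-th cohomology of the cochain complex $\Hom_P(F_\bullet, P)$ for any projective (in particular, free) resolution $F_\bullet$ of $W$, independently of the chosen resolution. So the whole argument reduces to exhibiting such a resolution whose dual carries $U$ in cohomological degree~$1$.

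First I would record that the sequence~(\ref{eq:resolution_of_N}), augmented by the canonical surjection $P^n \to W$, is precisely a free resolution of $W = \Coker T$. This is already arranged by the construction preceding~(\ref{eq:right_resolution}): by definition $\im R = \ker T = \Omega^1$ and, inductively, $\im R_{(k)} = \ker R_{(k-1)}$ for every $k$, while $W = P^n/\im T$ makes the sequence exact at $P^n$ as well. Hence
$$
\cdots \to P^{m_1} \stackrel{R}{\to} P^m \stackrel{T}{\to} P^n \to W \to 0
$$
is exact, i.e. a free resolution of $W$.

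Next I would apply $\Hom_P(-,P)$ to the deleted resolution $\cdots \to P^{m_1}\to P^m \to P^n \to 0$, using the canonical identification $\Hom_P(P^k, P)\cong P^k$ under which the dual of a matrix is its transpose. This produces the cochain complex
$$
0 \to P^n \stackrel{T^*}{\to} P^m \stackrel{R^*}{\to} P^{m_1} \to \cdots,
$$
whose first three terms are exactly the complex~(\ref{eq: C' to S'}). By the definition of $\ext$, the cohomology of this complex in cohomological degree~$1$ (the term $P^m$) equals $\ext^1_P(W,P)$; on the other hand, that same cohomology is $\ker(R^*)/\im(T^*)$, which is the definition of $U$. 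Comparing the two descriptions yields $U \cong \ext^1_P(W,P)$.

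There is no genuine obstacle here: the statement is a direct unwinding of definitions once the resolution is in place. The only point demanding care is bookkeeping. One must confirm that~(\ref{eq:resolution_of_N}) is truly a resolution (exactness at every term, which the construction guarantees) and that the transposes appearing in~(\ref{eq: C' to S'}) are the differentials of $\Hom_P(F_\bullet, P)$ rather than those of the \emph{other} dualization occurring in~(\ref{eq:double_resolution}), namely the dual of the resolution~(\ref{eq:left_resolution}) of $V = \Coker T^*$. Keeping these two dualizations cleanly separated — the one defining $U$ through the resolution of $W$, versus the left-hand half of~(\ref{eq:double_resolution}) built from $V$ — is the main thing to watch.
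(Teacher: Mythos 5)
Your proof is correct and takes essentially the same route as the paper: dualize the free resolution~(\ref{eq:resolution_of_N}) of $W$ and identify the cohomology $\ker R^*/\im T^*$ at the term $P^m$ both with $U$ (by its definition) and with $\ext^1_P(W,P)$ (by the definition of Ext). Your closing caution about keeping the two dualizations separate is well placed: the complex that carries $U$ is the dual of the $W$-resolution, with maps $T^*$, $R^*$ and terms $P^n, P^m, P^{m_1},\dots$, exactly as you wrote it, rather than the left-hand ($V$-side) part of~(\ref{eq:double_resolution}) with maps $T$ and $L$ that the paper's own proof displays.
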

 
 Now, one can use Proposition~\ref{prop: groth} to evaluate the dimension of the module $U$.  
The next corollary proves Theorem~\ref{th: deg_free_via_Ext}.

\begin{cor}
	For each module $W$,  the  number of physical degree of freedom is equal to $e(U, d-1)$ and is finite, 
	$${\mathcal N} = e(U, d-1)<\infty.
	$$
\end{cor}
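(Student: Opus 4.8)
The plan is to combine the trichotomy of Lemma~\ref{lem: deg_free} with a dimension bound on the Ext module, so that the problematic infinite alternative is excluded and the two remaining cases collapse into a single formula. First I would invoke Proposition~\ref{prop:U = ext} to identify $U = \ker R^*/\im T^*$ with $\ext^1_P(W,P)$. This reduces the whole statement to controlling the single homological invariant $\ext^1_P(W,P)$ attached to the cokernel $W$ of the matrix $T$, and in particular shows that ${\mathcal N}$ depends only on $W$, not on the choices of $R$ and $L$.

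The key step is to bound $\dim U$. Since $P = \CC[\partial_0, \dots, \partial_{d-1}]$ is a regular ring of Krull dimension $d$ and $W$ is a finitely generated $P$-module, the Ext-dimension estimate recorded in Proposition~\ref{prop: groth} (a part of the Grothendieck local duality theorem) gives $\dim \ext^i_P(W,P) \le d - i$ for every $i$. Taking $i = 1$ yields $\dim U \le d-1$. This is precisely what excludes the third alternative $\dim U = d$ in Lemma~\ref{lem: deg_free}, and hence establishes that ${\mathcal N}$ is finite.

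It remains to merge the two surviving alternatives of Lemma~\ref{lem: deg_free} into the single expression ${\mathcal N} = e(U, d-1)$. When $\dim U = d-1$ the identity is immediate from the lemma. When $\dim U \le d-2$ the lemma gives ${\mathcal N} = 0$, and here I would observe that $e(U, d-1) = 0$ as well: by the multiplicity convention of Proposition~\ref{prop:dim_n_mult}, the quantity $e(U,d-1)$ is read off from the coefficient of $p^{d-2}$ in the Hilbert polynomial of $U$, whose degree is $\dim U - 1 \le d-3 < d-2$, so that coefficient vanishes. Thus ${\mathcal N} = 0 = e(U,d-1)$ in this range too, and the two cases patch together to give ${\mathcal N} = e(U, d-1) < \infty$ uniformly.

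The main obstacle is conceptual rather than computational: the entire finiteness claim hinges on having the sharp inequality $\dim \ext^1_P(W,P) \le d-1$ at hand, since without it the infinite case of Lemma~\ref{lem: deg_free} cannot be ruled out. Fortunately this is exactly the content of Proposition~\ref{prop: groth}, so once that result is available the corollary follows by bookkeeping. The only point I would take care to verify is that the multiplicity convention $e(-,d-1)$ used inside Lemma~\ref{lem: deg_free} matches the one fixed in the appendix, so that the low-dimensional case genuinely yields $e(U,d-1)=0$ and the formulas agree.
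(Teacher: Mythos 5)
Your proposal is correct and follows essentially the same route as the paper: the paper's proof likewise combines Proposition~\ref{prop: groth} (giving $\dim U \le d-1$ via the identification $U \cong \ext^1_P(W,P)$ from Proposition~\ref{prop:U = ext}) with the trichotomy of Lemma~\ref{lem: deg_free}. Your extra verification that $e(U,d-1)=0$ when $\dim U \le d-2$ is already built into the lemma's statement and the convention $e(-,d-1)$ of the appendix, so nothing further is needed.
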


\begin{proof}
	By Proposition~\ref{prop: groth}, $\dim U \le d-1$.
	It follows from Lemma~\ref{lem: deg_free} that 
	${\mathcal N} = e(U, d-1)<\infty$.
\end{proof}

\subsection{Degree of freedom for homogeneous systems}

\label{subs: homogen}

Let us introduce a $Z$-grading for the fields and the gauge structure induced by the differential order. Recall that we assign an integer differential order $\theta_i$ to each field variable
$\phi^i$ ($i=1, \dots, m$) in~(\ref{EoMs}).

From now, we assume that the system (\ref{EoMs}) is homogeneous, that is, for each equation $T_\alpha =0$ all summands in  (\ref{T-hat}) 
have the same total differential order $k_a = \ord T_a$.  
In most cases, 
we put
$\theta_i = 0$, so we usually consider the fields as dimensionless (we refer to the induced  collection of differential orders as {\em standard grading}). In the case of standard grading, one gets $k_a = k_{ai}^{max}$ for each $i$, see~(\ref{T-hat}), (\ref{OrderT}).

The 
choice of the differential orders $\theta_i$  
 unambiguously 
defines the differential order for all  ghosts and antifields.
For homogeneous system, one can choose the ghost and antifield variables in such a way that each partial differentiation $\partial_{\mu}$ has is homogeneous of differential order $1$.
So, the differential order corresponds to the standard grading of the differential polynomial ring $P$ by the degrees of polynomials.\footnote{More generally, one can assign a positive integer degree $\delta_{\mu}$ to each $\partial_{\mu}$ ($\mu=0, \dots, d-1$), so that the operator  $\partial_{\mu}$ is then considered as a differential operator of order $\delta_{\mu}$. Then $P$ should be considered as a weighted polynomial ring.  This grading can be thought of as if we assign the dimension of length to every coordinate $x^\mu$, then the opposite dimension, denoted $\delta_{\mu}$, is assigned to $\partial_\mu$.
The choice of the dimensions $\delta_{\mu}$ and $\theta_i$ unambiguously defines dimensions of all the ghosts and antifields. As we see, the final result for the DoF number does not depend on the choice of the dimensions for the fields.} 
Generally, the grading by differnetial order is not equal to one defined by the  ghost numbers.

For homogeneous systems, one can calculate explicitly the degree of freedom in terms of some vector space dimensions of graded components. Let us discuss the details.

So, we assume that the system is homogeneous. 
Then $P$-modules $W = \Coker T$, $V = \Coker T^*$, and $U =\ext^1_P(W,P)$
introduced above are graded. 
Moreover, then the graded $P$-module  $V$ admits a graded free resolution of the form~(\ref{eq: free_resol_for_M}).  We denote the terms of the resolution by $F_i^V$, $i=0,1,\dots$
Each term is a finitely generated free graded module,  $F_i^V = \bigoplus_j P(-j)^{b_{ij}}$ (where the Betti numbers $b_{ij}$ of this resolutions  are some nonnegative integers). 
One can choose 
$F_0^V$ and  $F_1^M$ to be isomorphic to $P^m$ and $P^n$ as non-graded modules, so that the map $d_1^V: F_1^V \to F_0^V$ is defined by the matrix $T^*$.
If $V$ is generated in degree 0 (so that all $\theta_i$ are zero), here $F_0^V = P^m$ with standard grading.
Generally, we have $F_0^V = \bigoplus_i P(-\theta_i )$ and $F_1^V = \bigoplus_a P(-k_a )$. 
Note that here one can assume $F_i^V = 0$ for $i>d$ since the homological dimension of the ring $P$ is $d$.
So, the resolution gives an exact sequence of graded $P$-modules
\begin{equation}
	\label{eq: our M resol}
0\to F_d^V	\to\dots \to  F_1^V \stackrel{T^*}{\to} F_0^V  
	\to V \to 0. 
\end{equation}
Recall the notations $Q^i_V(z) = Q_{F_i^V}(z)$ and 
$Q_V(z) = \sum_i (-1)^i Q^i_V(z)$. 


Since the map $P^n \stackrel{T^*}{\to} P^m$ is homogeneous with the grading induced by the isomorphisms of (non-graded) modules $P^n \simeq F_1^V  = \bigoplus_{j} P(-j)^{b_{1j}}$
and $P^m \simeq F_0^V  = \bigoplus_{j} P(-j)^{b_{0j}}$, its dual map 
$P^m \stackrel{T}{\to} P^n$ is also homogeneous with the grading induced by the 
isomorphism $P^n \simeq (F_1^V)^* = \bigoplus_{j} P(j)^{b_{1j}}$ and 
$P^m \simeq (F_1^V)^* = \bigoplus_{j} P(j)^{b_{1j}}$. So, in the exact sequence~(\ref{eq:double_resolution}) all modules may be considered as graded. Hence there exists a  graded resolution of the graded module $W$
\begin{equation}
	\label{eq: resol for W}
0\to F_d^W \to \dots 
\to	
F_2^W
\stackrel{R}{\to} 
F_1^W 
\stackrel{T}{\to} 
F_0^W 
	\to W \to 0 
\end{equation}
with $F_0^W = \bigoplus_{j} P(j)^{b_{1j}} \simeq P^n, F_1^W = \bigoplus_{j} P(j)^{b_{0j}} \simeq  P^m$ (where $F_1^W$ 
is isomorphic to $ P^m$ with standard grading if $V$ is generated in degree 0).  Again, we put $Q^i_W(z) = Q_{F_i^W}(z)$ and obtain  $Q_W(z) = \sum_i (-1)^i Q^i_W(z)$.
	
If we apply the functor $\Hom_P(-, P)$ to the resolution~(\ref{eq: resol for W}), we obtain the dual complex 
\begin{equation}
\label{eq: dual resol for N}
0\to W^* \to 
(F_0^W)^* 
\stackrel{T^*}{\to} 
(F_1^W)^* \to \dots 
\to 
(F_d^W)^* \to 0,
\end{equation}
where $(F_0^W)^*  = \bigoplus_{j} P(-j)^{b_{1j}} = F_1^V$
and $(F_1^W)^* = \bigoplus_{j} P(-j)^{b_{0j}} =  F_0^V$. 

If we glue the dual resolution~(\ref{eq: dual resol for N}) for $W$ with the resolution~(\ref{eq: our M resol}) for $V$, we get a two-sided complex of free modules
\begin{equation}
	\label{eq: 2-sided_resolution}
	{\mathbf F}:	0\to F_d^V	\to\dots \to  F_1^V \stackrel{T^*}{\to} F_0^V \stackrel{R^*}{\to} (F_2^W)^* \to \dots \to (F_d^W)^* \to 0.
\end{equation}
This complex is dual to~(\ref{eq:double_resolution}).
The only nonzero homologies of this complex may occur in the rightmost terms  $F_0^V = (F_1^W)^*, (F_2^W)^*, \dots , (F_d^W)^*  $; the homology modules are isomorphic respectively to $\ext^i_P(W,P)$ for $i=1, \dots, d$. 

Let $ Q_{\bf F} (z)$ be the polynomial obtained from the  Euler characteristic of the two-sided complex~(\ref{eq: 2-sided_resolution}),
$$
Q_{\bf F}(z) = \sum_{i=0}^d (-1)^i Q_{F_i^V }(z) - \sum_{i=2}^d (-1)^i  Q_{(F_i^W)*}(z) = 
Q_V(z) - \sum_{i=2}^d (-1)^i Q_W^i(z^{-1}).
$$

Now, let us consider a special case. Suppose that the system~(\ref{EoMs}) does not admit gauge symmetry transformations, $m_1=0$. This means that all terms in the free resolution~(\ref{eq: resol for W}) beginning with $F_2^W$ vanish, that is,   $\pd W \le 1$.
Then the second sum here is zero, so that we have 
$Q_{\bf F}(z) = \sum_{i=0}^d (-1)^i Q_{F_i^V }(z)= Q_V(z)$. On the other hand, in this case
$U = \ker R^*/\im T^* = (F_1^W)^*/\im T^* = F_0^V/\im T^* = V$. Then we obtain

\begin{cor}
	\label{cor: hom_pd<=1}
	Suppose that the system~(\ref{EoMs}) does not admit gauge symmetry transformations, or, equivalently, $\pd W \le 1$. Then $Q_U(z) = Q_{\bf F}(z)$. As a corollary,  
 ${\mathcal N} =
	- Q_{\bf F}'(1)$.
\end{cor}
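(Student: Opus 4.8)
The plan is to exploit the hypothesis in its homological form $\pd W \le 1$, which collapses both the module $U$ and the Euler--characteristic polynomial $Q_{\bf F}(z)$ onto the single module $V = \Coker T^*$. Once the identity $Q_U(z) = Q_V(z) = Q_{\bf F}(z)$ is established, the numerical statement ${\mathcal N} = -Q_{\bf F}'(1)$ follows by feeding Theorem~\ref{th: deg_free_via_Ext} into the multiplicity--Euler-characteristic dictionary of Proposition~\ref{prop:Q and multiplicity}.

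First I would record the consequence of $\pd W \le 1$ on the graded resolution~(\ref{eq: resol for W}): all free terms $F_i^W$ with $i \ge 2$ vanish. Substituting this into the definition
$$Q_{\bf F}(z) = Q_V(z) - \sum_{i=2}^{d}(-1)^i Q_W^i(z^{-1})$$
kills the second sum, so $Q_{\bf F}(z) = Q_V(z)$. Simultaneously, in the two-sided complex~(\ref{eq: 2-sided_resolution}) the target $(F_2^W)^*$ of $R^*$ is zero, hence $\ker R^* = F_0^V$ and $U = \ker R^*/\im T^* = F_0^V/\im T^* = V$, as already noted before the statement. Since $U \cong V$ as graded modules, their Hilbert series coincide; and by additivity of the Hilbert series (and hence of its numerator $Q$) along the resolution~(\ref{eq: our M resol}) one has $Q_V(z) = \sum_i (-1)^i Q_V^i(z)$ equal to the intrinsic numerator of $V$. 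Chaining these equalities yields $Q_U(z) = Q_V(z) = Q_{\bf F}(z)$, which is the first claim.

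For the corollary, Theorem~\ref{th: deg_free_via_Ext} gives ${\mathcal N} = e(U, d-1)$, while Proposition~\ref{prop: groth} guarantees $\dim U \le d-1$, so the multiplicity is taken in the top admissible dimension. Writing the Hilbert series as $H_U(z) = Q_U(z)/(1-z)^d$, the bound $\dim U \le d-1$ forces $Q_U(1) = 0$, whence
$$e(U, d-1) = \lim_{z\to 1}(1-z)^{d-1}H_U(z) = \lim_{z\to 1}\frac{Q_U(z)}{1-z} = -Q_U'(1),$$
which is precisely the content of Proposition~\ref{prop:Q and multiplicity}. Substituting $Q_U = Q_{\bf F}$ then gives ${\mathcal N} = -Q_{\bf F}'(1)$.

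I do not expect a genuine obstacle here, since the corollary is an assembly of facts proved earlier; the only point demanding care is the degenerate range $\dim U \le d-2$, where $e(U, d-1) = 0$ by Proposition~\ref{prop:dim_n_mult}. In that case $Q_U(z)$ vanishes to order at least two at $z=1$, so $Q_U'(1) = 0$ as well, and the formula ${\mathcal N} = -Q_{\bf F}'(1)$ remains valid. I would flag this case explicitly so that the passage from $e(U,d-1)$ to $-Q_U'(1)$ is uniform across all values of $\dim U$.
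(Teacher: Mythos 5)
Your proof is correct and takes essentially the same route as the paper: the hypothesis $\pd W \le 1$ kills the terms $F_i^W$ for $i\ge 2$, which simultaneously annihilates the second sum in $Q_{\bf F}(z)$ and forces $U = F_0^V/\im T^* = V$, after which Theorem~\ref{th: deg_free_via_Ext} combined with Proposition~\ref{prop:Q and multiplicity} gives ${\mathcal N} = -Q_{\bf F}'(1)$, exactly as in the paragraph preceding the corollary. Your explicit check of the degenerate range $\dim U \le d-2$ is careful but already subsumed in the paper's Proposition~\ref{prop:Q and multiplicity}, whose proof handles that case by noting $Q_U(z)$ is then divisible by $(1-z)^2$.
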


{\em Remark.} It follows that for a system without gauge  symmetry transformations,
${\mathcal N} = e(V,d-1) = -Q_V'(1)
	= \sum_j j \sum_{i=1}^d (-1)^{i+1} { b_{ij}^V }$.

If $\pd W >1$, the polynomial  equality  $Q_U(z) =  Q_{\bf F}(z)$ does not generally hold. For example, suppose that
$\pd W =2$. (A simple example of the matrix $T$ such that $\pd W=2$  is the matrix $T  = (\partial_0, \partial_1 )$ for $d=2$, $n=1$, $m=2$.)
Then $\ext_P^2(W, P) \ne 0$ (by the graded version of the Grothendieck non-vanishing theorem), and all higher  $\ext_P^i(W, P)$ vanish. 
Since the  module $\ext_P^2(W, P)$ is the homology of the complex 
${\bf F}$ at the term $(F_2^W)^*$, we have 
$Q_{\bf F}(z) =  Q_U(z) - Q_{\ext_P^2(W, P)}(z) \ne Q_U(z)$.  

Still, the equality that connects the degree of freedom and the multiplicity holds in the general case as well. 

\begin{theorem}
	\label{th:deg_free n graded 2-sided}
	If the system~(\ref{EoMs}) is homogeneous, 
 the physical degree of freedom satisfies 
	$${\mathcal N} = - Q_{\bf F}'(1) = 
	\sum_j j 
	\sum_{i=1}^d (-1)^{i+1}  
	\left( 
	b_{ij}^V
	+ b_{ij}^W \right) 
	. 
	$$
\end{theorem}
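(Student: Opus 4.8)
The plan is to read off the Euler characteristic of the two-sided complex ${\bf F}$ of~(\ref{eq: 2-sided_resolution}) from its homology, and then to keep only the piece that survives differentiation at $z=1$. Since ${\bf F}$ is a bounded complex of finitely generated graded free $P$-modules, multiplying the additivity of Hilbert series across short exact sequences by $(1-z)^d$ shows that $Q_{\bf F}(z)$ equals the alternating sum of the $Q$-polynomials of the homology modules of ${\bf F}$. By the discussion following~(\ref{eq: 2-sided_resolution}) these homology modules are precisely $\ext^i_P(W,P)$, $i=1,\dots,d$, and tracking the cohomological positions (consistently with Corollary~\ref{cor: hom_pd<=1} and the $\pd W=2$ computation $Q_{\bf F}(z)=Q_U(z)-Q_{\ext^2_P(W,P)}(z)$) gives
\begin{equation*}
Q_{\bf F}(z) = \sum_{i=1}^d (-1)^{i+1} Q_{\ext^i_P(W,P)}(z),
\end{equation*}
whose $i=1$ term is $Q_U(z)$ with $U=\ext^1_P(W,P)$.

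Next I would strip off the higher $\ext$ contributions from the derivative. By Proposition~\ref{prop: groth} one has $\dim\ext^i_P(W,P)\le d-i$, so $\dim\ext^i_P(W,P)\le d-2$ for every $i\ge 2$; hence each $Q_{\ext^i_P(W,P)}(z)$ is divisible by $(1-z)^2$ and contributes nothing to the first derivative at $z=1$. Therefore $Q_{\bf F}'(1)=Q_U'(1)$. Since $\dim U\le d-1$ we have $Q_U(1)=0$, so the multiplicity--Euler-characteristic relation of Proposition~\ref{prop:Q and multiplicity} yields $e(U,d-1)=-Q_U'(1)$; combined with the identity ${\mathcal N}=e(U,d-1)$ of Theorem~\ref{th: deg_free_via_Ext}, this gives the first equality ${\mathcal N}=-Q_{\bf F}'(1)$.

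For the closed Betti-number expression I would differentiate $Q_{\bf F}(z)=\sum_{i=0}^d (-1)^i Q_{F_i^V}(z)-\sum_{i=2}^d (-1)^i Q_{(F_i^W)^*}(z)$ termwise, using $Q_{P(-j)}(z)=z^j$ (derivative $j$ at $z=1$) and the fact that dualization sends $P(-j)$ to $P(j)$ (derivative $-j$ at $z=1$). Collecting the resulting $j$-weighted Betti numbers and using the gluing identifications $F_0^V=(F_1^W)^*$ and $F_1^V=(F_0^W)^*$ to align the summation ranges produces $-Q_{\bf F}'(1)=\sum_j j\sum_{i=1}^d (-1)^{i+1}(b_{ij}^V+b_{ij}^W)$.

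The main obstacle is the second step: although the modules $\ext^i_P(W,P)$ for $i\ge 2$ are generally nonzero and genuinely alter $Q_{\bf F}$ as a polynomial (as the $\pd W=2$ example shows), one must see that the Grothendieck dimension bound forces their $Q$-polynomials to vanish to order at least two at $z=1$, so that they drop out of the derivative computing the degree of freedom. This is exactly what reduces the general case $\pd W>1$ to the transparent situation $\pd W\le 1$ of Corollary~\ref{cor: hom_pd<=1}.
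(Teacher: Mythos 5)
Your proposal is correct and follows essentially the same route as the paper's proof: identify $Q_{\bf F}(z)$ with the alternating sum $\sum_{i=1}^d(-1)^{i+1}Q_{\ext^i_P(W,P)}(z)$, use the bound $\dim\ext^i_P(W,P)\le d-i$ from Proposition~\ref{prop: groth} together with~(\ref{eq:Q(M)=(1-z)^(d-D)p_M}) to kill the derivatives of the higher $\ext$ terms at $z=1$, and conclude via Theorem~\ref{th: deg_free_via_Ext} and Proposition~\ref{prop:Q and multiplicity}. Your explicit termwise differentiation with the gluing identifications $F_0^V=(F_1^W)^*$, $F_1^V=(F_0^W)^*$ correctly reproduces the Betti-number formula, which the paper itself only spells out in the proof of Corollary~\ref{cor:main_homogeneous_general}.
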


\begin{proof}
	Denote $E_i = \ext_P^i(W, P)$ for each $i\ge 1$.
	By Proposition~\ref{prop: groth}, we have $\dim  E_i \le d-i$. Hence the polynomial $Q_{E_i}(z)$ is divisible by $(z-1)^{i}$, see~(\ref{eq:Q(M)=(1-z)^(d-D)p_M}). In particular, $Q_{E_i}(1)' =0$
	if $i\ge 2$.
	
	The Euler characteristic for the homology of the complex ${\bf F}$ is, by definition,
	$$
	\chi_{H{\bf F}} (z) = H_P(z) \sum_{i=1}^d (-1)^{i+1} 
	Q_{E_i}(z).  
	$$
	Since $Q_{\bf F}(z) = \chi_{\bf F}(z)H_P(z)^{-1} = \chi_{H{\bf F}} (z) H_P(z)^{-1} $, we have   $$
	Q_{\bf F}'(1) =  \sum_{i=1}^d (-1)^{i+1} 
	Q_{E_i}'(1) =  Q_{E_1}'(1) = Q_{U}'(1) = -
	{\mathcal N} .
	$$
\end{proof}

The expression of DoF in terms of the gauage structure is given in the next corollary.

\begin{cor}
\label{cor:main_homogeneous_general}
Suppose that the system~(\ref{EoMs}) is homogeneous and the grading is not necessarily standard, and suppose that all maps $R_{(k)}, L_{(k)}$ are chosen to be homogeneous as well.
Then the  physical degree of freedom can be calculated as follows:
	$${\mathcal N} = \sum_{a} k_a
 -\sum_{i} \theta_i-\sum_{k= 0}^{\Bar {k}_{max}} (-1)^k \sum_{A_k} l^{(k)}_{A_k} - 
 \sum_{k= 0}^{{k}_{max}} (-1)^k \sum_{\alpha_k}  r_{\alpha_k}^{(k)}
 $$
 $$
 = \sum_{a} k_a
 -\sum_{i} \theta_i - \sum_{k= 0}^{d-2}(-1)^k \left(
 l^{(k)}_{A_k} + r_{\alpha_k}^{(k)}
 \right)
	. 
	$$
    \end{cor}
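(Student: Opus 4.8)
The plan is to read off the statement from Theorem~\ref{th:deg_free n graded 2-sided}, which already gives ${\mathcal N} = -Q_{\bf F}'(1)$ for the glued two-sided complex~(\ref{eq: 2-sided_resolution}). Recall that $Q_{\bf F}(z) = \sum_{i=0}^d (-1)^i Q_{F_i^V}(z) - \sum_{i=2}^d (-1)^i Q_{(F_i^W)^*}(z)$, and that for a graded free module $F = \bigoplus_t P(-d_t)$ one has $Q_F(z) = \sum_t z^{d_t}$, hence $-Q_F'(1) = -\sum_t d_t$ is minus the sum of the degrees of the free generators of $F$. Thus the whole computation reduces to determining the degrees of the generators of every term of ${\bf F}$ and summing them with the signs dictated by their homological position. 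The homogeneity hypothesis is exactly what guarantees that all differentials $T^{*}, L^{*}, L_{(k)}^{*}, R, R_{(k)}$ are graded maps, so that these generator degrees are well defined.

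First I would treat the left (gauge-identity) half, i.e. the resolution~(\ref{eq: our M resol}) of $V$. Its generators are the fields (in $F_0^V$), the equations (in $F_1^V$), and the successive gauge identities (in $F_{k+2}^V$). Homogeneity of $T^{*}$ forces the field and equation generators into degrees $\theta_i$ and $k_a$, which is precisely~(\ref{OrderT}); homogeneity of $L^{*}$ places each first-stage identity generator in degree $l^a_A + k_a = l^{(0)}_A$, which is~(\ref{L-order0}); and the inductive step applied to $L_{(k)}^{*}$ reproduces the recursion~(\ref{L-orderk}) and puts the generators of $F_{k+2}^V$ in degrees $l^{(k)}_{A_k}$. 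Collecting with the signs $(-1)^i$ gives the contribution $\sum_a k_a - \sum_i \theta_i - \sum_{k}(-1)^k \sum_{A_k} l^{(k)}_{A_k}$ to ${\mathcal N}$.

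The right (gauge-symmetry) half is handled by the same induction on the resolution~(\ref{eq: resol for W}) of $W$ via the maps $R, R_{(k)}$, with one twist: these modules enter ${\bf F}$ dualized, as $(F_{k+2}^W)^{*}$, so the degree carried by a generator of $(F_{k+2}^W)^{*}$ is the shift of the corresponding generator of $F_{k+2}^W$ read with the opposite sign. Starting from the field shifts of $F_1^W$ and telescoping through the homogeneous $R, R_{(k)}$ reproduces the recursion~(\ref{r-alpha-k}) with base~(\ref{r-alpha}), identifying these degrees with the orders $r^{(k)}_{\alpha_k}$; after the sign flip from dualization the resulting contribution is $-\sum_{k}(-1)^k \sum_{\alpha_k} r^{(k)}_{\alpha_k}$. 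Adding the two halves yields the first displayed formula, and the second follows by pairing the $k$-th identity and symmetry terms and using the syzygy bounds $\bar k_{max}, k_{max} \le d-2$.

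The step I expect to be the real obstacle is the degree bookkeeping on this dualized gauge-symmetry half: one must verify that the shifts of the homogeneous free resolution of $W$ telescope through $R, R_{(k)}$ into exactly the iteratively defined $r^{(k)}_{\alpha_k}$, keeping precise track of the orientation reversal produced by $\Hom_P(-,P)$ and of the base field shifts $\theta_i$, so that the $\theta_i$ survive only through the single term $-\sum_i \theta_i$ coming from $F_0^V$. This sign-and-shift matching, rather than any deep homological input, is where the homogeneity of the chosen generating sets is indispensable and where the non-standard grading ($\theta_i \ne 0$) has to be handled with care.
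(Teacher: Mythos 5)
Your proposal is correct and follows essentially the same route as the paper: the paper's proof likewise invokes Theorem~\ref{th:deg_free n graded 2-sided}, identifies the generator degrees of the graded resolutions of $V$ and $W$ with the iteratively defined orders $\theta_i$, $k_a$, $l^{(k)}_{A_k}$, $r^{(k)}_{\alpha_k}$ by comparing with the complex~(\ref{Gauge-for-Gauge-Sequence}), and sums them with homological signs, recording the bookkeeping through the quantities $\beta_i^V = \sum_j j\, b_{ij}^V$ and $\beta_i^W = \sum_j j\, b_{ij}^W$. Your explicit telescoping through the homogeneous maps $L^*_{(k)}$, $R_{(k)}$ and your tracking of the sign reversal under $\Hom_P(-,P)$ on the gauge-symmetry half is exactly the content the paper compresses into its table of $\beta$-identities (including the relation $\beta_0^V = -\beta_1^W$ that makes the $\theta_i$ contribute only once), so no gap remains.
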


\begin{proof}
The maps in the resolutions for the modules $V$ and $W$
are in correspondence with the complex~(\ref{Gauge-for-Gauge-Sequence}) defined for  gauge symmetries and gauge identities. 
The comparison of the complexes~(\ref{Gauge-for-Gauge-Sequence}) and~(\ref{eq: 2-sided_resolution}) gives the following equalities for the Betti numbers.
Here $\beta_{i}^V$ (respectively, $\beta_{i}^W$)
denotes the sum of all generators' degrees for the free module $F_i^V$ (resp., $F_i^W$), that is, the sum $\sum_j j b_{ij}^V$ or, respectively, $\sum_j j b_{ij}^W$. 
$$
\begin{array}{l}
     \beta_{0}^V = \sum_j \theta_j = -\beta_{1}^W,\\
     \beta_{1}^V = \sum_j k_j = - \beta_{0}^W,\\
     \beta_{i}^V = \sum_j r^{(i-2)}_j, i\ge 2,\\
     \beta_{i}^W = \sum_j  l^{(i-2)}_j, i\ge 2.\\
   \end{array}
$$
Then the corollary follows from  Theorem~\ref{th:deg_free n graded 2-sided}, since 
we have
$${\mathcal N} = 
	\sum_j j 
	\left( 
	\sum_{i=1}^d (-1)^{i+1}  
	\left( 
	b_{ij}^V
	+ b_{ij}^W \right) 
	\right)
 = \beta_{1}^W + \beta_{1}^V - 
 \sum_{i=2}^d (-1)^i (\beta_{i}^V +
     \beta_{i}^W )
     $$
     $$
     = 
     \sum_j \left(
     -\theta_j +k_j
 -  \sum_{i=2}^d  (-1)^i  (r^{(i-2)}_j+ l^{(i-2)}_j)
     \right) =      
     -\sum_i \theta_i + \sum_a k_a
 -  \sum_{k= 0}^{d-2}(-1)^k \left(
 l^{(k)}_{A_k} + r_{\alpha_k}^{(k)}
 \right)
	. 
	$$
\end{proof}

In the particular case of the standard grading, we put $\theta_i=0$ for all $i$ and obtain the second equality of Theorem~\ref{th:main_homogeneous} as follows.

\begin{cor}
\label{cor:main_homogeneous}
Suppose that the system~(\ref{EoMs}) is homogeneous with standard grading, and all maps $R_{(k)}, L_{(k)}$ are chosen to be homogeneous as well.
Then the  physical degree of freedom is equal to the sum
	$${\mathcal N} = \sum_{a} k_a-\sum_{k= 0}^{\Bar {k}_{max}} (-1)^k \sum_{A_k} l^{(k)}_{A_k} - 
 \sum_{k= 0}^{{k}_{max}} (-1)^k \sum_{\alpha_k}  r_{\alpha_k}^{(k)}
	. 
	$$
    \end{cor}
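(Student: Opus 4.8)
The plan is to derive Corollary~\ref{cor:main_homogeneous} as the specialization of Corollary~\ref{cor:main_homogeneous_general} obtained by setting $\theta_i = 0$ for all $i$. Since the general corollary is already proved from Theorem~\ref{th:deg_free n graded 2-sided}, the only task is to verify that the standard grading is a legitimate instance of the ``not necessarily standard'' homogeneous setting, and then to read off the resulting formula. First I would invoke Corollary~\ref{cor:main_homogeneous_general}, whose hypotheses require only that the system be homogeneous and that all maps $R_{(k)}, L_{(k)}$ be chosen homogeneous; the standard grading $\theta_i = 0$ trivially meets these, so nothing new must be established about the existence of homogeneous gauge generators.

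Next I would substitute $\theta_i = 0$ into the first displayed formula of Corollary~\ref{cor:main_homogeneous_general}. The term $-\sum_i \theta_i$ vanishes identically, leaving
\begin{equation*}
{\mathcal N} = \sum_{a} k_a - \sum_{k=0}^{\Bar{k}_{max}} (-1)^k \sum_{A_k} l^{(k)}_{A_k} - \sum_{k=0}^{{k}_{max}} (-1)^k \sum_{\alpha_k} r_{\alpha_k}^{(k)},
\end{equation*}
which is exactly the asserted sum. I would note that under standard grading the orders simplify as recorded earlier in the text: by~(\ref{OrderT}) one has $k_a = \max_i\{k_a^i\} = k_{ai}^{max}$, and by~(\ref{r-alpha}) one has $r_\alpha = \max_i r^i_\alpha$, so the entries appearing in the sum are the naive differential orders of the operators with no dimension shifts.

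The only point needing a word of care is that the Betti-number identities used in the proof of Corollary~\ref{cor:main_homogeneous_general}, namely $\beta_0^V = \sum_j \theta_j$ and $\beta_1^V = \sum_j k_j$, remain valid when $\theta_i = 0$: here $F_0^V = P^m$ with the standard grading and $F_1^V = \bigoplus_a P(-k_a)$, exactly as remarked after~(\ref{eq: our M resol}). I expect no genuine obstacle, as this is purely a matter of specialization; the entire analytic and homological content has already been carried by Theorem~\ref{th:deg_free n graded 2-sided} and Corollary~\ref{cor:main_homogeneous_general}. If anything deserves attention it is merely checking that the indexing of the gauge-identity and gauge-symmetry stages matches the ghost-number table, but this correspondence is fixed once and for all in the proof of the general corollary and is inherited unchanged.
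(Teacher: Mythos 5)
Your proposal is correct and matches the paper's own treatment: the paper proves Corollary~\ref{cor:main_homogeneous} precisely by specializing Corollary~\ref{cor:main_homogeneous_general} to the standard grading $\theta_i=0$, whereupon the term $-\sum_i\theta_i$ drops out and the stated formula is immediate. Your added checks (that the standard grading satisfies the hypotheses, and that the Betti-number identities $\beta_0^V=\sum_j\theta_j$, $\beta_1^V=\sum_j k_j$ specialize correctly) are sound and consistent with the remarks following~(\ref{eq: our M resol}).
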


Let ${\mathcal N}_T = {\mathcal N}$ denotes the physical degree of freedom for the system~(\ref{EoMs}) defined by the matrix $T$, and let  ${\mathcal N}^\dagger = {\mathcal N}_{T^\dagger}$
denotes the same number for the conjugate system which is defined by the Hermitian transpose matrix $T^\dagger$. 
\begin{cor}
	\label{cor: conjugate N = N }
	If the matrix $T$ is homogeneous, then ${\mathcal N}^\dagger = {\mathcal N}$. 
\end{cor}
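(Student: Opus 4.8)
The plan is to exploit the manifest symmetry of the formula in Theorem~\ref{th:deg_free n graded 2-sided} between the two modules $V = \Coker T^*$ and $W = \Coker T$, together with the observation that passing to the conjugate system merely interchanges the roles of these two modules, up to a graded automorphism of $P$ that leaves all graded Betti numbers untouched. In other words, I would not recompute any gauge-structure orders; instead I would show that conjugation reflects the two-sided complex~(\ref{eq: 2-sided_resolution}) and reflects the polynomial variables, neither of which changes the quantity $\sum_j j \sum_i (-1)^{i+1}(b_{ij}^V + b_{ij}^W)$.

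First I would record how the conjugate matrix is built from $T$. By~(\ref{O-dagger}) the entries of $T^\dagger$ are $(T^\dagger)_{ia}(\partial) = T_{ai}(-\partial)$, so $T^\dagger$ is the transpose of $T$ with the formal variables reflected. Introducing the ring automorphism $\sigma\colon P\to P$ with $\sigma(\partial_\mu) = -\partial_\mu$, this reads $T^\dagger = \sigma(T^*)$, and dually $(T^\dagger)^* = \sigma(T)$. Since $\sigma$ is homogeneous of degree $0$, the matrix $T^\dagger$ is homogeneous exactly when $T$ is, with the same multiset of entry-degrees, so Theorem~\ref{th:deg_free n graded 2-sided} is applicable to the conjugate system.

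Next I would identify the two cokernels attached to the conjugate system. Writing $W^\dagger = \Coker T^\dagger$ and $V^\dagger = \Coker (T^\dagger)^*$, the previous step gives $W^\dagger = \Coker \sigma(T^*)$ and $V^\dagger = \Coker \sigma(T)$. The componentwise action of $\sigma$ on the free modules is a $\sigma$-semilinear isomorphism intertwining $T^*$ with $\sigma(T^*)$ and $T$ with $\sigma(T)$; hence it induces degree-preserving, $\sigma$-semilinear isomorphisms of graded groups $W^\dagger \cong V$ and $V^\dagger \cong W$. Because $\sigma$ is a graded \emph{ring} automorphism fixing each $P(-j)$ up to graded isomorphism, applying it to a minimal graded free resolution of $V$ (resp.\ of $W$) produces a minimal graded free resolution of $W^\dagger$ (resp.\ of $V^\dagger$) with identical twists. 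Therefore $b_{ij}^{W^\dagger} = b_{ij}^{V}$ and $b_{ij}^{V^\dagger} = b_{ij}^{W}$ for all $i,j$.

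Finally I would substitute into Theorem~\ref{th:deg_free n graded 2-sided} applied to the conjugate system, ${\mathcal N}^\dagger = \sum_j j \sum_{i=1}^d (-1)^{i+1}\bigl(b_{ij}^{V^\dagger} + b_{ij}^{W^\dagger}\bigr)$, and replace $b_{ij}^{V^\dagger}$ by $b_{ij}^{W}$ and $b_{ij}^{W^\dagger}$ by $b_{ij}^{V}$; this reproduces precisely the formula for ${\mathcal N}$, giving ${\mathcal N}^\dagger = {\mathcal N}$. The main obstacle is the bookkeeping of the third step: one must verify that reflecting the variables is genuinely a graded ring automorphism of $P$, so that it carries free resolutions to free resolutions with the same graded twists, rather than merely a linear map matching Hilbert functions. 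Once this is secured, the conjugation symmetry of~(\ref{eq: 2-sided_resolution}) is automatic, and the equality of the two degrees of freedom follows without touching the explicit orders $k_a$, $\theta_i$, $l^{(k)}_{A_k}$, $r^{(k)}_{\alpha_k}$.
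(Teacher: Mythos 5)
Your proof is correct and follows essentially the same route as the paper: the paper likewise introduces the variable-reflecting ring automorphism of $P$ (denoted ${\mathcal L}$ there, which additionally conjugates complex coefficients), uses it to identify ${\mathcal L}W = V^\dagger$ and ${\mathcal L}V = W^\dagger$, observes that it carries graded free resolutions to graded free resolutions so that the Betti numbers of $V$ and $W$ are simply interchanged, and then concludes from the $V\leftrightarrow W$ symmetry of the formula in Theorem~\ref{th:deg_free n graded 2-sided}. The only cosmetic difference is that the paper explicitly assembles the conjugate two-sided complex ${\mathbf F}^\dagger$ and invokes the grading-independence of ${\mathcal N}$ to fix the twists, whereas you assert the degree-preservation of the semilinear isomorphisms directly --- the same bookkeeping in different words.
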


\begin{proof}
	Consider the  modules $W$ defined by the matrix $T^\dagger$ in place of $T$, that is, $W^\dagger  = \Coker T^\dagger$
	and $V^\dagger =\Coker (T^\dagger)^*$. 
The automorphism of the ring $P$ defined by the conjugation of complex numbers and the change of the signs of the variables $\partial_i $ ($i=0, \dots, d-1$) induces an autoequivalence of the category of $P$-graded modules; denote this autoequivalence by ${\mathcal L}$. Note that the isomorphism 
maps the matrix $T$ to $T(^\dagger)^*$, so 
that ${\mathcal L}W =  V^\dagger$ and ${\mathcal L}V =  W^\dagger$.  Obviously, this autoequivalence preserves (free) graded modules, so that it maps the  resolutions~(\ref{eq: our M resol}) and~(\ref{eq:resolution_of_N}) to the analogous resolutions for the modules ${\mathcal L}V$
${\mathcal L}W$ as follows: 
\begin{equation}
	\label{eq: resol for V_dagger}
0\to F_d^W \to \dots 
\to	
F_2^W
\to
F_1^W 
\stackrel{(T^\dagger)^*}{\to} 
F_0^W 
	\to {\mathcal L} W (=V^\dagger) \to 0, 
\end{equation}
\begin{equation}
	\label{eq: resol for W_dagger}
0\to F_d^V \to \dots 
\to	
F_2^V
{\to} 
F_1^V 
\stackrel{T^\dagger}{\to} 
F_0^V 
	\to {\mathcal L} V (=W^\dagger)\to 0 
\end{equation}
If we glue the resolution~(\ref{eq: resol for V_dagger}) for $V^\dagger$ with the dual to the resolution~(\ref{eq: resol for W_dagger}) for $W^\dagger$, we get a two-sided complex of free modules
\begin{equation}
	\label{eq: 2-sided_resolution_dagger}
	{\mathbf F^\dagger}:	0\to F_d^W	\to\dots \to  F_1^W \stackrel{(T^\dagger)^*}{\to} F_0^W {\to} (F_2^V)^* \to \dots \to (F_d^V)^* \to 0.
\end{equation}
This is the complex suitable for the degree of freedom calculation in the case of the Hermitian transpose matrix $T^\dagger$. The degree of freedom does not depend of the grading, so that we may assume that the grading on the module $\Coker T^\dagger$ is such all terms of this complex are dual to the corresponding terms of the complex~(\ref{eq: 2-sided_resolution}). Then we have  
$$	
	{\mathcal N}^\dagger = 
 - Q_{\bf F^\dagger}'(1) = \sum_j j 
	\left( 
	\sum_{i=1}^d (-1)^{i+1}  
	\left( 
	b_{ij}^W
	+ b_{ij}^V \right) 
	\right) = 
 {\mathcal N}.
 $$
 \end{proof}

\begin{exam}[DoF count in the system of Maxwell equations for the strength tensor] 
\label{ex: maxwell}
Consider the electromagnetic field strength tensor  $F^{\mu\nu}$ in $d=4$ Minkowski space. The tensor is antisymmetric $F^{\mu\nu}=- F^{\nu\mu}$, and the labels can be lowered by Minkowski metrics turning it not 2-form. The Maxwell equations for the free field mean that the two form is closed and co-closed. In the components the equations read
\begin{equation}\label{Maxwell-eq}
  T^\mu_1\equiv\partial_\nu F^{\mu\nu}=0\, , \quad T^\mu_2\equiv\epsilon^{\mu\nu\lambda\rho}\partial_\nu F_{\lambda\rho}=0 \, .
\end{equation}
There are eight homogeneous equations of the first order, $n_1=8$. Among these equations there are two identities of the second order, $l_2=2$
\begin{equation}\label{I-Maxwell}
    \partial_\mu T^\mu_{1,2}\equiv 0 \, .
\end{equation}
these identities are irreducible. The equations are not gauge invariant.

The above identities correspond to the complex 
\begin{equation*}
0\to P^2(-2) \to P^8(-1) \stackrel{T^*}{\to} P^6 \end{equation*}
Direct calculations
show that this is indeed a resolution of the form~(\ref{eq: our M resol}) for the module $V= \Coker T$. 
Moreover, an analogous calculation shows that there is no non-trivial gauge transformation, that is, the module $W = \Coker T^*$ admits the shortest possible free resolution
$$
0\to  P^6 \stackrel{T}{\to} P^8(1) 
$$
It follows that the two-sided complex defined in (\ref{eq: 2-sided_resolution}) obtains the form
\begin{equation}
\label{eq: F for Maxwell}
	{\mathbf F}:	
    0\to P^2(-2) \to P^8(-1) \stackrel{T^*}{\to} P^6 
    \to 0.
\end{equation}
Then the count by Corollary~\ref{cor:main_homogeneous}  reads:
$$
  N_{DoF}=-8\cdot (-1)+2\cdot (-2)= 4 \, .
$$
It is the correct degree of freedom for the $e/m$ field by the phase space count. 
\end{exam}

\subsection{Degree of freedom from the perspective of the BRST complex}

\label{subs: BRST_DoF}

Now, let us connect the degree of freedom with the BRST complex (cf. Subsection~\ref{subs: BRST}). We will show that DoF can be expressed in terms of the Euler characteristic of this complex. Following the previous subsection, we assume here that the matrix $T$ is homogeneous.

Our first objective is to define the Euler characteristic of BRST complex. This definition may be of independent interest. Whereas the direct definition of Euler characteristic in this context leads to divergent infinite sums, we get around this obstruction  by uniformly assigning sufficiently large dimensions to all fields. 

This 
increase the formal differential orders of all BRST generatings, that is, fields, anti-fields, and higher-order gauge symmetries and gauge identities. As a result, the complex acquires a well-defined Euler characteristic, expressed as a formal power series in a variable 
 $z$. 
 
 Although the resulting Euler characteristic depends on the choice of field dimensions, the residue of its logarithmic derivative at infinity remains dimension-independent. This residue gives the degree of freedom, as shown in Proposition~\ref{prop: DoF via BRST} below.
 
Recall from Subsection~\ref{subs: BRST} that the 
 configuration space $\mathcal{W}$ of the BRST complex
 is the direct sum of the free $\mathcal{C}^\infty(X)$-modules $\mathcal{V}_i$ and $ \Bar{\mathcal{V}}_j$ ($i=0, \dots, k_{max}$, $j=0, \dots, \Bar k_{max}$)
 corresponding to fields (or gauge transformation parameters) and antifields (of gauge identities). 
 Moreover, the complex $\mathcal{W}$ is isomorphic to the inverse Fourier transform of the complex $ {\bf F}^*$ dual to the complex $ {\bf F}$ defined in~(\ref{eq: 2-sided_resolution}). Therefore, one can assign to the generatings of the $i$-th
 term  $\mathcal{W}_i$ of the complex $\mathcal{W}$ the same orders as the ones of the basis elements of the dual module to the $(-i)$-th term of the complex $ {\bf F}^*$, that is, the same orders as the ones for the basis elements of the $i$-th term of the complex ${\bf F}$. The number of such generatings of 
 order $j$ is equal to the bigraded Betti number $b_{ij} = b_{ij}^{\bf F}$; by definition, these are $b_{ij} = b_{ij}^V$ for $i\ge 0$ and $b_{ij} = b_{1-i,-j}^W$ for $i<0$.

Since the whole BRST complex is a free commutative differential graded algebra  $ 
C = \bigwedge \mathcal{W}
$ over the ring $\mathcal{E} = \mathcal{C}^\infty(X) $, one can define its bigraded Poincar\'e series 
\begin{equation}
\label{eq: Poincare_for_BRST}
P_C(t,z) = \sum_{i,j \in \ZZ} t^i z^j \rk_\mathcal{E} C_{ij}  = 
\prod_{i,j\in \ZZ} \left( 1-(-1)^i t^iz^j \right)^{(-1)^{i+1} b_{ij}}    
\end{equation}
and Euler characteristic
\begin{equation}
\label{eq: Euler_for_BRST}
\chi_C(z) = P_C(-1,z)= \sum_{i,j \in \ZZ} (-1)^i z^j \rk_\mathcal{E} C_{ij}  = 
\prod_{i,j\in \ZZ} \left( 1-z^j \right)^{(-1)^{i+1} b_{ij}}.
\end{equation}

Note that the above Poincar\'e series and Euler characteristic could be  ill-defined if the orders of the BRST complex generators are not all positive or negative. For example, if there 
exists a generators $z\in C_{00}$ (of zero order), then the rank $\rk_\mathcal{E} C_{00}$ is infinite, so that the sums in the definitions of the Poincar\'e series and the Euler characteristic are undefined. 

On the other hand, 
any uniform shift of the formal differential orders
 $\theta_i$ for the field variables  $\phi^i$
 induces  corresponding order shifts  throughout the entire gauge structure. 
 To address this, we apply
 Let us apply  a simultaneous integer shift $\theta_i \mapsto \theta_i+c$  to all fields variable orders $\theta_i$, where $c > \max_{k,\alpha_k} \rho_{\alpha_k}^{(k)}$. The same shift is then applied to all elements of the gauge structure.
 
After the shift, all generatings of the BRST complex acquire positive orders, ensuring that the formal power series  $
P_C(t,z)$ and $\chi_C(z)$ will be well-defined. 

To connect the Euler characteristic introduced above with homology, 
consider an algebraic dual version of the BRST complex $C$, namely, the  differential graded $P$-algebra 
$B = \bigwedge  {\bf F}^* = \bigwedge  ( {\mathcal F} \mathcal{W} )$, which is the Fourier dual to $C$. 
The algebra $B$ is bigraded as a vector space, $B = \bigoplus_{i,j\in \ZZ} B_{ij}$; the first grading is induced by the ghost number and the second one is induced by the grading of $P$-modules. 
Its bigraded Poincar\'e series is 
$$
P_B(t,z) = \sum_{i,j \in \ZZ} t^i z^j \dim B_{ij}  = H_P(z) 
\prod_{i,j\in \ZZ} \left( 1-(-1)^i t^iz^j \right)^{(-1)^{i+1} b^{{\bf F}^*}_{ij}},
$$
where the Betti numbers of the complex ${\bf F}^*$ are 
$b^{{\bf F}^*}_{ij} = b_{-i,-j}$ and $H_P(z) = (1-z)^{-d}$.
So, its Euler characteristic is 
\begin{equation}
\label{eq:Euler_BRST}
\begin{split}
\chi_B(z) = P_B(-1,z)= H_P(z) \sum_{i,j \in \ZZ} (-1)^i z^j \dim B_{ij}  = 
H_P(z) \prod_{i,j\in \ZZ} \left( 1-z^j \right)^{(-1)^{i+1} b_{-i,-j}}
\\
 = H_P(z) \prod_{i,j\in \ZZ} \left( 1-z^{-j} \right)^{(-1)^{i+1} b_{i,j}}
= H_P(z) \chi_C(z^{-1}).
    \end{split}
\end{equation}
If $\homol B$ is the bigraded homology algebra of the 
differnetial graded algebra $B$, its Euler characteristic is equal to the one of $B$, 
$$
\chi_{\homol B}(z) = \sum_{i,j \in \ZZ} (-1)^i z^j \dim \homol B_{ij} = \chi_B(z).
$$

\begin{prop}
\label{prop: DoF via BRST}
Suppose that the matrix $T$ is homogeneous and
the formal differential orders  $\theta_i$ of the fields $\phi^i$  are chosen in such a way that all 
generators of the BRST complex have positive orders, so that its Euler characteristic $\chi_C(z)$ is a well-defined formal power series.
Then the degree of freedom can be uniquely recovered from $\chi_C(z)$ as follows:
$$
{\mathcal N} = \lim_{|z|\to \infty} z \left( \ln \chi_C(z) \right) '.
$$
So, the degree of freedom is equal to minus the residue at infinity 
of the logarithmic derivative $l(z) = \left( \ln \chi_C(z) \right)'
= \chi_C(z)' / \chi_C(z) 
$
of $\chi_C(z)$.

In terms of the Euler characteristic $\chi_B(z)$ of the algebraic dual BRST complex $B$, the degree of freedom is equal to minus the residue of its logarithmic derivative at 0, 
$$
{\mathcal N} = - \lim_{z\to 0} z \left( \ln \chi_B(z) \right) '.
$$
\end{prop}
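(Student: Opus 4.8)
The plan is to reduce both residue computations to two ingredients already in hand: the finite product formula~(\ref{eq: Euler_for_BRST}) for $\chi_C(z)$, and the identity ${\mathcal N} = -Q_{\bf F}'(1)$ from Theorem~\ref{th:deg_free n graded 2-sided}. After the order shift that makes $\chi_C$ well-defined, all nonzero Betti numbers $b_{ij}$ sit in positive degree $j$, so $\chi_C$ is a finite product of factors $(1-z^j)^{(-1)^{i+1}b_{ij}}$ and hence rational. First I would take the logarithm and differentiate to get
$$
l(z)=(\ln\chi_C(z))'=\sum_{i,j}(-1)^{i+1}b_{ij}\,\frac{jz^{j-1}}{z^j-1},
\qquad
z\,l(z)=\sum_{i,j}(-1)^{i+1}b_{ij}\,\frac{jz^{j}}{z^j-1}.
$$
Expanding each factor for large $z$ shows $l(z)=\tfrac1z\sum_{i,j}(-1)^{i+1}b_{ij}\,j+O(z^{-2})$, so the limit $\lim_{|z|\to\infty}z\,l(z)=\sum_{i,j}(-1)^{i+1}b_{ij}\,j$ exists and equals minus the residue of $l$ at infinity, matching the statement.

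Next I would verify that this limit reproduces ${\mathcal N}$. Using the identification of the BRST Betti numbers $b_{ij}=b_{ij}^V$ for $i\ge 0$ and $b_{ij}=b_{1-i,-j}^W$ for $i<0$, the substitution $i\mapsto 1-i,\ j\mapsto -j$ in the negative-ghost range turns the sum into
$$
\sum_{i\ge 0,\,j}(-1)^{i+1}b_{ij}^V\,j+\sum_{i\ge 2,\,j}(-1)^{i+1}b_{ij}^W\,j,
$$
which is exactly the expression obtained by differentiating $Q_{\bf F}(z)$ in~(\ref{eq: 2-sided_resolution}) and evaluating at $z=1$, i.e. $-Q_{\bf F}'(1)={\mathcal N}$. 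This step is pure sign-and-index bookkeeping, and I expect it to be routine.

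The main obstacle I anticipate is showing that the answer is independent of the uniform shift $\theta_i\mapsto\theta_i+c$ introduced only to make $\chi_C(z)$ converge. Running the recursions~(\ref{OrderT}), (\ref{r-alpha})--(\ref{r-alpha-k}), (\ref{L-order0})--(\ref{L-orderk}) shows every generator (field, ghost, antifield, and each higher gauge-structure element) has its differential order raised by the same constant $c$, so the shift acts as $b_{ij}\mapsto b_{i,j-c}$ and therefore
$$
\sum_{i,j}(-1)^{i+1}b_{i,j-c}\,j=\sum_{i,j}(-1)^{i+1}b_{ij}\,j+c\sum_{i,j}(-1)^{i+1}b_{ij}.
$$
I would then prove the correction term vanishes: $\sum_{i,j}(-1)^{i+1}b_{ij}=-Q_{\bf F}(1)$ is (minus) the alternating sum of the ranks of the free modules in~(\ref{eq: 2-sided_resolution}), which equals the alternating sum of ranks of its homology. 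Since the only homology of ${\bf F}$ consists of the modules $\ext^i_P(W,P)$, which by Proposition~\ref{prop: groth} satisfy $\dim\le d-i<d$ and are hence torsion of rank $0$, this alternating rank sum is $0$. This is the one place where genuine homological input enters, and it is what simultaneously guarantees shift-independence and the finiteness of the residue.

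Finally, for the dual statement I would invoke $\chi_B(z)=H_P(z)\,\chi_C(z^{-1})$ from~(\ref{eq:Euler_BRST}). Writing $\ln\chi_B(z)=-d\ln(1-z)+\ln\chi_C(z^{-1})$ and differentiating gives
$$
z\,(\ln\chi_B(z))'=\frac{dz}{1-z}-z^{-1}(\ln\chi_C)'(z^{-1}).
$$
As $z\to 0$ the first term vanishes, while setting $w=z^{-1}\to\infty$ turns the second into $-w\,(\ln\chi_C)'(w)\to-{\mathcal N}$ by the first part; hence $-\lim_{z\to 0}z\,(\ln\chi_B(z))'={\mathcal N}$. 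The prefactor $H_P(z)=(1-z)^{-d}$ is harmless precisely because it contributes no pole at $z=0$, so it drops out of the residue there.
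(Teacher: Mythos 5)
Your proposal is correct and takes essentially the same route as the paper: both proofs differentiate the logarithm of the finite product formula~(\ref{eq: Euler_for_BRST}), obtain $\lim_{|z|\to\infty} z\,l(z) = \sum_{i,j}(-1)^{i+1}j\,b_{ij}$, identify this with $-Q_{\bf F}'(1)={\mathcal N}$ via Theorem~\ref{th:deg_free n graded 2-sided}, and derive the dual statement from $\chi_B(z)=H_P(z)\chi_C(z^{-1})$ by the substitution $w=z^{-1}$, with the $H_P$ factor contributing nothing at $z=0$. The only genuine addition is your explicit verification that the uniform shift $c$ does not change the answer (via the vanishing of the alternating rank sum $Q_{\bf F}(1)$, because the homology modules $\ext^i_P(W,P)$ are torsion by Proposition~\ref{prop: groth}); the paper leaves this implicit, since Theorem~\ref{th:deg_free n graded 2-sided} applies verbatim to the shifted grading, which is still homogeneous, and ${\mathcal N}$ is defined grading-independently.
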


\begin{proof}
From the above formula 
$$ \chi_C(z) = 
\prod_{i,j\in \ZZ} \left( 1-z^j \right)^{(-1)^{i+1} b_{ij}}
$$
we deduce that the logarithmic derivative $l(z)$ is a rational function of the form 
$$
l(z) = \left( \ln \chi_C(z) \right) ' =  -\sum_{i,j\in \ZZ} (-1)^{i+1} b_{ij} \frac{j z^{j-1}}{1-z^j} = 
\sum_{j\in \ZZ}  q_j \frac{z^{j-1}}{1-z^j}, 
$$
where $q_j = j \sum_{i\in \ZZ} (-1)^{i} b_{ij} $ are integers. Since all generators of the BRST complex have positive degrees, we have $b_{ij}=0$ and $q_j = 0$ for all $j\le 0$. Then  
$$
zl(z) = \sum_{j>0}  q_j  \frac{z^j}{1-z^j} 
= \sum_{j>0}  q_j \left( \frac{1}{1-z^j} - 1 \right) = \sum_{j>0}  q_j \frac{1}{1-z^j}
-\sum_{j>0}  q_j.
$$
For $|z| \to \infty$, this formula gives  
$$
\lim_{|z|\to \infty} zl(z) =  -\sum_{j>0}  q_j
= -\sum_{i,j\in \ZZ} j (-1)^{i} b_{ij} 
 = - Q_{\bf F}'(1) ={\mathcal N}.
$$
This proves the first equality.

Now, let us express the same value in terms of the Euler characteristic of the algebraic dual BRST complex $B$. According to~(\ref{eq:Euler_BRST}), 
we re-write the above formula as follows:
$$
{\mathcal N} = 
\lim_{|z|\to \infty} z \left( \ln \chi_C(z) \right) ' 
= - \lim_{|z|\to \infty} z
\left( \ln \frac{\chi_C(z^{-1})}{H_P(z^{-1})} \right)' ,
$$
where $H_P(z^{-1}) = (1-z^{-1})^{-d}$. Then
$$
{\mathcal N} =
- \lim_{|z|\to \infty} z
\left( \ln \left(\chi_C(z^{-1}) (1-z^{-1})^{d} \right) \right)'
= - \lim_{|z|\to \infty} z \left( \ln (\chi_C(z^{-1})\right) '
- \lim_{|z|\to \infty} z\cdot
\left( d \cdot \ln (1-z^{-1})'\right)' ,
$$
where the second summand is equal to
$$
- \lim_{|z|\to \infty} z\cdot
\left( d \cdot \ln (1-z^{-1})\right)'
= -
\lim_{|z|\to \infty} z \cdot
d \cdot \frac{z^{-2}}{1-z^{-1}}
= d \lim_{|z|\to \infty} \frac{z^{-1}}{1-z^{-1}}
 = 0.
 $$

Introduce a new variable $\zeta = z^{-1}$. Using the equality $\frac{d \zeta}{d z} = -z^{-2} = -\zeta^2$, we obtain the following formula for ${\mathcal N}$ (in the two rightmost parts the primes denote the derivatives with respect to $\zeta$):
$${\mathcal N}
 =
 - \lim_{|z|\to \infty} z \left( \ln (\chi_B(z^{-1})\right)'= 
 - \lim_{\zeta \to 0} \zeta^{-1} \left( \ln \chi_B(\zeta) \right)' (-\zeta^2)
= 
\lim_{\zeta \to 0} \zeta \left( \ln \chi_B(\zeta) \right)'.
$$
Up to variable renaming, this is the desired formula for ${\mathcal N}$
in terms of $\chi_B$.
\end{proof}

Thus, we have defined the Euler characteristic of the BRST complex $C$ and its algebraic version $B$ and connect them with the degree of freedom. To define both of them, we uniformly assign large dimensions to the fields. 
Then, we have found that the  degree of freedom is equal to minus the residue at infinity of the BRST complex Euler characteristic logarithmic derivative.  At the same time, it is also equal 
to minus the residue at zero of the algebraic BRST complex Euler characteristic logarithmic derivative. 
For both versions of the BRST complex, the Euler characteristic is the only information needed to find the degree of freedom. 

\begin{exam}[DoF count using Euler characteristic in the Maxwell equations for the strength tensor] 
\label{ex: maxwell via Euler}
Let us construct BRST complex and calculate Euler characteristic for the Maxwell equations. We consider the same   Maxwell equations for the strength tensor as in Example~\ref{ex: maxwell}. 

In addition to the variables $F^{\mu 
\nu}$ for the six fields, we introduce 8 variables $A^*_\mu$
and $\widetilde A^*_\mu$ for the antifields (these antifields correspond in (\ref{Maxwell-eq}) to the equations for $T^\mu_1$ and 
$T^\mu_2$ respectively) and two variables $c^*$
and $\widetilde c^*$ for antighosts (these variables correspond to the gauge identities (\ref{I-Maxwell}) for $T_1$ and $T_2$). Since the complex ${\mathbf F}$ in (\ref{eq: F for Maxwell}) has only three terms and these terms are generated by the above variables, these variables also generate the whole BRST complex.

To define its Euler characteristic, let us assign some positive  differential order  $k_F$ to the field variables  $F^{\mu \nu}$. Then we have 
$$
\ord F^{\mu \nu} = k_F, 
\ord A^*_\mu = \ord \widetilde A^*_\mu=  k_F+1, 
\ord c^* = \ord \widetilde c^*=  k_F+2.
$$
The ghost numbers of these variables are 
$$
\gh F^{\mu \nu} = 0, 
\gh A^*_\mu = \gh \widetilde A^*_\mu=  1, 
\gh c^* = \gh \widetilde c^*=  2.
$$
So, the only nonzero Betti numbers of the complex 
${\mathbf F}$ defined in (\ref{eq: F for Maxwell}) are
$$
b_{0,k_A} = 6, b_{1,k_A+1} = 8,  
b_{2,k_A+2} = 2.
$$
Then we use the equations (\ref{eq: Poincare_for_BRST}) and (\ref{eq: Euler_for_BRST}) to calculate  the Poincare series and the Euler characteristic of the BRST complex as follows:
$$
P_C(t,z) = \frac{(1+tz^{k_F+1})^8}{(1-z^{k_F})^6 (1-t^2z^{k_F+2})^2}, \qquad
\chi_C(z) = P_C(-1,z)=\frac{(1-z^{k_F+1})^8}{(1-z^{k_F})^6 (1-z^{k_F+2})^2}.
$$
For the logarithmic derivative $l(z) = (\ln \chi_C(z))'$, we have
$$
l(z) = \left( \ln  \frac{(1-z^{k_F+1})^8}{(1-z^{k_F})^6 (1-z^{k_F+2})^2} \right) '
= -6\frac{-k_F z^{k_F-1}}{1-z^{k_F}} +
8\frac{-(k_F+1) z^k_F}{1-z^{k_F+1}}
-2\frac{-(k_F+2) z^{k_F+1}}{1-z^{k_F+2}}.
$$
Then the degree of freedom is equal to the limit 
$$
{\mathcal N} =\lim_{|z|\to \infty} z l(z) = 
-6 k_F +8 (k_F+1) -2 (k_F+2) = 4.
$$
We have obtained the same value ${\mathcal N} = 4$.
One can also note that this limit is equal to minus the residue, $ {\mathcal N} = -\Res(l(z), \infty) $.
\end{exam}

\section{Degree of freedom for non-homogeneous systems}

\label{sec: non-homogen}


Recall that the entries $\widehat T_{ai}$  of the matrix $\widehat T$ are polynomials from the ring $P$. 
Recall that we assign some integral degree $\theta_i$ to each unknown function $\phi^i(x)$ (where $i=1, \dots, m$) in system~(\ref{EoMs}). (By default, we put $\theta_i =0$.) 
Then the left-hand side of the $a$-th equation is the sum $\sum_i \widehat T_{ai} \phi^i(x)$. The {\em differential order}
$k_a$ of this equation denotes the maximal degree of a differential polynomial $\widehat T_{ai} \phi^i(x)$ appearing in this sum, that is, $k_a =\max_i (\deg \widehat T_{ai} + \theta_i)$. 

The highest-order homogeneous part of the system is called its  {\em symbol}. In details,  
by a symbol of the 
system~(\ref{EoMs}) we mean the system defined by the matrix $\lt \widehat T
$ ($\lt$ denotes `leading term') where each polynomial entry $\widehat T_{ai}$ is replaced 
its  part $\lt \widehat {T_{ai}} $ of the highest degree  $k_a-\theta_i$.  

The following is another interpretation of the matrix $\lt \widehat T
$. Consider $T$ as a linear map 
$T: F_1\to F_0$ of two free $P$-modules $F_1,F_0$. Assign grading to these free modules by putting $F_1 = \bigoplus_{i=1}^n P(-\theta_i)$
and $F_0 = \bigoplus_{a=1}^m P(-k_a)$. Then $\lt \widehat T$ is the homogeneous degree zero part of the map $T$.

\begin{defi}
	Let us call a matrix $T$ as above and the corresponding system~(\ref{EoMs}) {\em weakly involutive} if
	for each polynomial $p$ such that the equation 
	$p=0$ is a consequence of the system~(\ref{EoMs}),
	its higher homogeneous part $\lt p$ gives the equation $\lt p = 0$ which is a consequence of the symbol system defined by the matrix 
	$\lt \widehat T $.	 
\end{defi}

The main examples of weakly involutive systems are the following.

1. Homogeneous systems. By definition, the system is homogeneous if $\lt \widehat T = \widehat T$. Then the system is obviously weakly involutive.

2. Involutive systems (see~\cite{Seiler} for the definitions and discussion on the subject). If the system is involutive then it is weakly involutive for the degrees $\theta_i=0$ (this follows from \cite[Section 7.2]{Seiler}).
Still, there exist weakly involutive systems that are not involutive. For example, the system~$u_{xx}=0, u_{yy}=0$ is homogeneous (hence, involutive) but not  involutive~\cite[Example 7.2.2]{Seiler}. 

3. Suppose that the collection~$\{ T_a | a= 1, \dots , n\}$ of the system's left-hand parts forms a Groebner basis of some submodule in the free module $F_0 = P^m$
for some degree-compatible order. (For the definition and introduction to the theory of Groebner bases in polynomial modules, we refer the reader to~\cite{groebner} ).

Note that for each system of the form~(\ref{EoMs}), there exist both equivalent finite involutive system and a finite system such that its right-hand parts form a Groebner basis (say, for the degree-lexicographical order)~\cite{Seiler, groebner}. Either of the new systems is weakly involutive. Therefore, we obtain

\begin{theorem}
	\label{th: equivalent_involutive}
	For each system~(\ref{EoMs}) there exists an equivalent weakly involutive system.  
\end{theorem}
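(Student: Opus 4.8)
The plan is to pass from the original generating set $\{T_a\}$ to a Groebner basis of the module of differential consequences, which will automatically be both equivalent and weakly involutive. Concretely, the left-hand sides $T_a = (\widehat T_{a1}, \dots, \widehat T_{am})$ are $n$ elements of the graded free module $F_0 = \bigoplus_{i} P(-\theta_i)$, and the submodule $M = \im T^* \subseteq F_0$ they generate is exactly the set of left-hand sides of all differential consequences of the system. First I would fix a module monomial order on $F_0$ that is \emph{degree-compatible}, i.e. that refines the grading by total differential order (for instance a degree-lexicographic order twisted by the shifts $\theta_i$). Since $P$ is Noetherian, Buchberger's algorithm produces a finite Groebner basis $G = \{ g_1, \dots, g_s \}$ of $M$ for this order (see \cite{groebner}).

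Next I would check that the new system $g_b \cdot \phi = 0$ (for $b = 1, \dots, s$) is equivalent to~(\ref{EoMs}) in the sense of~(\ref{TTprime}). This is immediate: $G$ and $\{T_a\}$ generate the same submodule $M$, so each $g_b$ is a $P$-linear combination of the $T_a$ and each $T_a$ is a $P$-linear combination of the $g_b$; a $P$-linear combination is precisely the application of a constant-coefficient differential operator, which is what~(\ref{TTprime}) requires. Finiteness of $G$ guarantees the new system is again of the form~(\ref{EoMs}).

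The heart of the matter is weak involutivity of the Groebner system, which is the content of item~3 in the list of main examples above; I would establish it as follows. For a degree-compatible order, any $p \in M$ admits a standard representation $p = \sum_b h_b g_b$ in which the division algorithm never raises the degree, so $\deg(h_b g_b) \le \deg p$ for every $b$ with $h_b \ne 0$. Writing $d_0 = \deg p$ and using that $P$ is an integral domain (so that $\lt(h_b g_b) = \lt(h_b)\lt(g_b)$ with no internal cancellation), the degree-$d_0$ component of the right-hand side equals $\sum_b \lt(h_b)\lt(g_b)$, where the sum ranges over those $b$ with $\deg(h_b g_b) = d_0$, the remaining terms contributing only in lower degree. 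Hence $\lt p = \sum_b \lt(h_b)\lt(g_b)$, which exhibits the symbol $\lt p$ as a differential consequence of the symbol system defined by $\lt \widehat T$, i.e. by the symbols $\lt g_b$. This is precisely the defining condition of weak involutivity, and combining the three steps yields an equivalent weakly involutive system.

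The step I expect to be the genuine obstacle is matching the Groebner/leading-monomial formalism to the notion of \emph{symbol} used in the paper, namely the highest homogeneous part with respect to the shifted grading in which $g_b = T_b$ has degree $k_b$ and the entry $\lt \widehat T_{ai}$ has degree $k_a - \theta_i$, cf.~(\ref{OrderT}). The care needed is to work throughout in the graded free module $F_0 = \bigoplus_i P(-\theta_i)$ and to ensure that the chosen term order refines this $\theta_i$-shifted total-degree grading, so that ``leading monomial of highest degree'' and ``symbol'' coincide; once this grading bookkeeping is set up correctly, the degree-bound argument above is routine. As the remark preceding the theorem notes, one may instead invoke the existence of an equivalent finite involutive system, which is weakly involutive for $\theta_i = 0$ by \cite[Section 7.2]{Seiler}, giving a second route to the same conclusion.
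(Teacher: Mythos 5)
Your proposal is correct and takes essentially the same route as the paper: the paper's proof of Theorem~\ref{th: equivalent_involutive} simply invokes the existence of an equivalent system whose left-hand sides form a Groebner basis for a degree-compatible order (or, alternatively, an equivalent involutive system in the sense of~\cite{Seiler}), together with the assertion, stated as item~3 in its list of weakly involutive systems, that such Groebner systems are weakly involutive. Your argument follows exactly this Groebner route and in addition supplies, via standard representations and the degree-compatibility of the term order, the proof of that weak-involutivity claim which the paper leaves unproved.
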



For an element $t$ of a free graded module $F$, let us denote here by $\lt t$ its highest homogeneous part. For a submodule $W\subset F$, by $\lt W$ we denote the homogeneous submodule generated by all elements $\lt t$ for $t\in F$.
Note that $\lt{\im \widehat T} \supset  \im \lt \widehat T$
and $\lt {\ker \widehat T} \subset \ker \lt \widehat T$.

\begin{prop}
	For a matrix $T$ and the fixed degrees $\theta_i$ as above, 
the following conditions are equivalent: 

(i) $T$ is weakly involutive;

(ii) $\lt {\im \widehat T} = \im \lt \widehat T$;

(iii) $\lt {\ker \widehat T} = \ker \lt \widehat T$.
\end{prop}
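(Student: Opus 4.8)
The plan is to recast everything through the associated graded construction for the filtration of $P$ (and of the free modules $F_1,F_0$) by total degree. Under this reading the symbol matrix $\lt\widehat T$ is exactly the associated graded map $\gr\widehat T$, and for any submodule $S$ of a free module one has $\gr S=\lt S$ and $\gr(F/S)=F/\lt S$. The two always-valid inclusions noted just before the proposition, $\im\lt\widehat T\subseteq\lt\im\widehat T$ and $\lt\ker\widehat T\subseteq\ker\lt\widehat T$, are then the statement that $\gr$ only half-commutes with taking image and kernel. I would prove the proposition by establishing (i)$\Leftrightarrow$(ii) essentially by unwinding definitions, and (ii)$\Leftrightarrow$(iii) through the single notion of \emph{strictness} of the filtered map $\widehat T$.

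For (i)$\Leftrightarrow$(ii): the module of all consequences of the system is, by construction, the submodule $\im\widehat T$ generated by the left-hand sides, and an equation $p=0$ is a consequence of the symbol system precisely when $\lt p\in\im\lt\widehat T$. Thus weak involutivity says exactly that $\lt p\in\im\lt\widehat T$ for every $p\in\im\widehat T$. Since the elements $\lt p$ (for $p\in\im\widehat T$) generate $\lt\im\widehat T$ and $\im\lt\widehat T$ is a submodule, this is equivalent to the inclusion $\lt\im\widehat T\subseteq\im\lt\widehat T$; combined with the reverse inclusion that always holds, this is literally condition (ii). So this equivalence is bookkeeping once the consequence module is identified with $\im\widehat T$ and the symbol consequences with $\im\lt\widehat T$.

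The substance is (ii)$\Leftrightarrow$(iii). I would introduce strictness: $\widehat T$ is strict if $\widehat T((F_1)_{\le s})=\im\widehat T\cap(F_0)_{\le s}$ for all $s$, i.e.\ every element of the image admits a preimage of the expected degree, with no degree drop. The claim is (ii)$\Leftrightarrow$strict$\Leftrightarrow$(iii), each shown by a leading-term induction. For (ii)$\Rightarrow$strict: given $b\in\im\widehat T$, write $\lt b=\lt\widehat T(c)$ with $c$ homogeneous of matching degree (possible by (ii)), replace $b$ by $b-\widehat T(c)$, which lies in $\im\widehat T$ and has strictly smaller degree, and descend. For strict$\Rightarrow$(iii): if $u$ is homogeneous with $\lt\widehat T(u)=0$, then $\widehat T(u)$ has dropped degree, strictness produces $w$ of that lower degree with $\widehat T(w)=\widehat T(u)$, and $u-w\in\ker\widehat T$ has leading term $u$, so $u\in\lt\ker\widehat T$; this yields $\ker\lt\widehat T\subseteq\lt\ker\widehat T$, the nontrivial inclusion. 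The reverse implications run the same way. Throughout one uses that $P$ is a domain, so $\lt(pq)=\lt p\,\lt q$ and leading terms never cancel under multiplication.

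The main obstacle is precisely this degree-drop bookkeeping in (ii)$\Leftrightarrow$(iii): the induced (submodule) and quotient filtrations on $\im\widehat T$ differ exactly by such jumps, so a naive additivity-of-Hilbert-series argument fails (the associated graded of the short exact sequence $0\to\ker\widehat T\to F_1\to\im\widehat T\to 0$ is not strict unless these conditions already hold). Organizing the induction on degree so that each correction step strictly lowers degree, and checking that the corrected elements genuinely lie in the kernel (resp.\ image) rather than merely carrying the right leading term, is where care is needed; the domain property of $P$ is what keeps the leading terms under control at every step.
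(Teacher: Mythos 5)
The paper states this proposition bare, with no proof at all (it follows immediately after the definition of weak involutivity and the remark that $\im \lt \widehat T \subset \lt {\im \widehat T}$ and $\lt {\ker \widehat T} \subset \ker \lt \widehat T$ always hold), so there is no argument of the authors' to compare yours against; judged on its own, your proof is correct and supplies the missing argument. Your reduction of (i) $\Leftrightarrow$ (ii) is indeed pure bookkeeping: weak involutivity says exactly that the generators $\lt p$, $p \in \im \widehat T$, of $\lt{\im \widehat T}$ lie in the submodule $\im \lt \widehat T$, which together with the automatic reverse inclusion is (ii). Routing (ii) $\Leftrightarrow$ (iii) through strictness of the filtered map, $\widehat T\bigl((F_1)_{\le s}\bigr)=\im \widehat T\cap (F_0)_{\le s}$ for all $s$, is the natural organization, and your two descending inductions ((ii) $\Rightarrow$ strict by subtracting $\widehat T(c)$; strict $\Rightarrow$ (iii) by correcting $u$ with a lower-degree preimage $w$) are sound, terminating because degrees in the finitely generated free modules $F_0,F_1$ are bounded below. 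The one item you should write out instead of leaving inside ``the reverse implications run the same way'': for (iii) $\Rightarrow$ strict you pass from $\lt c \in \ker \lt \widehat T = \lt{\ker \widehat T}$ to an \emph{actual} kernel element $k$ with $\lt k = \lt c$, and this rests on the lemma that every nonzero homogeneous element of $\lt W$, for $W$ a submodule of a graded free module, is the leading term of an element of $W$: writing it as $\sum_j f_j\, \lt w_j$ with $f_j$ homogeneous of complementary degrees, the element $\sum_j f_j w_j \in W$ has precisely this leading term, since all other contributions lie in strictly lower degree and cannot cancel it. This is the precise content of your own warning about elements ``merely carrying the right leading term''; note that it uses only the grading, not the domain property of $P$ (which enters only through degree additivity, $\lt(f k)=f\,\lt k$ for homogeneous $f\neq 0$). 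With that lemma stated, all implications (i) $\Leftrightarrow$ (ii) $\Leftrightarrow$ strict $\Leftrightarrow$ (iii) close, and your proof is complete.
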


Note that if  $\widehat T: F_1 \to F_0$  is a map of two graded modules 
 $F_1 = \bigoplus_{j=1}^n P(-t_j)$
and $F_0 = \bigoplus_{i=1}^m P(-d_i)$, then $\widehat{T}^*: F_0^* \to F_1^*$ 
is naturally a  map of graded modules with dual grading,
$F_0^* = \bigoplus_{i=1}^m P(d_i)$ and $F_1^* = \bigoplus_{j=1}^n P(t_j)$.

\begin{defi}
\label{def: 2_weak_inv}
	Let us call the system~(\ref{EoMs}) and the matrix $\widehat T$ doubly weakly involutive if both $\widehat{T}$ and $\widehat{T}^*$ are weakly involutive with respect to a pair of mutually dual gradings and, in addition, $(\lt T)^* = \lt (T^*)$. 
\end{defi}

\begin{exam}
	\label{ex:hom_system_is_involutive}
	Suppose the system~(\ref{EoMs}) is homogeneous 
	with some choice of degrees $\deg \phi^i = \theta_i$. Then it is involutive by definition. Moreover, the conjugate system 
	in the conjugate variables ${\phi^*}^1, \dots , {\phi^*}^m$ 
	became homogeneous, if we put $\deg {\phi^*}^i = -\theta_i$.
	So, each homogeneous system is doubly weakly involutive. 
\end{exam}

\begin{exam}
	\label{ex: 1-equation_involutive}
	Suppose that the system~(\ref{EoMs}) consists of a single equation, $m=1$. 	
	Then $\widehat{T}$ is weakly involutive, while $\widehat{T}^*$ is generally not. Now, 
	let $G=(g_1, \dots, g_q)$ be a Groebner basis of the ideal $I$ generated by the entries of $\widehat{T}$ (for a degree-compatible ordering of monomials). If we replace the row matrix $\widehat{T} = (T_{11}, \dots , T_{1m})$ by the row $G$, 
	the system become doubly weakly involutive. The conjugate system defined by $G^*$ is equivalent to the system defined by $\widehat{T}^*$, so, the degrees of freedom does not change, $\nfree_{\widehat{T}} = \nfree_{G}$ and $\nfree_{\widehat{T}^*} = \nfree_{G^*}$.	Thus, if the system consists of a single equation, it is equivalent to a doubly weakly involutive one.  
\end{exam}

\begin{prop}
	Suppose that the system~(\ref{EoMs}) is double involutive. Then its degree of freedom $\nfree_{\widehat{T}}$ 
	is equal to the degree of freedom for the homogeneous system defined by the matrix $\lt \widehat{T}$. In particular, this degree of freedom can be calculated by Theorem~\ref{th:deg_free n graded 2-sided} applied to the two-sided complex ${\bf F}_{\lt \widehat{T}}$ of the form~(\ref{eq: 2-sided_resolution})  constructed by the matrix $\lt \widehat{T}$. 
\end{prop}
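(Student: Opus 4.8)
The plan is to funnel everything through the single invariant $e(\ext^1_P(W,P),d-1)$, which by Theorem~\ref{th: deg_free_via_Ext} equals $\nfree_{\widehat T}$ for the module $W=\Coker\widehat T$, and then to re-express this number using only the Hilbert series of the two modules $W$ and $W^*=\Hom_P(W,P)=\ker\widehat T^*$. The reason to isolate this pair is that the doubly weakly involutive hypothesis (Definition~\ref{def: 2_weak_inv}) controls precisely these two Hilbert series when one passes to the symbol system, even though it says nothing directly about the module $\ext^1_P(W,P)$ itself.

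The key step is the identity
$$\nfree=e(\ext^1_P(W,P),d-1)=-Q_W'(1)-Q_{W^*}'(1),$$
valid for an arbitrary finitely generated (possibly filtered, i.e.\ non-homogeneous) $P$-module $W$, where $Q_M(z)$ is the numerator of the Hilbert series $H_M(z)=Q_M(z)(1-z)^{-d}$, understood as that of the associated graded in the filtered case. To prove it I would take any free resolution $F_\bullet\to W$, dualize, and compute the Euler characteristic of $\Hom_P(F_\bullet,P)$; since each $F_i^*$ is free and the alternating sum of Hilbert series is preserved on passing to cohomology (for a filtered complex this is the invariance of the Euler characteristic across the pages of its spectral sequence), this gives
$$\sum_i(-1)^iQ_{\ext^i_P(W,P)}(z)=\sum_i(-1)^iQ_{F_i}(z^{-1})=Q_W(z^{-1}).$$
By Proposition~\ref{prop: groth} we have $\dim\ext^i_P(W,P)\le d-i$, so $(1-z)^2\mid Q_{\ext^i}(z)$ and hence $Q_{\ext^i}'(1)=0$ for $i\ge2$. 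Differentiating the displayed identity at $z=1$, using $\ext^0_P(W,P)=W^*$ and $e(\ext^1_P(W,P),d-1)=-Q_{\ext^1}'(1)$ (the latter by Proposition~\ref{prop:Q and multiplicity}, since $\dim\ext^1_P(W,P)\le d-1$), yields the identity, and with $M=W$ gives $\nfree_{\widehat T}=-Q_W'(1)-Q_{W^*}'(1)$.

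Next I would identify the associated graded modules from the hypothesis. Weak involutivity of $\widehat T$ gives $\lt(\im\widehat T)=\im(\lt\widehat T)$, so $\gr W=P^n/\lt(\im\widehat T)=\Coker(\lt\widehat T)=W_{\lt\widehat T}$. Weak involutivity of $\widehat T^*$ gives $\lt(\ker\widehat T^*)=\ker\lt(\widehat T^*)$, and combined with $(\lt T)^*=\lt(T^*)$ this yields
$$\gr(W^*)=\lt(\ker\widehat T^*)=\ker\lt(\widehat T^*)=\ker(\lt\widehat T)^*=(W_{\lt\widehat T})^*.$$
Since the Hilbert series of a filtered module coincides with that of its associated graded, $Q_W=Q_{W_{\lt\widehat T}}$ and $Q_{W^*}=Q_{(W_{\lt\widehat T})^*}$. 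Substituting into the identity of the previous paragraph, and applying the same identity to the homogeneous module $W_{\lt\widehat T}$, I obtain
$$\nfree_{\widehat T}=-Q_{W_{\lt\widehat T}}'(1)-Q_{(W_{\lt\widehat T})^*}'(1)=\nfree_{\lt\widehat T};$$
as $\lt\widehat T$ is homogeneous, the right-hand side is computed by Theorem~\ref{th:deg_free n graded 2-sided} applied to $\mathbf F_{\lt\widehat T}$.

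The main obstacle is the identity of the second paragraph in the filtered setting. The naive expectation $\nfree=-Q_U'(1)$ with $U=\ext^1_P(W,P)$ is true but useless here, because $Q_U$ itself is \emph{not} stable under passing to the associated graded --- the codimension-one multiplicity of $W$ is genuinely not a function of $Q_W$ alone. What saves the argument is re-expressing $Q_U'(1)=Q_W'(1)+Q_{W^*}'(1)$ through the pair $(Q_W,Q_{W^*})$, which \emph{is} stable. The two delicate points to verify are that the degree-by-degree Euler characteristic identity survives for merely filtered modules, despite $\ext$ failing to commute with $\gr$, and that the higher $\ext^i$ with $i\ge2$ contribute nothing to the derivative at $z=1$ by virtue of the Grothendieck dimension bound.
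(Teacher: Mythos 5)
Your overall architecture is genuinely different from the paper's and, at the level of strategy, attractive: you reduce everything to the module-level identity $e(\ext^1_P(W,P),d-1)=-Q_W'(1)-Q_{W^*}'(1)$, whose right-hand side involves only the pair $(Q_W,Q_{W^*})$, and you correctly show (this part is fine, and parallels the paper's use of the hypothesis) that double weak involutivity makes this pair stable under passage to the symbol, via $\gr W=\Coker\lt\widehat T$ and $\gr(W^*)=\lt(\ker\widehat T^*)=\ker(\lt\widehat T)^*$. Your proof of the identity in the graded case is also correct. The genuine gap is the filtered case of the identity, which is the entire content of the proposition, and the principle you invoke there is false. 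For a filtered (as opposed to graded) complex, the degree-by-degree Euler characteristic is \emph{not} preserved on passing to homology: the differentials on the pages $E_r$, $r\ge 1$, of the spectral sequence shift the internal (filtration) degree by $r$, so each page changes the Euler characteristic by terms of the form $(1-z^{\pm r})\cdot(\mbox{Hilbert series of } \im d_r)$. For the same reason, your other equality $\sum_i(-1)^iQ_{F_i}(z)=Q_W(z)$ fails for an arbitrary filtered free resolution, because $\gr F_\bullet$ need not be exact. A counterexample to both: over $P=\CC[x]$, the complex
\[
0\to P(-2)\xrightarrow{\ \binom{1+x}{-x}\ }P(-1)^2\xrightarrow{\ (x,\ 1+x)\ }P\to 0
\]
is exact, hence a filtered free resolution of $W=0$, yet $\sum_i(-1)^iQ_{F_i}(z)=(1-z)^2\ne 0=Q_W(z)$; its associated graded complex (with differentials given by the leading terms $(x,-x)^*$ and $(x,x)$) has nonzero homology in two positions. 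So the displayed chain of equalities in your second paragraph is simply not available, and nothing in your argument bounds the resulting error terms: for a badly chosen resolution the homology of $\gr F_\bullet$ can have dimension $d$, in which case the corrections can contribute at exactly the order (first derivative at $z=1$) that your conclusion depends on.

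What is true is exactly the identity you need, but proving it requires the two ingredients you tried to bypass, and they are precisely the paper's proof. First, the resolution must be chosen so that its associated graded is still a resolution of $\gr W$; this is the Rossi--Sharifan/Robbiano lifting theorem~\cite{SR09} invoked by the paper, and only for such a resolution does $\gr(F_\bullet^*)$ compute $\ext^\bullet(\gr W,P)$. Second, the spectral-sequence corrections relating $\ext^\bullet(\gr W,P)$ to $\gr\ext^\bullet(W,P)$ do not vanish, but they can be bounded: all higher differentials have images inside subquotients of $\ext^j(\gr W,P)$ with $j\ge 1$, whose dimension is at most $d-1$ by Proposition~\ref{prop: groth} applied to the \emph{symbol} module $\gr W$, so after normalizing by $(1-z)^d$ each correction is divisible by $(1-z)^2$ and does not affect the first derivative at $z=1$. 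This quantitative step is the content of the paper's Proposition~\ref{prop:Hilbert_pols_for_complexes} (the Flenner--O'Carroll--Vogel estimate), which the paper applies to its two-sided complex together with the asymptotics of the Hilbert functions of the homology modules. Once you supply these two points, your identity holds for filtered modules and your argument closes; but at that stage it has become a repackaging of the paper's proof rather than a route around its hard part.
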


\begin{proof}
Consider the complexes ~(\ref{eq: our M resol}) and~(\ref{eq: resol for W}) constructed by the matrices $\lt \hat T$ and $(\lt T)^* = \lt T^*$ in place of $T$ and $T^*$. 
These complexes form graded resolutions ${\mathbf F}^{\gr V}$ and ${\mathbf F}^{\gr W}$ for the associated graded modules 
to $V$ and $W$. By~\cite[Theorem 1.8]{SR09} (this theorem is attributed to Robbiano, 1981; the proof in {\em Op.cit.}
does not use the locality assumption), 
there exist free resolutions ${\mathbf F}^{ V}$ and ${\mathbf F}^{W}$ for the modules $V$ and $W$ such that their associated graded complexes are the graded resolutions above, 
that is, $\gr {\mathbf F}^{ V} ={\mathbf F}^{\gr V} $ and $\gr {\mathbf F}^{W} = {\mathbf F}^{\gr W}$.
Then the associated graded complex $\gr {\mathbf F} $ to the two-sided complex ${\mathbf F}$ defined in~(\ref{eq: 2-sided_resolution}) is the analogous complex  $ {\bf F}_{\lt \widehat{T}} $ 
constructed by the graded resolutions ${\mathbf F}^{\gr V}$ and ${\mathbf F}^{\gr W}$,
$$
	\gr {\mathbf F} =  {\bf F}_{\lt \widehat{T}} :	0\to F_d^{\gr V}	\to\dots \to  F_1^{\gr V} \stackrel{\lt T^*}{\to} F_0^{\gr V} \stackrel{R^*}{\to} (F_2^{\gr W})^* \to \dots \to (F_d^{\gr W})^* \to 0.
$$

Recall that the only nonzero homologies of the complexes  
${\mathbf F}$ and $\gr {\mathbf F}$ may occur in the rightmost terms beginning with the term $F_0^{V}$ (respectively, $F_0^{(\gr\!) V}$). These homologies are 
$H_0({\mathbf F}) = \ext^1_P(W, P)$ (resp., $H_0(\gr {\mathbf F}) = \ext^1_P(\gr W, P)$) in the zero  term $F_0^{-}$ and  $H_{1-k}( {\mathbf F}) = \ext^k_P(W, P)$ (resp., $H_{1-k}(\gr {\mathbf F}) = \ext^k_P(\gr W, P)$) in terms $F_k^{(\gr\!) W})^*$, $k = 2, \dots, d$, where $(\gr\!)$ denotes $\gr$ for $\gr {\bf F}$ and the empty sign for${\mathbf F}$.  By  Proposition~\ref{prop: groth}, the modules $H_0(-)$ here have dimensions at most $d-1$,
while the dimensions of the modules $H_{1-k}(-)$ do not exceed $d-2$  for $k\ge 2$. It follows from Proposition~\ref{prop:dim_n_mult} that for large $N$,
$h_{H_{1-k} (\gr\!) {\mathbf F}} (N) = o(N^{d-2}) $ if $k\ge 2$
and 
$$
h_{H_{0} (\gr\!) {\mathbf F}} (N) = \frac{e}{(d-2)!}N^{d-2} + o(N^{d-2}),$$
where  
$$e = e(H_0((\gr\!) {\mathbf F}), d-1) = e(\ext^1_P((\gr\!) W, P), d-1).$$
By Theorem~\ref{th: deg_free_via_Ext}, this number $e$ is equal to the degree of freedom ${\mathcal N}_{\lt \widehat T} $
of the system defined by the matrix $\lt \widehat T$
for the complex $\gr {\bf F}$ and
to the degree of freedom 
${\mathcal N} = {\mathcal N}_{\widehat T} $
for the complex ${\bf F}$.

The Euler characteristic of the  $N$-th graded component of the complex ${\mathbf F}$ is equal to
$$
\chi_{\gr {\mathbf F}} (N) = \sum_{k=0}^d (-1)^k h_{  F_k^{\gr V}}(N) - \sum_{k=2}^d (-1)^k  h_{\left( F_k^{\gr W} \right)^*}(N)
= \sum_{k=-d}^d (-1)^k 
h_{   H_k (\gr {\mathbf F}) }(N). 
$$
So, the above  Euler characteristic is asymptotically equal to 
$$
\chi_{\gr {\mathbf F}} (N) = \frac{{\mathcal N}_{\lt \widehat T}}{(d-2)!}N^{d-2} + o(N^{d-2}).
$$

Now, we are able to apply Proposition~\ref{prop:Hilbert_pols_for_complexes} to the complex ${\mathbf F}$. In the notation of this proposition, we have 
$\delta  = d-1$. So, we deduce that 
$$
Q(N) :=  \sum_{k=0}^d (-1)^k h_{  F_k^{ V}}(N) - \sum_{k=2}^d (-1)^k  h_{(F_k^{ W})^*}(N) - \sum_{k=-d}^d (-1)^k h_{   H_k ( {\mathbf F}) }(N) 
$$ 
is a polynomial of degree at most $d-2$ for large $N$. 
Consider the first two summands here. Since 
$h_{  F_k^{ V}}(N) = h_{  \gr F_k^{ V}}(N)$ and 
$h_{( F_k^{ W})^*}(N) = h_{(  \gr F_k^{ W})^*}(N)$
by the definition of the Hilbert function, the sum of the first two summands is equal to 
$$
 \sum_{k=0}^d (-1)^k h_{  F_k^{ V}}(N) - \sum_{k=2}^d (-1)^k  h_{(F_k^{ W})^*}(N) =  \sum_{k=0}^d (-1)^k h_{  \gr F_k^{ V}}(N) - \sum_{k=2}^d (-1)^k  h_{(\gr F_k^{ W})^*}(N)
 = \chi_{\gr {\mathbf F}} (N)$$
 $$ =  \frac{{\mathcal N}_{\lt \widehat T}}{(d-2)!}N^{d-2} + o(N^{d-2}).
$$
At the same time, the third term here is equal to 
$$
-\sum_{k=-d}^d (-1)^k 
h_{   H_k ( {\mathbf F}) }(N) = 
-\left( h_{   H_0 ( {\mathbf F}) }(N) 
+ \sum_{k=-d}^{-1} (-1)^k
h_{   H_k ( {\mathbf F}) }(N) \right) 
= -\frac{{\mathcal N}_{ \widehat T}}{(d-2)!}N^{d-2} + o(N^{d-2}).
$$
Substituting the terms into the above formula for $Q(N)$, we get the asymptotic equality
$$
\frac{{\mathcal N}_{\lt \widehat T}}{(d-2)!}N^{d-2} 
- \frac{{\mathcal N}_{ \widehat T}}{(d-2)!}N^{d-2}+ o(N^{d-2}) = o(N^{d-2}).
$$
Thus, we deduce that ${\mathcal N}_{\lt \widehat T} = {\mathcal N}_{ \widehat T}$.
\end{proof}

\begin{exam}[Degree of freedom  count for massive spin 2 field]

\label{ex: massive spin 2}
Consider traceless rank two symmetric tensor field in Minkowski space
\begin{equation}\label{S2}
  S^{\mu\nu}=S^{\nu\mu}\,, \quad \eta_{\mu\nu}S^{\mu\nu}=0 \, .
  \end{equation}
 This field describe the irreducible massive spin two if it obey the system of equations
 \begin{equation}\label{MassiveSpin2}
  T^{\mu\nu}\equiv (\Box +m^2) S^{\mu\nu} =0,\, \quad T^\mu\equiv \partial_\nu S^{\mu\nu} = 0, \, \quad \Box=\eta^{\mu\nu}\partial_\mu\partial_\nu \, .
 \end{equation}
Put  $d=4$. Then there are  nine inhomogeneous equations  $T^{\mu\nu}$ of the second order, and four equations $T^\mu$ of the first order, $n_2=9,\, n_1=4$.
 There are four identities of there third order, $l_3=3$,
 \begin{equation}\label{IdS2}
   \partial_\mu T^{\mu\nu}- (\Box +m^2) T^\mu\equiv 0 \, .
 \end{equation}
 These identities are irreducible. The equations are not gauge invariant.

 The direct calculation (using Macaulay2) shows that  there are no further identities. Moreover, the matrix 
 $\lt \widehat T$ in this case is the analogous matrix $\widetilde T$ corresponding to the same system with $m=0$ (massless spin 2 field). Direct calculations show
 that the system is doubly weakly involutive in our case. Thus, we can apply the formula from Theorem~\ref{th:deg_free n graded 2-sided}.
Then the degree of freedom count reads,
\begin{equation}\label{DoFMassiveS2}
  N_{DoF}=9\cdot 2+ 4\cdot 1-4\cdot 3= 10 \, .
\end{equation}
 This is the correct degree of freedom number for the massive spin 2 field by the phase space count.
 \end{exam}

Let us call two systems of the form~(\ref{EoMs}) (or the corresponding matrices $\widehat{T}^0, \widehat{T}^1$) elementary equivalent if 
either the systems or their conjugates are equivalent; notation: $\widehat{T}^0\equiv \widehat{T}^1$.
 Moreover, we call two systems defined by matrices $\widehat{T}$ 
and $\widehat{T}'$ weakly equivalent if there is a collection of subsequently equivalent matrices $\widehat{T}=\widehat{T}^0 \equiv \widehat{T}^1\equiv \dots \equiv \widehat{T}^q = \widehat{T}'$.

 \begin{prop}
 	\label{prop: deg_free_invariant}
 	The degree of freedom 
 	$\nfree_C$ and the degree of freedom of the conjugate system 
 	$\nfree_{C'}$ are not changed if we replace the system with an equivalent one. 
 \end{prop}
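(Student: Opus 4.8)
The plan is to reduce the invariance of both quantities to one structural fact about the algebraic content of equivalence. Two systems $\widehat T$ and $\widehat T'$, which share the same fields $\phi^1, \dots, \phi^m$ but may have different numbers of equations $n \ne n'$, are equivalent exactly when their row modules in $P^m$ coincide. Indeed, the relations $\widehat{D}^a_{a'} T_a = T_{a'}$ and ${}'\!\widehat{D}_a^{a'} T_{a'} = T_a$ of~(\ref{TTprime}) say precisely that each row of $T'$ is a $P$-combination of rows of $T$ and conversely; since the rows of $T$ generate $\im T^* \subseteq P^m$, this is the same as the equality $\im T^* = \im (T')^*$ of submodules of $P^m$. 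I would record this as the first step, stressing that $m$ stays fixed while the ambient free module $P^n$ of the cokernel may vary.

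For $\nfree_C$ I would argue straight from the Einsteinian definition~(\ref{eq:deg_free_definition}). The solution space $\Sigma = \ker \widehat T$ is unchanged for equivalent systems (their solution sets coincide), and the trivial solutions $\Sigma'$ are obtained by applying the gauge generators to arbitrary $\epsilon$; these generators span the gauge module $\ker T^P$, which equals the orthogonal complement $(\im T^*)^\perp$ with respect to the standard symmetric pairing on $P^m$ and so depends only on $\im T^*$. Hence $h_\Sigma$, $h_{\Sigma'}$, and the defining limit of $\nfree_C$ are all preserved. Equivalently --- the version I would foreground, as it matches Theorem~\ref{th: deg_free_via_Ext} --- I would use the identification $U = \ext^1_P(W,P) = \ker R^*/\im T^*$ of Proposition~\ref{prop:U = ext} together with $\ker R^* = (\im R)^\perp = (\ker T^P)^\perp = (\im T^*)^{\perp\perp}$, so that
$$
\ext^1_P(\Coker T, P) \;=\; (\im T^*)^{\perp\perp}\big/\im T^*
$$
is reconstructible from $\im T^*$ alone. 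Since equivalence preserves $\im T^*$, the multiplicity $e(U, d-1) = \nfree_C$ is preserved.

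For the conjugate degree of freedom $\nfree_{C'} = e(\ext^1_P(\Coker T^\dagger, P), d-1)$ I would conjugate the two equivalence relations. Because $(AB)^\dagger = B^\dagger A^\dagger$ (formal conjugation reverses the order of factors and flips the sign of every $\partial_\mu$, see~(\ref{O-dagger})), the left multiplications $T' = \widehat D\, T$ and $T = {}'\!\widehat D\, T'$ turn into right multiplications $(T')^\dagger = T^\dagger \widehat D^\dagger$ and $T^\dagger = (T')^\dagger\, {}'\!\widehat D^\dagger$. Right multiplication does not change the image, so $\im T^\dagger = \im (T')^\dagger$ inside $P^m$; consequently $\Coker T^\dagger = P^m/\im T^\dagger$ is literally the same $P$-module for $T$ and $T'$, and hence so is its first $\ext$ module. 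This shows that $\nfree_{C'}$ is unchanged. Finally, since both $\nfree_C$ and $\nfree_{C'}$ are invariant under ordinary equivalence, they are invariant under elementary equivalence (where one is free to replace either the system or its conjugate) and therefore under weak equivalence.

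The step I expect to be the main obstacle, and the one I would stress, is the asymmetry concealed in the phrase ``replace the system by an equivalent one.'' The cokernel $W = \Coker T$ is genuinely not preserved by equivalence, since $n$ and with it $P^n$ can change; only the finer invariant $\im T^* \subseteq P^m$ survives, and the argument for $\nfree_C$ hinges on recognizing that $\ext^1_P(W,P)$, unlike $W$ itself, is recoverable from $\im T^*$ through the double-orthogonal formula above. The conjugate case is dual in a precise sense: there equivalence acts by right multiplication on $T^\dagger$, so the cokernel is preserved on the nose and no orthogonality bookkeeping is needed.
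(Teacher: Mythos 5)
Your proof is correct, but it takes a genuinely different route from the paper's. The paper reduces the proposition to two lemmata about elementary matrix moves: adjoining zero rows and columns (the free resolution of $\Coker \widehat T'$ splits as the direct sum of the resolution of $W$ and a short acyclic complex, so $\ext^1_P(W',P)\cong\ext^1_P(W,P)$), and adding to a row or column a $P$-linear combination of the others (which preserves the isomorphism classes of $\Coker \widehat T$ and $\Coker \widehat T^*$); invariance then follows because any equivalence factors into such moves --- a factorization the paper leaves implicit (one passes through the stacked matrix whose rows are those of both systems). You instead isolate the complete invariant of equivalence, the row module $\im T^*\subset P^m$, and show that both quantities factor through it: $\nfree_C$ because $U=\ext^1_P(W,P)=\ker R^*/\im T^* = (\im T^*)^{\perp\perp}/\im T^*$ (your identities $\ker T^P=(\im T^*)^\perp$ and $\ker R^*=(\im R)^\perp$ are correct for the standard $P$-bilinear pairing, given that the columns of $R$ generate $\ker T^P$), and $\nfree_{C'}$ because conjugation turns the left multiplications of (\ref{TTprime}) into right multiplications, so $\Coker T^\dagger$ is literally the same module for equivalent systems. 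Your route needs no decomposition into elementary operations and proves slightly more than asked: the module $U$ itself, not merely its multiplicity, is an invariant of the equivalence class, and your remark that $\Sigma$ and $\Sigma'$ are set-theoretically unchanged even yields a direct argument from Einstein's definition (\ref{eq:deg_free_definition}). The paper's route, in exchange, is purely matrix-theoretic, avoids orthogonality bookkeeping, and tracks exactly how the resolutions (hence the Betti data used elsewhere in the paper) change under each move. Your closing diagnosis --- that $W=\Coker T$ is not preserved by equivalence while $\im T^*$ is --- is precisely the pitfall both arguments must, and do, navigate.
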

 
 The proof follows from the next two lemmata. 
 \begin{lemma}
 	Suppose $\widehat T'$ is a matrix obtained from $\widehat T$
 	by adding zero columns and rows. Then $\nfree_{\widehat T'} = \nfree_{\widehat T}$.	 
 \end{lemma}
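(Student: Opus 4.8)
The plan is to reduce everything to Theorem~\ref{th: deg_free_via_Ext}, which expresses the degree of freedom solely through the module $W=\Coker T$, by the formula $\nfree_{\widehat T}=e(U,d-1)$ with $U=\ext^1_P(W,P)$. Hence it suffices to track how $W$, and then $\ext^1_P(W,P)$, behave when zero rows and columns are adjoined to $\widehat T$. Recall that $\widehat T$ defines a map $T\colon P^m\to P^n$ with $W=\Coker T=P^n/\im T$, where the $m$ columns correspond to fields and the $n$ rows to equations.

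First I would handle the two elementary operations separately. Adjoining a zero column replaces $T$ by a map $T'\colon P^{m+1}\to P^n$ agreeing with $T$ on the first $m$ coordinates and sending the new basis vector to $0$; thus $\im T'=\im T$ and $\Coker T'=W$ is unchanged. (Concretely, the new basis vector lies in $\ker T'$: the added field is pure gauge, which is exactly why it carries no physical degree of freedom.) Adjoining a zero row replaces $T$ by $T''\colon P^m\to P^{n+1}$ with image $\im T\oplus 0\subset P^n\oplus P$, so that $\Coker T''\cong W\oplus P$. Iterating, a matrix $\widehat T'$ obtained from $\widehat T$ by adjoining $c$ zero columns and $r$ zero rows has cokernel $W'\cong W\oplus P^{r}$.

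It then remains to compute $\ext^1_P(W\oplus P^{r},P)$. By additivity of $\ext$ in the first argument and the fact that $P$ is free, hence projective (so that $\ext^i_P(P,P)=0$ for all $i\ge 1$),
$$
\ext^1_P(W\oplus P^{r},P)\cong \ext^1_P(W,P)\oplus\ext^1_P(P,P)^{r}=\ext^1_P(W,P).
$$
Writing $U'=\ext^1_P(W',P)$, this yields $U'\cong U$, and therefore $e(U',d-1)=e(U,d-1)$. Applying Theorem~\ref{th: deg_free_via_Ext} to both systems gives $\nfree_{\widehat T'}=e(U',d-1)=e(U,d-1)=\nfree_{\widehat T}$, as claimed. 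Since the passage to the conjugate system interchanges rows and columns, the same argument applied to $\widehat T^{\dagger}$ shows that the conjugate degree of freedom is likewise unchanged, which is what will be needed in Proposition~\ref{prop: deg_free_invariant}.

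The only real obstacle is bookkeeping: one must keep the direction conventions straight, namely that a zero column enlarges the domain and leaves the cokernel intact, while a zero row enlarges the codomain and splits off a free summand $P$. The one substantive input is the vanishing of $\ext^1_P(P,P)$, which renders the free summands produced by the added equations invisible to the multiplicity $e(\,\cdot\,,d-1)$.
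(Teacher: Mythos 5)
Your proof is correct and follows essentially the same route as the paper: both arguments reduce to Theorem~\ref{th: deg_free_via_Ext} by observing that the new cokernel is $W\oplus P^{r}$ (zero columns leave $\Coker T$ intact, zero rows split off free summands) and that the free summand is invisible to $\ext^1_P(-,P)$, so $U'\cong U$ and the multiplicities agree. The paper verifies the last step by writing the free resolution of $W'$ as the direct sum of the resolution of $W$ with a split-exact complex and dualizing, whereas you invoke additivity of $\ext$ together with $\ext^1_P(P,P)=0$ --- an equivalent computation.
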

 
 \begin{proof}
 	If $\widehat T'$ has size $(m+a)\times (n+b)$, then we can construct a free resolution of 
 	the module ${W'} = \Coker \widehat T' = W\oplus P^a$ as follows:
 	$$	  F_{W'}:	\dots \to P^{r+b} \to P^{n+b} \stackrel{\widehat T'}{\to} {P}^{m+a}
 	\to {W'} \to 0. 
 	$$
 	It is a direct sum of the free resolution~(\ref{eq:right_resolution}) of $W$ and a complex 
 	$$
 	0\to 
 	P^b \stackrel{\Id}{\to} P^b 
 	\stackrel{0}{\to}
 	P^a \stackrel{\Id}{\to} P^a 
 	\to 0.
 	$$
 	Taking the homology of $\Hom_P( F_{W'}, P)$
 	at the term ${P}^{m+a}$, we get $\ext^1_P({W'}, P) = \ext^1_P({W}, P) = U$. So, 	$
 	\nfree_{\widehat T'} = e(U, d-1) = \nfree_{\widehat T}$.	
 \end{proof}

 \begin{lemma}
 	Suppose $\widehat T'$ is a matrix obtained from $\widehat T$
 	by adding to some row (or a column) a linear combination of other rows (respectively, columns). Then $\nfree_{\widehat T'} = \nfree_{\widehat T}$.	 
 \end{lemma}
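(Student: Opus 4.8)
The plan is to recognize the stated operation as multiplication of $\widehat T$ by an invertible matrix over the ring $P$, and then to track its effect on the module $W = \Coker T$, on which the degree of freedom depends only through $U = \ext^1_P(W,P)$ by Theorem~\ref{th: deg_free_via_Ext}.

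First I would observe that adding to the $r$-th row of $\widehat T$ a $P$-linear combination $\sum_{j\ne r} c_j\cdot(\text{row } j)$, with $c_j\in P$, amounts to left multiplication $\widehat T' = E\widehat T$ by the $n\times n$ matrix $E = I + N$, where $N = \sum_{j\ne r} c_j E_{rj}$ is built from the matrix units $E_{rj}$. Since the added combination involves only rows $j\ne r$, one checks directly that $N^2 = 0$ (because $E_{rj}E_{rk} = \delta_{jr}E_{rk} = 0$ for $j\ne r$), so $E$ is invertible over $P$ with $E^{-1} = I - N$. The same reasoning, mutatis mutandis, shows that a column operation corresponds to right multiplication $\widehat T' = \widehat T E'$ by an invertible $m\times m$ matrix $E'$. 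The key point I want to extract here is that $E$ and $E'$ are genuinely invertible over $P$ itself, not merely over its fraction field, which is exactly what the hypothesis ``other rows/columns'' guarantees.

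Next I would analyze the cokernel. For a column operation, $E'$ is an automorphism of the source $P^m$, so $\im T' = T(E' P^m) = T(P^m) = \im T$, and hence $W' = \Coker T' = \Coker T = W$ verbatim. For a row operation, $E$ is an automorphism of the target $P^n$ carrying $\im T$ onto $\im T' = E(\im T)$, so $E$ descends to an isomorphism $W = P^n/\im T \xrightarrow{\sim} P^n/\im T' = W'$. In either case $W'\cong W$ as $P$-modules, whence $U' = \ext^1_P(W',P)\cong \ext^1_P(W,P) = U$.

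Finally, by Theorem~\ref{th: deg_free_via_Ext} the degree of freedom is $\nfree_{\widehat T} = e(U,d-1)$, a quantity depending only on the isomorphism class of $W$; since $W'\cong W$ I conclude $\nfree_{\widehat T'} = e(U',d-1) = e(U,d-1) = \nfree_{\widehat T}$. I do not anticipate a serious obstacle, as this is an elementary-operation argument; the only step demanding care is the verification that the perturbed identity matrices lie in $\mathrm{GL}$ over $P$ (the identity $N^2=0$), which hinges on excluding the modified row or column from the combination. If one additionally wants the companion statement for the conjugate system needed in Proposition~\ref{prop: deg_free_invariant}, I would note that $(\widehat T')^\dagger = (E')^\dagger \widehat T^\dagger E^\dagger$, and that $E^\dagger$ and $(E')^\dagger$ are again unipotent elementary matrices (with entries $c_j^\dagger$ in transposed positions), so the identical argument yields $\nfree_{(\widehat T')^\dagger} = \nfree_{\widehat T^\dagger}$.
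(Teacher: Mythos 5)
Your proof is correct and takes essentially the same route as the paper: the paper's (one-line) argument likewise observes that these elementary operations do not change the isomorphism class of $W = \Coker \widehat T$ (and of $V = \Coker \widehat{T}^*$), so that $\nfree_{\widehat T} = e(\ext^1_P(W,P),\, d-1)$ is unchanged by Theorem~\ref{th: deg_free_via_Ext}. Your explicit verification that the elementary matrices $E = I + N$ are unipotent (via $N^2 = 0$), hence invertible over $P$ itself, merely fills in the detail the paper leaves implicit, and your closing remark on the conjugate system matches what the paper needs for Proposition~\ref{prop: deg_free_invariant}.
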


 \begin{proof}
 	Both operations do not change the isomorphism class of the modules $W = \Coker \widehat T$ and $V = \Coker \widehat T^*$. So, the operations do not change 
 	the numbers $\nfree_{\widehat T} = e(\ext^1_P(W, P), d-1) $ and $\nfree_{\widehat T^*} = e(\ext^1_P(V, P), d-1) $. 
 \end{proof}

By 
Proposition~\ref{prop: deg_free_invariant}, we have

\begin{prop}
	If two systems with matrices $\widehat{T}$ 
and $\widehat{T}'$ are weakly equivalent, then $\nfree_{\widehat{T}} = \nfree_{\widehat{T}'}$ and $\nfree_{ {\widehat{T}}^*} = \nfree_{ {\widehat{T}'}{}^*}$.
\end{prop}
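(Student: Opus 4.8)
The plan is to reduce the statement to a single elementary equivalence and then read off both equalities from Proposition~\ref{prop: deg_free_invariant}. By definition, weak equivalence of $\widehat T$ and $\widehat T'$ means that there is a finite chain $\widehat T = \widehat T^0 \equiv \widehat T^1 \equiv \dots \equiv \widehat T^q = \widehat T'$ of elementary equivalences. Since equality of integers is transitive, it suffices to prove that one elementary equivalence $\widehat T \equiv \widehat T'$ already forces the two equalities $\nfree_{\widehat T} = \nfree_{\widehat T'}$ and $\nfree_{\widehat T^*} = \nfree_{(\widehat T')^*}$; the general statement then follows by induction on $q$.

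Next I would unwind the definition of elementary equivalence, which splits into two cases. In the first case the systems $\widehat T$ and $\widehat T'$ are themselves equivalent, i.e.\ differential consequences of each other. Here the two desired equalities are exactly the content of Proposition~\ref{prop: deg_free_invariant}, which asserts that passing to an equivalent system changes neither the degree of freedom $\nfree_{\widehat T}$ nor the degree of freedom $\nfree_{\widehat T^*}$ of the conjugate system. So this case is immediate.

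In the second case the conjugate systems $\widehat T^*$ and $(\widehat T')^*$ are equivalent. The key point---and essentially the only thing to notice---is that Proposition~\ref{prop: deg_free_invariant} controls the conjugate degree of freedom as well as the original one. Applying it to the equivalent pair $\widehat T^*, (\widehat T')^*$ in place of $\widehat T, \widehat T'$ yields at once $\nfree_{\widehat T^*} = \nfree_{(\widehat T')^*}$ and $\nfree_{(\widehat T^*)^*} = \nfree_{((\widehat T')^*)^*}$. Since transposition is an involution, $(\widehat T^*)^* = \widehat T$ and $((\widehat T')^*)^* = \widehat T'$, so the second identity is precisely $\nfree_{\widehat T} = \nfree_{\widehat T'}$. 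Thus both equalities hold in this case too, which completes the single-step argument and hence the proof.

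I do not expect a genuine obstacle, since the proposition is a formal corollary of Proposition~\ref{prop: deg_free_invariant} once one exploits that the latter preserves both members of the pair $(\nfree_{\widehat T}, \nfree_{\widehat T^*})$ simultaneously. The only point deserving care is the bookkeeping of the involution: if one prefers to read ``conjugate'' in the definition of elementary equivalence as the Hermitian conjugate $\widehat T^\dagger$ rather than the transpose $\widehat T^*$, one additionally invokes $\nfree_{\widehat T^\dagger} = \nfree_{\widehat T^*}$ and $\nfree_{(\widehat T^\dagger)^*} = \nfree_{\widehat T}$, both of which follow from the autoequivalence $\mathcal L$ of the category of graded $P$-modules used in the proof of Corollary~\ref{cor: conjugate N = N}; this autoequivalence carries free modules to free modules and hence preserves $\ext^1_P(-,P)$ together with its dimension and multiplicity.
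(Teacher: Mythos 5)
Your proof is correct and follows the same route as the paper, which simply derives this proposition from Proposition~\ref{prop: deg_free_invariant}; your induction along the chain of elementary equivalences and the two-case analysis (systems equivalent vs.\ conjugates equivalent, using that transposition is an involution) is exactly the intended argument, merely spelled out. Your closing remark resolving the $\widehat{T}^\dagger$ versus $\widehat{T}^*$ reading of ``conjugate'' via the autoequivalence $\mathcal{L}$ is a careful touch that the paper leaves implicit.
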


\begin{cor}
	Suppose that a system~(\ref{EoMs}) is weakly equivalent to a doubly weakly involutive system defined by a matrix $\widehat{T}'$. Then $\nfree_{\widehat{T}}$   can be calculated by the formula from Theorem~\ref{th:deg_free n graded 2-sided} applied to the two-sided complex constructed by the homogeneous matrix $\lt \widehat{T}'$. 
\end{cor}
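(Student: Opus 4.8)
The plan is to chain the two results that immediately precede the corollary. By hypothesis the given matrix $\widehat{T}$ is weakly equivalent to a doubly weakly involutive matrix $\widehat{T}'$. The first step is to invoke the preceding Proposition on weak equivalence, which asserts that weak equivalence leaves both $\nfree_{\widehat{T}}$ and $\nfree_{\widehat{T}^*}$ unchanged; in particular it yields $\nfree_{\widehat{T}} = \nfree_{\widehat{T}'}$. The second step is to apply the Proposition on doubly weakly involutive systems to $\widehat{T}'$ itself: since $\widehat{T}'$ is doubly weakly involutive (Definition~\ref{def: 2_weak_inv}), its degree of freedom equals that of the homogeneous system defined by the leading-term matrix $\lt \widehat{T}'$, and the latter is computed by Theorem~\ref{th:deg_free n graded 2-sided} applied to the two-sided complex ${\bf F}_{\lt \widehat{T}'}$ of the form~(\ref{eq: 2-sided_resolution}). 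Composing the two equalities gives $\nfree_{\widehat{T}} = \nfree_{\widehat{T}'} = {\mathcal N}_{\lt \widehat{T}'}$, which is exactly the asserted formula.

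The point that requires care is why the first step delivers precisely the invariance needed. Weak equivalence is built from elementary equivalences, each of which either equates two systems or equates their conjugates, so any quantity that is to survive a chain of such steps must be stable under passing to the conjugate as well. This is the reason the preceding Proposition records the two equalities $\nfree_{\widehat{T}} = \nfree_{\widehat{T}'}$ and $\nfree_{\widehat{T}^*} = \nfree_{\widehat{T}'^*}$ together: a conjugation step in the chain interchanges the roles of $\widehat{T}$ and $\widehat{T}^*$, and only by tracking both numbers simultaneously does one conclude $\nfree_{\widehat{T}} = \nfree_{\widehat{T}'}$. I would stress that $\widehat{T}$ itself is \emph{not} assumed to be (doubly) weakly involutive — only $\widehat{T}'$ is — so the leading-term formula may well fail for $\lt \widehat{T}$ while holding for $\lt \widehat{T}'$; this is exactly why the statement is phrased in terms of $\lt \widehat{T}'$ rather than $\lt \widehat{T}$.

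Since both ingredients are already proved, there is no substantial remaining obstacle for the corollary as stated; the only work is the bookkeeping of the composition described above. The genuinely nontrivial content is buried in the doubly-weakly-involutive Proposition, whose proof compares the Hilbert polynomials of the two-sided complex ${\bf F}$ and of its associated graded complex $\gr {\bf F} = {\bf F}_{\lt \widehat{T}'}$ by means of Robbiano's theorem (that a filtered module admits a free resolution whose associated graded complex resolves the associated graded module), together with the dimension bounds on the $\ext$-modules furnished by Proposition~\ref{prop: groth}. For the present corollary, however, invoking these two propositions verbatim is enough, and I would not reprove them.
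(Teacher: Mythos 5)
Your proof is correct and takes essentially the same route as the paper, which leaves the corollary unproved precisely because it is the composition of the two immediately preceding propositions: invariance of $\nfree_{\widehat{T}}$ (and $\nfree_{\widehat{T}^*}$) under weak equivalence, followed by the proposition that a doubly weakly involutive system has the same degree of freedom as the homogeneous system defined by $\lt \widehat{T}'$, computed via Theorem~\ref{th:deg_free n graded 2-sided}. Your observations that both $\nfree_{\widehat{T}}$ and $\nfree_{\widehat{T}^*}$ must be tracked through conjugation steps in the equivalence chain, and that the formula must use $\lt \widehat{T}'$ rather than $\lt \widehat{T}$ since $\widehat{T}$ itself need not be weakly involutive, correctly capture why the statement is phrased as it is.
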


In the view of Corollary~\ref{cor: conjugate N = N }, we deduce:

\begin{cor}
	Suppose that a system~(\ref{EoMs}) is weakly equivalent to a doubly weakly involutive system. 
 Then the degree of freedom for the system and its conjugate coincide, ${\mathcal N}^\dagger = {\mathcal N}$. 
\end{cor}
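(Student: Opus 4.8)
The plan is to reduce the claim, through weak equivalence and passage to the symbol, to the homogeneous case already settled in Corollary~\ref{cor: conjugate N = N }. Write $W = \Coker \widehat T$, $V = \Coker \widehat{T}^*$ and $W^\dagger = \Coker \widehat{T}^\dagger$, so that by Theorem~\ref{th: deg_free_via_Ext} we have ${\mathcal N} = e(\ext^1_P(W,P),d-1)$, $\nfree_{\widehat{T}^*} = e(\ext^1_P(V,P),d-1)$ and ${\mathcal N}^\dagger = e(\ext^1_P(W^\dagger,P),d-1)$. The first step is to record the auxiliary identity ${\mathcal N}^\dagger = \nfree_{\widehat{T}^*}$, valid for an arbitrary matrix $\widehat T$. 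This is furnished by the autoequivalence ${\mathcal L}$ used in the proof of Corollary~\ref{cor: conjugate N = N }: it is induced by complex conjugation together with the reflection $\partial_i \mapsto -\partial_i$, it preserves graded free modules, Hilbert functions and the functor $\ext^1_P(-,P)$, and it sends $V$ to $W^\dagger$. Hence $\ext^1_P(W^\dagger,P) \cong {\mathcal L}\,\ext^1_P(V,P)$ has the same dimension and multiplicity, so ${\mathcal N}^\dagger = \nfree_{\widehat{T}^*}$.

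Next I would pick a doubly weakly involutive matrix $\widehat{T}'$ to which $\widehat T$ is weakly equivalent. The Proposition on weak equivalence supplies $\nfree_{\widehat T} = \nfree_{\widehat{T}'}$ and $\nfree_{\widehat{T}^*} = \nfree_{(\widehat{T}')^*}$ at once, so it remains to prove the single equality $\nfree_{(\widehat{T}')^*} = \nfree_{\widehat{T}'}$. For this I would first check that $(\widehat{T}')^*$ is again doubly weakly involutive, with symbol $\lt\bigl((\widehat{T}')^*\bigr) = (\lt \widehat{T}')^*$: the conditions of Definition~\ref{def: 2_weak_inv} are symmetric under $\widehat{T}' \leftrightarrow (\widehat{T}')^*$, because $\bigl((\widehat{T}')^*\bigr)^* = \widehat{T}'$ and the compatibility $(\lt \widehat{T}')^* = \lt\bigl((\widehat{T}')^*\bigr)$ is self-dual. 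Applying the Proposition that identifies the degree of freedom of a doubly weakly involutive system with that of its symbol, to both $\widehat{T}'$ and $(\widehat{T}')^*$, then gives $\nfree_{\widehat{T}'} = \nfree_{\lt \widehat{T}'}$ and $\nfree_{(\widehat{T}')^*} = \nfree_{(\lt \widehat{T}')^*}$. Since $\lt \widehat{T}'$ is homogeneous, Corollary~\ref{cor: conjugate N = N } yields $\nfree_{(\lt \widehat{T}')^\dagger} = \nfree_{\lt \widehat{T}'}$, while the first-step identity applied to $\lt \widehat{T}'$ yields $\nfree_{(\lt \widehat{T}')^*} = \nfree_{(\lt \widehat{T}')^\dagger}$.

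Chaining the equalities produced above gives
\begin{align*}
{\mathcal N}^\dagger &= \nfree_{\widehat{T}^*} = \nfree_{(\widehat{T}')^*} = \nfree_{(\lt \widehat{T}')^*} \\
&= \nfree_{(\lt \widehat{T}')^\dagger} = \nfree_{\lt \widehat{T}'} = \nfree_{\widehat{T}'} = \nfree_{\widehat T} = {\mathcal N},
\end{align*}
which is the desired conclusion. I expect the most delicate step to be the first one: correctly absorbing the Hermitian reflection $\partial_i \mapsto -\partial_i$ into an autoequivalence of graded $P$-modules so as to identify ${\mathcal N}^\dagger$ (defined through the Hermitian conjugate $\widehat{T}^\dagger$) with $\nfree_{\widehat{T}^*}$ (defined through the plain transpose module $V$). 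By contrast, the verification that conjugation preserves double weak involutivity and commutes with taking symbols is routine bookkeeping.
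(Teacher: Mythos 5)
Your proof is correct and follows the same route the paper intends: the paper's one-line deduction "in view of Corollary~\ref{cor: conjugate N = N }" compresses exactly your chain — weak-equivalence invariance of $\nfree_{\widehat T}$ and $\nfree_{\widehat T^*}$, reduction of the doubly weakly involutive system (and, by the self-duality of Definition~\ref{def: 2_weak_inv}, of its transpose) to the homogeneous symbol $\lt \widehat{T}'$, and the homogeneous conjugation result. Your only addition is making explicit the identification ${\mathcal N}^\dagger = \nfree_{\widehat{T}^*}$ via the automorphism $\partial_\mu \mapsto -\partial_\mu$, which the paper leaves implicit inside the proof of Corollary~\ref{cor: conjugate N = N }; that bookkeeping is sound and a worthwhile clarification, but not a different argument.
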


We have seen in Example~\ref{ex: 1-equation_involutive} 
that a system of a single equation is weakly equivalent to a doubly weakly involutive system. 

\begin{conj}
	Each system of the form~(\ref{EoMs}) is 
	weakly equivalent to a doubly weakly involutive system. 
\end{conj}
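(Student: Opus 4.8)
The plan is to reduce the conjecture to the simultaneous control of the leading-term behaviour of the system in the forward and in the conjugate directions, and then to attack it by an iterative involutivization scheme. By Theorem~\ref{th: equivalent_involutive} every system is equivalent to a weakly involutive one, so I would first replace $\widehat{T}$ by a weakly involutive matrix $\widehat{T}_0$; by the equivalent characterisation of weak involutivity this means $\lt \im \widehat{T}_0 = \im \lt \widehat{T}_0$, i.e. the columns of $\widehat{T}_0$ form (the analogue of) a Groebner basis of $\im \widehat{T}_0 \subset F_0$ for a degree-compatible order. The part of Definition~\ref{def: 2_weak_inv} not yet addressed is the conjugate condition $\lt \im \widehat{T}_0^* = \im \lt \widehat{T}_0^*$ together with the compatibility $(\lt T_0)^* = \lt(T_0^*)$. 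To handle the conjugate direction I would \emph{involutivize the conjugate system}: by Theorem~\ref{th: equivalent_involutive} there is a weakly involutive matrix $\widehat{S}$ equivalent to $\widehat{T}_0^\dagger$, and then $\widehat{S}^\dagger$ is elementary equivalent to $\widehat{T}_0$ (their conjugates $\widehat{S}$ and $\widehat{T}_0^\dagger$ being equivalent), while $(\widehat{S}^\dagger)^\dagger = \widehat{S}$ is weakly involutive. Thus each direction can be fixed individually through a legitimate elementary equivalence.

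The obvious danger is that involutivizing the conjugate direction destroys weak involutivity of the forward direction, so that alternating between the two never stabilizes. I would attempt to rule this out by producing a monovariant for the alternating procedure. Two candidates present themselves: first, the leading-term modules $\lt \im \widehat{T}$ and $\lt \im \widehat{T}^*$, which by Noetherianity of $P$ cannot ascend forever; second, the defects $\dim\!\left(\ker \lt \widehat{T} / \lt \ker \widehat{T}\right)$ and its conjugate, which measure the failure of weak involutivity and which one would try to show decrease strictly along the iteration until both vanish. Here one may use that the elementary operations preserve $W=\Coker\widehat{T}$ and $V=\Coker\widehat{T}^*$ up to free summands (by the two lemmata preceding Proposition~\ref{prop: deg_free_invariant}), so that $\nfree_{\widehat{T}}$ and $\nfree_{\widehat{T}^\dagger}$ are invariant and the process stays within a fixed weak-equivalence class. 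If either monovariant can be made strictly monotone, the scheme terminates at a matrix that is weakly involutive in both directions.

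Having secured conditions (i) and (ii) of Definition~\ref{def: 2_weak_inv}, the remaining and most delicate point is the compatibility $(\lt T)^* = \lt(T^*)$. This is automatic for homogeneous systems (Example~\ref{ex:hom_system_is_involutive}) but is a genuine constraint otherwise: it demands that a single degree-compatible order select the leading terms of the column module and of the row module so that transposition commutes with taking leading terms. I would try to arrange this by choosing a \emph{generic} weighted grading on the variables $\partial_\mu$, as permitted by the footnote in Subsection~\ref{subs: homogen}, so that the leading-term construction sits in the interior of a top-dimensional cone of the Groebner fan of both $\im \widehat{T}$ and $\im \widehat{T}^*$; genericity should force $\lt$ to commute with transposition. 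An alternative, modelled on Example~\ref{ex: 1-equation_involutive}, is to enlarge the system by adjoining enough differential consequences (which preserves $\nfree_{\widehat{T}}$ and $\nfree_{\widehat{T}^\dagger}$ by Proposition~\ref{prop: deg_free_invariant}) so that the column module and its leading-term module are generated by elements whose transposes already generate the row module and its leading-term module.

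I expect the main obstacle to be precisely the interaction of the two directions: the forward and the conjugate Groebner/involutive requirements pull in opposite directions, and there is no a priori reason why a single order can be ``generic enough'' for both the module of columns and the module of rows at once. Establishing termination of the alternating scheme (equivalently, exhibiting the required monovariant) and verifying the compatibility $(\lt T)^* = \lt(T^*)$ under a single grading are the two points at which the argument remains genuinely incomplete, which is consistent with the statement being left as a conjecture. Evidence that it should nonetheless hold is provided by the single-equation case (Example~\ref{ex: 1-equation_involutive}) and by the explicit computations for the Maxwell and massive spin~2 systems (Examples~\ref{ex: maxwell} and~\ref{ex: massive spin 2}).
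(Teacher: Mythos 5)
The statement you set out to prove is left \emph{open} in the paper: it appears there verbatim as a Conjecture, with only the homogeneous case (Example~\ref{ex:hom_system_is_involutive}) and the single-equation case (Example~\ref{ex: 1-equation_involutive}) actually established. So there is no proof in the paper to compare against, and your proposal --- as you candidly say yourself --- does not close the conjecture either; it is a plausible research program whose two flagged gaps are genuine. Within the program, your individual moves are legitimate: involutivizing $\widehat{T}$ and involutivizing $\widehat{T}^\dagger$ via Theorem~\ref{th: equivalent_involutive} are both elementary equivalences in the paper's sense, and Proposition~\ref{prop: deg_free_invariant} does keep $\nfree_{\widehat{T}}$ and $\nfree_{\widehat{T}^\dagger}$ fixed along the way. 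But neither candidate monovariant works as stated. The defect $\ker \lt \widehat{T}/\lt \ker \widehat{T}$ is a $P$-module of generally positive Krull dimension, so its $\CC$-dimension is typically infinite and cannot serve as a strictly decreasing integer; you would need a Hilbert-polynomial or lexicographic tuple refinement, and strict decrease under the alternating moves is not argued for any version. The Noetherian ascending-chain idea for $\lt \im \widehat{T}$ and $\lt \im \widehat{T}^*$ also fails to set up: each involutivization step changes the number of rows and columns and, crucially, the degree vectors $(\theta_i,k_a)$, hence the ambient graded free modules in which the two leading-term modules live; there is no fixed ambient module in which a chain of submodules could ascend, so Noetherianity gives no termination as stated.

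The compatibility condition $(\lt T)^* = \lt(T^*)$ is, as you suspect, the deepest point, and your genericity idea does not engage with the actual constraint. Definition~\ref{def: 2_weak_inv} requires the two gradings to be \emph{mutually dual}, so once the forward degrees $(\theta_i, k_a)$ are chosen you have no residual freedom on the conjugate side: a single weight must be simultaneously generic for the column module $\im \widehat{T}$ and the row module $\im \widehat{T}^*$, and interior-of-a-cone genericity in the Groebner fan of one module gives no control over the other. Moreover, even granting such simultaneous genericity, it guarantees only that the leading-term modules are realized by a weight order; the identity $(\lt T)^* = \lt(T^*)$ is an entrywise statement --- that $\deg \widehat{T}_{ai} + \theta_i$ attains the row maximum $k_a$ exactly when it attains the corresponding column maximum --- and this is not a consequence of any statement about the modules $\im \widehat{T}$ and $\im \widehat{T}^*$ alone. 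Your supporting evidence (Examples~\ref{ex: 1-equation_involutive}, \ref{ex: maxwell}, \ref{ex: massive spin 2}) is consistent with the conjecture, but the proposal should be presented as exactly that: an attack plan identifying where the difficulty sits (termination of the alternation, and the transpose-compatibility of $\lt$ under a fixed dual pair of gradings), not as a proof.
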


This conjecture implies

\begin{conj}
	Consider a system of the form~(\ref{EoMs}). Then the degree of freedom for the system and its conjugate coincide.
\end{conj}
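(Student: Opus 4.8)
The plan is to prove the final statement directly, bypassing the (open) Conjecture on weak equivalence to doubly weakly involutive systems, by reducing everything to a local computation at height-one primes. First I would rewrite both sides of the claimed equality in purely module-theoretic terms. By Theorem~\ref{th: deg_free_via_Ext}, which holds for arbitrary (not necessarily homogeneous) systems, ${\mathcal N} = e(\ext^1_P(W,P),\,d-1)$ with $W=\Coker T$. The Hermitian conjugate satisfies $\widehat T^\dagger = \sigma(T^*)$, where $\sigma\colon \partial_i \mapsto -\partial_i$ is the degree-preserving automorphism of $P$; since $\sigma$ preserves Hilbert functions, dimensions and multiplicities, applying the same theorem to $\widehat T^\dagger$ gives ${\mathcal N}^\dagger = e(\ext^1_P(V,P),\,d-1)$ with $V=\Coker T^*$. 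Thus the statement reduces to the symmetric identity
\[
e(\ext^1_P(W,P),\,d-1) = e(\ext^1_P(V,P),\,d-1), \qquad W=\Coker T,\ V=\Coker T^*.
\]

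Next I would reduce each multiplicity to a sum of local lengths. By Proposition~\ref{prop: groth} both modules have dimension at most $d-1$, so their dimension-$(d-1)$ support consists of height-one primes $\mathfrak p=(f)$ of $P$. The intermediate lemma I would establish is that, for a finitely generated filtered module $M$ of dimension at most $d-1$, the multiplicity $e(M,d-1)$ used in the paper — computed from the Hilbert function of $\gr M$ — coincides with the geometric multiplicity $\sum_{\height\mathfrak p=1}\ell(M_{\mathfrak p})\,\deg(P/\mathfrak p)$. This should follow from the fact that the Rees degeneration of $M$ to $\gr M$ is flat, so it preserves the Hilbert polynomial and hence the degree of the top-dimensional support cycle.

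The core step is then a comparison of the local lengths via Smith normal form. At a height-one prime $\mathfrak p=(f)$ the localization $P_{\mathfrak p}$ is a discrete valuation ring with uniformizer $\pi$. Since Ext commutes with localization, $\ext^1_P(W,P)_{\mathfrak p}=\ext^1_{P_{\mathfrak p}}(W_{\mathfrak p},P_{\mathfrak p})$, and over a DVR this length equals the length of the torsion submodule of $W_{\mathfrak p}=\Coker(T_{\mathfrak p})$. Diagonalising $T_{\mathfrak p}$ by Smith normal form with invariant factors $\pi^{a_1},\dots,\pi^{a_r}$, this torsion length is $\sum_j a_j$. The transpose $T^*_{\mathfrak p}$ has the \emph{same} invariant factors, so $V_{\mathfrak p}=\Coker(T^*_{\mathfrak p})$ has torsion of the same length, giving $\ell(\ext^1_P(V,P)_{\mathfrak p})=\ell(\ext^1_P(W,P)_{\mathfrak p})$ at every height-one prime. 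Weighting each term by $\deg(P/\mathfrak p)$ and summing then yields the displayed identity, and hence ${\mathcal N}^\dagger={\mathcal N}$. Note that this termwise equality handles the degenerate case uniformly: if $\ext^1_P(W,P)$ has no dimension-$(d-1)$ support then neither does $\ext^1_P(V,P)$, and both multiplicities vanish.

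The hard part will be the intermediate lemma identifying the filtered multiplicity with the geometric one in top dimension. For homogeneous systems the module $\ext^1_P(W,P)$ is already graded and there is nothing to check, which is presumably why the original approach proceeds through the homogeneous reduction; in the general case one must control the associated graded of an Ext module, whose grading is not directly related to that of $\gr W$. Once the flat-degeneration argument is made precise for the induced filtration on $\ext^1_P(W,P)$, the remaining Smith-normal-form comparison is completely robust, and it requires neither involutivity nor the conjecture on weak equivalence to doubly weakly involutive systems.
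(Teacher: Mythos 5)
Your proposal addresses a statement that the paper itself never proves: there it is an open conjecture, recorded only as a consequence of the preceding (also unproven) conjecture that every system is weakly equivalent to a doubly weakly involutive one, with unconditional proofs given only for homogeneous systems (Corollary~\ref{cor: conjugate N = N }) and for systems weakly equivalent to doubly weakly involutive ones. So there is no proof in the paper to compare against, and your route --- localization at height-one primes plus Smith normal form --- is genuinely different from anything the authors do: their mechanism runs through graded free resolutions, Betti numbers, and $Q_{\bf F}'(1)$, which is precisely what forces their homogeneity/involutivity hypotheses, whereas your argument never needs a grading.

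The skeleton of your argument is sound. Theorem~\ref{th: deg_free_via_Ext} indeed holds with no homogeneity assumption; the twist by the automorphism $\partial_i\mapsto-\partial_i$ (with complex conjugation) preserves filtrations, dimensions and multiplicities, so ${\mathcal N}^\dagger = e(\ext^1_P(V,P),d-1)$ with $V=\Coker T^*$; $\ext$ commutes with localization; $P_{\mathfrak p}$ at a height-one prime is a DVR; and the invariant factors of $T_{\mathfrak p}$ and $T^*_{\mathfrak p}$ coincide (determinantal divisors are transpose-invariant), so the torsion submodules of the two cokernels, hence the local lengths of the two $\ext^1$ modules, agree. By Proposition~\ref{prop: groth} both $\ext^1$ modules have dimension at most $d-1$, so only height-one primes, which are then automatically minimal in the support, can carry the top-dimensional multiplicity.

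The one genuine gap is the intermediate lemma, which you rightly flag as the hard part --- but you are making it harder than it is. No flat Rees degeneration is needed, and you never have to compare $\supp \gr U$ with $\supp U$ (which indeed differ in general). The lemma follows from the paper's own appendix. Take a prime filtration $0=U_0\subset U_1\subset\dots\subset U_s=U$ with $U_i/U_{i-1}\cong P/{\mathfrak q}_i$. Each ${\mathfrak q}_i$ lies in $\supp U$, hence has height at least $1$ (since $\dim U\le d-1<d$ excludes the zero ideal). Iterating Corollary~\ref{cor: additive_multiplicity} with $q=d-1$ for the induced filtrations, and invoking the independence of the multiplicity on the choice of admissible filtration (Proposition~\ref{prop:dim_n_mult}(a)), gives $e(U,d-1)=\sum_i e(P/{\mathfrak q}_i,d-1)$. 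Summands with $\height{\mathfrak q}_i\ge 2$ vanish by definition of $e(\cdot,d-1)$; for ${\mathfrak q}_i=(f)$ of height one, the ideal of initial forms of $(f)$ is $(\lt f)$, so $e(P/(f),d-1)=\deg f$. Finally, localizing the prime filtration at a height-one prime ${\mathfrak p}$ shows ${\mathfrak p}$ occurs among the ${\mathfrak q}_i$ exactly $\ell(U_{\mathfrak p})$ times, yielding exactly your formula $e(U,d-1)=\sum_{\height{\mathfrak p}=1}\ell(U_{\mathfrak p})\deg(P/{\mathfrak p})$. With this substitution your argument closes and settles the conjecture unconditionally, which is strictly more than the paper achieves; it is worth writing up carefully.
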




\begin{exam}[\cite{KLS2013}]
\label{ex: proca}
    The (non-involutive an non-weakly-involutive) system of $d$ Proca equations 
	$$
	P_\mu\equiv (\delta^\mu_\nu \, \Box -\partial_\mu\partial^\nu - m^2 \, \delta_\mu^\nu)A_\nu = 0
	$$
	(where $m\ne 0$, $\Box = \partial^\nu \partial_\nu$, and $\mu, \nu$ run $ 0, \dots, d-1$)
	is equivalent to the  system  of $d$ Klein--Gordon equations and one additional equation (the transversality condition), 
	\begin{equation}
		\label{eq: KG}
	\begin{array}{lll}
		T_\mu & \equiv & (\Box -m^2) A_\mu = 0\\
		T_\bot & \equiv & \partial^\mu A_\mu=0 .
	\end{array}
    \end{equation}
Among these $d+1$ equations we have one gauge identity:
\begin{equation}
\label{Proca-Identity}
\partial^\mu T_\mu-T_\bot\equiv 0 \, .
\end{equation}
The above identity is of the third order because the first order differential operator $\partial^\mu$ acts on the left hand side of the second order Proca equations.
    
    A straightforward Groebner basis argument shows that the last system is doubly weakly involutive. The leading term  matrix $\lt \widehat{T}$ then corresponds  to this  homogeneous system 
	\begin{equation}
		\label{eq: KG homogeneous}
		\begin{array}{lll}
		\widehat T_\mu & \equiv & \Box A_\mu = 0\\
		T_\bot & \equiv & \partial^\mu A_\mu=0 
	\end{array}
	\end{equation}
 (that is, it corresponds to the case $m=0$). 
 The identity analogous to~(\ref{Proca-Identity}) also holds for the homogeneous equations. The free resolutions of the corresponding graded modules $V$ and $W$ are the following
 (see~\cite{KLS2013}): 
	$$
	0 \to P^d \stackrel{\lt \widehat T}{\to} P^{d}(2)\oplus P(1) \to W \to 0
	$$ 
	and $$
	0 \to P(-3) \to P^{d}(-2)\oplus P(-1) \stackrel{\lt \widehat {T}^*}{\to} P^{d} \to V \to 0. 
	$$
 The double complex~(\ref{eq: 2-sided_resolution}) then obtains the form 
 $$0 \to P(-3) \to P^{d}(-2)\oplus P(-1) \stackrel{\lt \widehat {T}^*}{\to} P^{d} \to 0 .
 $$
 Then the formula from Theorem~\ref{th:deg_free n graded 2-sided} gives
	$Q_F(z) = d-z-dz^2+z^3$ and 
	$$
	{\mathcal N} = -Q_F(1)' = -(-1-2dz+3z^2)|_{z=1} = 2d-2.
	$$

 Now, consider the conjugate system to~(\ref{eq: KG}). It consists of $d$ equations of the form
 $$
 {T^\dagger}_\mu  \equiv  (\Box -m^2) T_\mu + \partial^\mu T_\bot, \mu = 0, \dots, d-1.
 $$
 This system is doubly weakly involutive (because the system~(\ref{eq: KG}) is of this type). 
 If we assign the degrees $\deg T_\mu =-2 $ and 
 $\deg T_\bot = -1$ to the variables, the matrix $\lt {T^\dagger}$ obtains the form 
 $$
 \lt {T^\dagger}_\mu  \equiv  \Box T_\mu + \partial^\mu T_\bot, \mu = 0, \dots, d-1.
 $$

 Then the free resolutions of the graded modules $W$ and $V$ have the form
$$
	0 \to P^d \stackrel{\lt {\widehat{T^\dagger}}^*}{\to} P^{d}(2)\oplus P(1) \to V \to 0
	$$ 
	and $$
	0 \to P(-3) \to P^{d}(-2)\oplus P(-1) \stackrel{ \lt \widehat {T^\dagger}}{\to} P^{d} \to W \to 0. 
	$$
 The double complex~(\ref{eq: 2-sided_resolution}) looks as
 $$0 \to P^d \stackrel{\lt {\widehat{T^\dagger}}^*}{\to} P^{d}(2)\oplus P(1) \to P(3) \to 0
 $$
 (the term $P(3)$ appears here as the dual module to $P(-3)$).
Since the variables $T_\mu$ and $T_\bot$ are not of zero degree (in contrast to the variables $A_\mu$), we should apply Corollary~\ref{cor:main_homogeneous_general} in place of Theorem~\ref{th:main_homogeneous}. The formula gives 
$$
{\mathcal N} = -0\cdot d + (2d +1)-3 = 2d-2.
$$
The degree of freedom is the same for the two conjugate systems. 
\end{exam}

\vspace{0.2 cm}

\subsection*{Concluding remarks}

Let us make a few remarks about the possibilities of the further use of our results and the developed approaches.

For linear gauge systems, we used a two-sided complex of free polynomial moduli to establish a connection between the Einstein's DoF count recipe and the BRST complex. We believe that this complex, being the Fourier dual of BRST-complex, can be useful for studying the BRST cohomology of linear systems beyond the DoF number problem. In particular, it can provide a systematic tool for introducing auxiliary fields to bring non-Lagrangian field equations to an equivalent Lagrangian form.

We have established, for homogeneous system, an expression of the degree of freedom as minus the residue at infinity of the BRST complex Euler characteristic logarithmic derivative.
It seems interesting to express other physically meaningful
numerical and functional BRST invariants 
in terms of this Euler characteristics. 

We have found that DoF for the Hermitian conjugate systems is the same. This opens a new way to calculate this DoF for a gauge theory via the conjugate system and to construct the field theories which are equivalent at the linear level, but might not be equivalent with the inclusion of interactions. 

For topological field theories, the local DoF is zero. One can interpret this as follows: there is no free parameter of a general solution which is an arbitrary function of $d-1$ variables. However, there might be general functions of a smaller number of free variables which are parameters of the general solution. One can consider the numbers of such functions, say, of $k$ variables as a $k+1$--dimensional DoF of the theory. Our approach opens a way to study such smaller-dimensional degrees of freedom via the gauge structure of the theory and BRST cohomology.  

\subsubsection*{Acknowledgements}

The second author would like to express his gratitude for the hospitality of the University of Haifa, where part of this article was written. We thank A.A.~Sharapov for fruitful discussions.

The research described in Section~\ref{sec:gauge_and _polynom}
was carried out with the support of RSF grant No. 24-21-00341 (https://rscf.ru/project/24-21-00341/). 
The research described in Sections \ref{sec: algebra}, \ref{sec: non-homogen} was partially supported by the research project FSWM-2025-0007 of the Russian Ministry of Science and Higher Education.


\appendix

\section{Algebraic background: the Hilbert function, dimension,  and multiplicity of polynomial modules}

\label{sec: algebra}
\label{app: algebra}

In this appendix, we collect some basic definitions and theorems from commutative algebra. We focus on the Hilbert function, dimension, and multiplicity of polynomial modules. These concepts are essential for the algebraic interpretation of DoF. Note that while most textbooks and monographs on commutative algebra focus on the case of local rings, many important results also hold for modules over polynomial rings. We discuss only the versions relevant to polynomial modules.

We recall here the classical definitions and main properties of Hilbert functions, Hilbert series, Hilbert polynomials, dimension, and multiplicities of polynomial modules in Proposition~\ref{prop:dim_n_mult} and Corollary~\ref{cor: additive_multiplicity}. 
A more delicate property of Hilbert functions for complexes of filtered modules and their homologies~\cite{FCV99} is discussed in Proposition~\ref{prop:Hilbert_pols_for_complexes}. A key result about multiplicities of the Ext modules (the proof mimics Grothendieck's proof for local rings) in Proposition~\ref{prop: groth} is important for the connection of DoF with homologies. For a homogeneous module, we give a formula for the multiplicity as the value of the derivative of a certain polynomial (see Proposition~\ref{prop:Q and multiplicity}). This formula is used to calculate DoF.

As before, we denote by $P$ the ring of polynomials $P = \CC [\partial_0, \dots , \partial_{d-1}]$ of the variables $\partial_0, \dots , \partial_{d-1}$. 
Let $M$ be a finitely generated $P$-module. For details of the following classical results, we refer to~\cite{Serre}. Note that the same holds in the more complicated 
 case of algebraic D-modules (which correspond to differential operators with polynomial coefficients), see~\cite{borel, bjork}.  
 
 Denote by $P_N \subset P$ the linear span of all  
 monomials of degree $N$ ($N$ is an integer). 
 We call the module  $M$ graded if $M=
 \dots M_{-1 } \oplus M_0\oplus M_1 \oplus \dots$ with $P_N M_q \subset M_{N+q}$. The Hilbert function of a graded module is defined as $h(M,N) = \dim M_N$. Its generating function is the formal Laurent power series $H_M(z) = \sum_{N\ge 0} z^N \dim M_N$ which is called the Hilbert series of $M$.  
 
Now, consider a more general case of a non-graded module.
 A filtration $F$ on 
$M$ is a family of vector spaces $F_pM$ for integer $p$ (where $F_pM=0$ for $p<<0$) with the inclusions
$$
M = 
\dots \subset F_0M \subset F_1M \subset \dots 
\subset F_pM \subset \dots 
$$
We call a filtration admissible 
if all vector spaces $F_pM$ are finite-dimensional, the union $\bigcup_p F_pM$ is equal $M$, and $F_{p+q}M = P_{\le p} M_q$ for all large enough $q$ and all $p\ge 0$. For example, if $X $ is a finite generating set of $M$, then the filtration $F^X_pM = F_p P X$ is always admissible (where $F_p P = P_0\oplus \dots \oplus P_p$).  If $M'\subset M$ a submodule and $N = M/M'$ is a factor-module, then an admissible filtration $F_pM$ of $M$ induces an admissible filtration $F_pM' = M' \cap F_pM$ and $F_pN = F_p M / F_pM'$ on $N$.


If $M$ is graded as above, then $F_M = M_0\oplus \dots \oplus M_p $ is an admissible filtration on $M$. On the other hand, to each filtered module $M$ one can associate the graded module $\gr M = \oplus \gr M_N$ with $\gr M_N = F_N M / F_{N-1}M$ (where $F_{-1}M = 0$).  
Since the filtration is admissible, the module 
$\gr M$ is finitely generated. 
Then the Hilbert function (and the Hilbert series) of the filtered module $M$ are, by definition, the ones for $\gr M$: $h_M(N) = \dim \gr M_N$, $H_M(z) = H_{\gr M}(z)$.
In terms of the dimensions  $\tilde h_M(N) = \dim F_N M$, we get $h_M(N) = \tilde h_M(N) - \tilde h_M(N-1)$. 

The main properties of Hilbert functions and Hilbert series are the following, see~\cite[Ch.2]{Serre} 
 and~\cite[Sec. 1.3]{bjork}.  

\begin{prop}
	\label{prop:dim_n_mult}
(a) There exists a polynomial $P_M(N)$
(Hilbert--Samuel polynomial) such that $h_M(N) = P_M(N)$ for all large enough $N$. The degree $ \deg P_M(N)$ is equal to $D-1$, where $D = \dim M$ is the Krull dimension of $M$.  The leading coefficient of $P_M$ is equal to $e(M)/(D-1)!$, where the integer    $e(M) = P_M^{(D-1)}(0)$ is called the {\em multiplicity} of the module $M$. Both the dimension $d$ and multiplicity $e(M)$ do not depend on the choice of the admissible filtration on $M$.

(b) The Hilbert series of $M$ is a rational function of the form 
 $$
 H_M(z) = \frac{p_M(z)}{(1-z)^D}
 $$
 for some Laurent polynomial $p_M(z)$.
 Here the multiplicity of $M$ can be found as $e(M) = p_M(1)$. 
 
(c) The following additivity property holds: 
if 
$$0\to M' \to M \to M'' \to 0
$$ is an exact sequence of filtered modules (with $M$ as above), 
then $$P_M(N) = P_{M'}(N) +P_{M''}(N) + o(N^{D-1}).
$$
%
%
%
\end{prop}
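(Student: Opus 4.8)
The plan is to reduce all three assertions to the case of a finitely generated \emph{graded} module over $P$ and then run the classical Hilbert--Serre argument. Since the Hilbert function and Hilbert series of a filtered module $M$ are by definition those of the associated graded module $\gr M$, which is finitely generated precisely because the filtration is admissible, it suffices to prove (a) and (b) for finitely generated graded $P$-modules; the independence of $\dim M$ and $e(M)$ from the chosen admissible filtration I would handle separately, and (c) afterwards.

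First I would establish the rationality in (b) by induction on the number of variables $d$. For $d=0$ the ring is $\CC$, the module $M$ is finite-dimensional, and $H_M(z)$ is a genuine Laurent polynomial, so the claim holds with $D=0$. For the inductive step, multiplication by $\partial_{d-1}$ yields an exact sequence of graded modules
$$
0 \to K \to M(-1) \xrightarrow{\partial_{d-1}} M \to C \to 0,
$$
in which the kernel $K$ and cokernel $C$ are annihilated by $\partial_{d-1}$ and hence are finitely generated modules over $\CC[\partial_0,\dots,\partial_{d-2}]$. Counting dimensions degree by degree gives $(1-z)\,H_M(z) = H_C(z) - H_K(z)$, and the inductive hypothesis applied to $K$ and $C$ shows that $H_M(z)$ is rational with its only pole at $z=1$, of order at most $d$. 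Writing this rational function in lowest terms as $H_M(z) = p_M(z)/(1-z)^D$ with $p_M(1)\neq 0$ defines the integer $D$ and the Laurent polynomial $p_M$.

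Part (a) then follows by expanding $(1-z)^{-D} = \sum_{N} \binom{N+D-1}{D-1} z^N$ and comparing coefficients: for $N\gg 0$ the value $h_M(N)$ agrees with a polynomial in $N$ of degree $D-1$ whose leading coefficient is $p_M(1)/(D-1)!$, so setting $e(M)=p_M(1)$ reconciles (a) with (b) and gives $e(M)=P_M^{(D-1)}(0)$. The one genuinely nontrivial input is the identification of the pole order $D$ with the Krull dimension $\dim M$; I would obtain it from Noether normalization of the graded algebra $P/\ann M$, which exhibits $M$ as a finite module over a (weighted) polynomial subring in $\dim M$ variables and thereby forces the pole at $z=1$ to have order exactly $\dim M$. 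I expect this identification, together with the filtration-independence of $\dim M$ and $e(M)$, to be the main obstacle; the latter I would settle by noting that any two admissible filtrations are equivalent --- they differ only by a bounded shift of indices --- so their associated graded modules have Hilbert polynomials of equal degree and equal leading coefficient.

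Finally, for the additivity in (c) I would equip $M'$ and $M''$ with the filtrations induced from $M$, namely $F_pM' = M'\cap F_pM$ and $F_pM'' = F_pM/F_pM'$. A short diagram chase shows that with these choices the sequence $0\to \gr M' \to \gr M \to \gr M'' \to 0$ is exact, whence $h_M(N) = h_{M'}(N) + h_{M''}(N)$ and $P_M = P_{M'} + P_{M''}$ identically. Passing to arbitrary admissible filtrations on $M'$ and $M''$ can alter $P_{M'}$ and $P_{M''}$ only below their leading terms, that is, by polynomials of degree at most $\dim M'-2$ and $\dim M''-2$; since both dimensions are bounded by $D=\dim M$, these corrections are $o(N^{D-1})$, which accounts precisely for the error term in the stated identity.
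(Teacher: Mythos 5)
The paper gives no proof of this proposition at all: it is quoted as classical, with pointers to Serre (Ch.~2) and Bj\"ork (Sec.~1.3). Your sketch is correct and is essentially the standard Hilbert--Serre argument contained in those references --- reduction to $\gr M$, induction on the number of variables via multiplication by $\partial_{d-1}$ (with the kernel graded as a submodule of $M(-1)$), identification of the pole order with the Krull dimension through Noether normalization, the bounded-shift comparison lemma for admissible (good) filtrations, and additivity via the induced filtrations --- so it matches the approach of the paper's cited sources.
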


The following more general version of  Proposition~\ref{prop:dim_n_mult}(c) have been appeared 
 in~\cite[Remark 2.4(1)]{FV93}
for modules over local rings. Another proof that appears in~\cite{FCV99} can easily be adapted for our case of modules over polynomial rings as well, see \cite[Proposition A.1.8 and Theorem 1.2.6]{FCV99}.

\begin{prop}
	\label{prop:Hilbert_pols_for_complexes}
Let
$$
M_\cdot: 0\to M_q \to \dots \to M_0\to 0
$$
be a complex of finite $P$-modules (not necessary exact)
and assume that, for some $\delta\ge 0$, all the homology modules of the associated graded complex $H_k(\gr M)$ have dimension at most $
\delta$. 
Denote $H_k = H_k(M)$. Then, the integer function
$$
Q(N) := \sum_k (-1)^k h_{M_k}(N) - 
       \sum_k (-1)^k h_{H_k}(N)
$$
is a polynomial of degree $<\delta$ for $N>>0$. 
\end{prop}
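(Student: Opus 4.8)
The plan is to compare the homology $H_k(\gr M)$ of the associated graded complex with the associated graded $\gr H_k$ of the homology $H_k = H_k(M)$, and to show that in the alternating sum these differ only by modules of dimension at most $\delta$. I would first record the elementary reduction: in each fixed internal degree $N$ the graded complex $\gr M_\cdot$ is a bounded complex of finite-dimensional vector spaces, so additivity of the Euler characteristic on such a complex gives
$$\sum_k (-1)^k h_{M_k}(N) = \sum_k (-1)^k h_{H_k(\gr M)}(N).$$
Hence $Q(N) = \sum_k (-1)^k\bigl(h_{H_k(\gr M)}(N) - h_{H_k}(N)\bigr)$, and since by definition $h_{H_k}$ is the Hilbert function of $\gr H_k$, the problem reduces to controlling $h_{H_k(\gr M)} - h_{\gr H_k}$ term by term.

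The core of the argument is the comparison of $H_k(\gr M) = \ker \gr d_k / \im \gr d_{k+1}$ with $\gr H_k$. Writing $d_k$ for the differentials, $Z_k = \ker d_k$ and $B_k = \im d_{k+1}$, each carrying the filtration induced from $M_k$, I would use the standard exactness of $\gr$ on $0 \to B_k \to Z_k \to H_k \to 0$ (valid because $B_k \subseteq Z_k$ carry the induced filtration and $H_k$ the quotient filtration) to identify $\gr H_k = \gr Z_k / \gr B_k$. A direct check on representatives then yields the chain of graded submodules of $\gr M_k$
$$\im \gr d_{k+1} \subseteq \gr B_k \subseteq \gr Z_k \subseteq \ker \gr d_k.$$
Consequently $\gr H_k$ is a subquotient of $H_k(\gr M)$: setting $A_k = \ker \gr d_k / \gr Z_k$ (a quotient of $H_k(\gr M)$) and $C_k = \gr B_k / \im \gr d_{k+1}$ (a submodule of $H_k(\gr M)$), I obtain the termwise identity $h_{H_k(\gr M)}(N) - h_{\gr H_k}(N) = h_{A_k}(N) + h_{C_k}(N)$.

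To finish, I would observe that $A_k$ and $C_k$, being subquotients of $H_k(\gr M)$, have Krull dimension at most $\dim H_k(\gr M) \le \delta$, so by Proposition~\ref{prop:dim_n_mult}(a) their Hilbert functions agree for $N \gg 0$ with polynomials of degree at most $\delta - 1$. Therefore $Q(N) = \sum_k (-1)^k\bigl(h_{A_k}(N) + h_{C_k}(N)\bigr)$ is, for large $N$, a finite alternating sum of such polynomials, hence a polynomial of degree $< \delta$. I expect the main obstacle to lie in the second paragraph: one must verify carefully that the induced and quotient filtrations make $\gr$ exact on $0 \to B_k \to Z_k \to H_k \to 0$, and check the two potentially strict inclusions $\gr Z_k \subseteq \ker \gr d_k$ and $\im \gr d_{k+1} \subseteq \gr B_k$ on representatives, since it is precisely their possible strictness that measures the gap between $H_k(\gr M)$ and $\gr H_k$. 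The rest is routine bookkeeping, and the whole argument is the polynomial-ring adaptation of~\cite{FV93, FCV99}.
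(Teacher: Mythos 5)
Your proof is correct. It cannot coincide with ``the paper's proof,'' because the paper does not actually prove this proposition: it only cites \cite[Remark 2.4(1)]{FV93} and \cite[Proposition A.1.8 and Theorem 1.2.6]{FCV99}, which treat modules over local rings, and asserts that the argument adapts to polynomial modules. Your argument supplies exactly that adaptation, via the standard filtration-comparison route, and every essential step checks out: the degreewise Euler-characteristic identity $\sum_k (-1)^k h_{M_k}(N) = \sum_k (-1)^k h_{H_k(\gr M)}(N)$ (valid because each graded piece of $\gr M_\cdot$ is a bounded complex of finite-dimensional vector spaces and the maps $\gr d_k$ are homogeneous of degree zero); the exactness of $\gr$ on $0 \to B_k \to Z_k \to H_k \to 0$ with induced and quotient filtrations, giving $\gr H_k = \gr Z_k / \gr B_k$; the chain $\im \gr d_{k+1} \subseteq \gr B_k \subseteq \gr Z_k \subseteq \ker \gr d_k$, whose two potentially strict inclusions produce the defect modules $A_k$ and $C_k$ with $h_{H_k(\gr M)} - h_{\gr H_k} = h_{A_k} + h_{C_k}$; and the bound $\dim A_k \le \delta$, $\dim C_k \le \delta$, which holds because subquotients of $H_k(\gr M)$ have dimension at most $\dim H_k(\gr M) \le \delta$, so that Proposition~\ref{prop:dim_n_mult}(a) makes each summand eventually a polynomial of degree at most $\delta-1$. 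A pleasant simplification you could note: finite generation of $A_k$ and $C_k$ is automatic, since they are subquotients of the Noetherian graded module $\gr M_k$, so no separate admissibility discussion is needed for them. Two hypotheses you use silently should be made explicit, as they are part of the proposition's meaning: the differentials of $M_\cdot$ must preserve the filtrations (otherwise $\gr M_\cdot$, and hence the statement, is undefined), and the Hilbert functions $h_{H_k}$ must be taken with respect to the subquotient filtration induced from $M_k$ --- only the dimension and multiplicity, not the Hilbert function itself, are independent of the choice of admissible filtration, so the statement is filtration-dependent in exactly the way your proof assumes.
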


 \begin{rema}
 	{Another version of the Hilbert--Samuel polynomial is defined by the equality $\tilde h_M(N) = \widetilde P_M(N)$ for all large enough $N$. This means that $P_M(N) = \widetilde P_M(N) - \widetilde P_M(N-1)$.  Then $\deg \widetilde P_M(N)  = D$ and $e(M) = \widetilde P_M^{(D)}(0)$, so that 
 		$\tilde h_M(N) = N^D e(M)/D! + o(N^D)$.}
 \end{rema}


The last part of Proposition~\ref{prop:dim_n_mult} implies the following. For a module $M$ of dimension
at most $q$, put 
$$
e(M,q) = \left\{ \begin{array}{ll}
	e(M), & \dim M =q, \\
	0, & \dim M< q.
\end{array}
\right.
$$

\begin{cor}
\label{cor: additive_multiplicity}
	Suppose that the dimension of a finitely generated module $M$ 
	is not greater than $q$. If the sequence 
	$0\to M' \to M \to M'' \to 0
	$ of filtered modules is exact, then 
	$$e(M,q) = e(M',q)+e(M'',q).
	$$
\end{cor}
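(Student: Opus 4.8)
The plan is to reduce the statement entirely to the additivity of the Hilbert--Samuel polynomial recorded in Proposition~\ref{prop:dim_n_mult}(c), by reading off the top-degree coefficients. First I would check that the quantities $e(M',q)$ and $e(M'',q)$ are even defined: since $M'$ embeds into $M$ and $M''$ is a quotient of $M$, both submodule and quotient have dimension at most $\dim M \le q$, so the truncated multiplicity $e(-,q)$ makes sense for all three modules.

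Next I would split into two cases according to whether $\dim M = q$ or $\dim M < q$. The case $\dim M < q$ is immediate: then $\dim M' \le \dim M < q$ and $\dim M'' \le \dim M < q$, so by the very definition of $e(-,q)$ we have $e(M,q) = e(M',q) = e(M'',q) = 0$, and the asserted equality reads $0 = 0$.

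The substantive case is $\dim M = q$. Here I would invoke Proposition~\ref{prop:dim_n_mult}(a) in the following uniform form: for any finitely generated module $L$ of dimension at most $q$, the coefficient of $N^{q-1}$ in its Hilbert--Samuel polynomial $P_L(N)$ equals $e(L,q)/(q-1)!$. Indeed, if $\dim L = q$ this is the leading coefficient $e(L)/(q-1)!$, while if $\dim L < q$ then $\deg P_L < q-1$ and the coefficient of $N^{q-1}$ is $0 = e(L,q)/(q-1)!$. Applying this observation to $M$, $M'$, and $M''$, and invoking the additivity $P_M(N) = P_{M'}(N) + P_{M''}(N) + o(N^{q-1})$ from Proposition~\ref{prop:dim_n_mult}(c) with $D = \dim M = q$, I would compare the coefficients of $N^{q-1}$ on both sides. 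This yields $e(M,q)/(q-1)! = e(M',q)/(q-1)! + e(M'',q)/(q-1)!$, and clearing the factorial gives the claim.

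The only point requiring care --- and the one I would flag as the main obstacle --- is that the additivity in Proposition~\ref{prop:dim_n_mult}(c) is an equality only up to $o(N^{D-1})$, not an exact polynomial identity, because for merely filtered (as opposed to graded) modules the Hilbert functions need not add exactly. The argument survives precisely because we extract only the single top-degree coefficient of $N^{q-1}$, which is unaffected by an $o(N^{q-1})$ error term; this is exactly why the statement is phrased in terms of the leading datum $e(-,q)$ rather than the full Hilbert--Samuel polynomial.
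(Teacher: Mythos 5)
Your proof is correct and follows exactly the route the paper intends: the paper derives this corollary directly from the additivity property in Proposition~\ref{prop:dim_n_mult}(c), and your argument simply makes explicit the coefficient extraction at $N^{q-1}$ (together with the trivial case $\dim M < q$) that this deduction requires. Your closing remark about the $o(N^{q-1})$ error term is exactly the right point to flag, and it is handled correctly since a polynomial that is $o(N^{q-1})$ has vanishing coefficient at $N^{q-1}$.
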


For the degree of freedom calculation, we are particularly interested in evaluating the multiplicity and the dimension for some $\ext$ modules. The following 
proposition is quite useful for bounding the dimension.  It is a variation of a part of the  Grothendieck local duality theorem (see the implication (b)$\Rightarrow$(a) in the proof of~\cite[Th. V.3.1]{sga2}).
 
 \begin{prop}
 	\label{prop: groth}
 	Let $X$ be a finitely generated $P$-module. Then $\dim \ext^i_P(X,P) \le d-i $ for each $i=1, \dots, d$. 
 \end{prop}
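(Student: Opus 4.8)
The plan is to bound the dimension of $E_i := \ext^i_P(X,P)$ by controlling its support and appealing to the regularity of the local rings of $P$. Recall that for any finitely generated $P$-module $N$ one has $\dim N = \max\{\dim P/\mathfrak p : \mathfrak p\in \supp N\}$, the maximum being attained at a minimal prime of the support. So it suffices to show that every prime $\mathfrak p$ in $\supp E_i$ has height at least $i$: since $P$ is a polynomial ring over a field it is a catenary equidimensional domain, whence $\height \mathfrak p + \dim P/\mathfrak p = d$ for every prime, and $\height \mathfrak p\ge i$ then forces $\dim P/\mathfrak p \le d-i$, as required.

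First I would localize. Because $P$ is Noetherian and $X$ is finitely generated (so it admits a resolution by finitely generated free modules), the formation of $\ext$ commutes with localization, giving for each prime $\mathfrak p$
$$(E_i)_{\mathfrak p} \cong \ext^i_{P_{\mathfrak p}}(X_{\mathfrak p}, P_{\mathfrak p}).$$
Thus $\mathfrak p\in\supp E_i$ precisely when $\ext^i_{P_{\mathfrak p}}(X_{\mathfrak p}, P_{\mathfrak p})\ne 0$.

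Next I would use regularity. The local ring $P_{\mathfrak p}$ is regular of Krull dimension $\height \mathfrak p$, being a localization of the regular ring $P$; by Serre's theorem its global dimension equals $\height \mathfrak p$, so $\ext^j_{P_{\mathfrak p}}(-,-)=0$ for every $j>\height \mathfrak p$. Consequently, if the $i$-th Ext does not vanish then $i\le \height \mathfrak p$, i.e.\ $\height \mathfrak p\ge i$. Combining this with the dimension formula above yields $\dim P/\mathfrak p \le d-i$ for every $\mathfrak p\in\supp E_i$, and therefore $\dim E_i\le d-i$.

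The argument is short once the standard inputs are in place, so the main points requiring care --- the real obstacle --- are precisely these inputs rather than any new idea: that $\ext$ localizes (resting on $X$ being finitely generated over the Noetherian ring $P$), and, more delicately, the dimension equality $\height \mathfrak p + \dim P/\mathfrak p = d$. The latter fails for general Noetherian rings and uses specifically that $P=\CC[\partial_0,\dots,\partial_{d-1}]$ is a finitely generated domain over a field, hence catenary with all maximal chains of primes of the same length. This mirrors Grothendieck's local-duality computation: there the bound is obtained in the local setting, and here we recover the global statement by testing it at every localization, where the regular local rings $P_{\mathfrak p}$ play the role of the Gorenstein local rings in the classical proof.
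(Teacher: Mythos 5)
Your proof is correct and follows essentially the same route as the paper's: localize at a prime $\mathfrak p$, use regularity of $P_{\mathfrak p}$ (so $\pd X_{\mathfrak p} \le \gd P_{\mathfrak p} = \height \mathfrak p$) to show the localized $\ext^i$ vanishes when $\height \mathfrak p < i$, and conclude that every prime in the support has height at least $i$, hence coheight at most $d-i$. Your explicit appeal to the catenary dimension formula $\height \mathfrak p + \dim P/\mathfrak p = d$ is just a spelled-out version of the coheight bound the paper uses, so the two arguments coincide.
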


\begin{proof}
	If $p$ is  an arbitrary prime ideal of height $<i$, then 
	$\pd X_p \le \gd P_p = \dim P_p =  \height p <i$. 
	Since the localization functor is exact, it follows that $\ext^i_P(X,P)_p =
	 \ext^i_{P_p}(X_p,P_p) = 0$. 
  
  So, the support of  $\ext^i_P(X,P)$ does not contain any prime ideal of 
	 height $<i$. It follows that 
	 $\dim \ext^i_P(X,P) = \dim \supp \ext^i_P(X,P) 
	 \le \sup \{\coheight p | \height p \ge i\} \le d-i$.	
\end{proof}

For graded modules, 
 the Hilbert series and multiplicities have the following homological interpretation. Assume that a finitely generated $P$-module $M$ is graded. Then 
it admits a graded free resolution  
\begin{equation}
	\label{eq: free_resol_for_M}
{\bf F_M}:	\dots \to F_2^M \to F_1^M \to F_0^M, 
\end{equation}
where the last map has cokernel $M$ 
(so that the exact sequence can be continued up to $F_0\to M \to 0$) and $F_i^M = \bigoplus_{j}^P(-j)^{b_{ij}}  $
for some integer numbers $b_{ij} = b_{ij}^M$ (the Betti numbers of $M$). 
Note that here one can assume $F_i^M = 0$ for $i>d$ since the homological dimension of the ring $P$ is $d$. 
Since each free module $F_i^M$ is finitely generated, only a finite set of Betti numbers $b_{ij}$ is nonzero.
 
Let us denote the  Laurent polynomial 
$
\sum_j b_{ij}z^j$ by $Q^i_M(z)$. Then $H_{F_i^M}(z) = Q^i_M(z) H_P(z)$. 
If the resolution is minimal, then 
$
 Q^i_M(z) = H_{\tor_i^P(M,\CC)}(z).
$


Taking the Euler characteristic of the exact sequence~(\ref{eq: free_resol_for_M}), we get the equality of Hilbert series
$$
H_M(z) = \sum_i (-1)^i H_{F_i^M}(z),
$$
or 
$$
\frac{p_M(z)}{(1-z)^D} =  \frac{ \sum_{i} (-1)^i  Q^i_M(z)}{(1-z)^d}.
$$
Let us denote the last numerator $ \sum_{i} (-1)^i  Q^i_M(z) = \sum_{i,j} (-1)^i { b_{ij} z^j}$ by $Q_M(z)$. 
This is a Laurent polynomial with integer coefficients.  
It follows that 
\begin{equation}
\label{eq:Q(M)=(1-z)^(d-D)p_M}    
 Q_M(z) = (1-z)^{d} H_M(z) = (1-z)^{d-D} p_M(z).
\end{equation}

In contrast to the polynomials $p_M(z)$, the polynomials $Q_M(z)$ are additive.

\begin{cor}
	\label{cor:Q is additive}
If the sequence of homogeneous maps of graded modules 
$0\to M' \to M \to M'' \to 0
$  is exact, then 
$$
   Q_M(z) = Q_{M'}(z)+Q_{M''}(z).
$$
\end{cor}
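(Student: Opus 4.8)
The plan is to reduce the statement to the elementary additivity of the Hilbert series and then transport that additivity through the defining relation $Q_M(z) = (1-z)^d H_M(z)$ recorded in~(\ref{eq:Q(M)=(1-z)^(d-D)p_M}). The crucial observation is that the maps in the short exact sequence are \emph{homogeneous}, so they preserve the grading; this is exactly what lets one work one graded degree at a time rather than with the modules as a whole.

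Concretely, first I would fix an integer $N$ and restrict the exact sequence $0\to M' \to M \to M'' \to 0$ to its degree-$N$ graded components. Because the maps are homogeneous of degree zero, this yields an exact sequence of finite-dimensional complex vector spaces
$$
0\to M'_N \to M_N \to M''_N \to 0,
$$
the finite-dimensionality being guaranteed by finite generation of the graded modules over $P$. By the rank--nullity theorem (additivity of dimension on short exact sequences of vector spaces) one gets $\dim M_N = \dim M'_N + \dim M''_N$ for every $N$. Multiplying by $z^N$ and summing over all $N$ gives the additivity of Hilbert series,
$$
H_M(z) = H_{M'}(z) + H_{M''}(z),
$$
as an identity of formal Laurent power series.

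The final step is purely formal: multiply this identity by $(1-z)^d$ and invoke~(\ref{eq:Q(M)=(1-z)^(d-D)p_M}), which expresses $Q_M(z) = (1-z)^d H_M(z)$ for every finitely generated graded module. Distributing the factor $(1-z)^d$ across the two summands yields $Q_M(z) = Q_{M'}(z) + Q_{M''}(z)$, which is the claim.

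I do not expect a genuine obstacle here; the content is entirely carried by the degree-wise exactness, and the only points requiring care are bookkeeping rather than mathematical depth. The essential hypothesis to highlight is the homogeneity of the maps, without which one could not split the sequence into graded pieces and the dimension count would fail. It is worth contrasting this with the non-additivity of the polynomials $p_M(z)$: there the obstruction is precisely that $p_M(z) = (1-z)^{D-d}Q_M(z)$ involves the Krull dimension $D = \dim M$ in the exponent, and $D$ need not be additive (it can jump under passage to sub- or quotient modules), whereas the factor $(1-z)^d$ defining $Q_M$ uses the fixed ambient dimension $d$ and is therefore immune to this phenomenon. This is the conceptual reason the corollary is clean for $Q_M$ and only asymptotic (Proposition~\ref{prop:dim_n_mult}(c)) for the Hilbert--Samuel polynomials.
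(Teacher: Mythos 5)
Your proof is correct and follows essentially the same route as the paper's: both derive the Hilbert series additivity $H_M(z) = H_{M'}(z)+H_{M''}(z)$ from the exact sequence and then multiply by $(1-z)^d$ via the identity $Q_M(z) = (1-z)^d H_M(z)$ from~(\ref{eq:Q(M)=(1-z)^(d-D)p_M}). Your degree-wise dimension count merely spells out the Hilbert series step that the paper takes for granted, and your closing remark on why $p_M(z)$ fails to be additive is a correct (if extraneous) observation.
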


\begin{proof}
This is a consequence of the Hilbert series equality 
 $H_M(z) = H_{M'}(z)+H_{M''}(z)$ and the formula $Q_M(z)  = (1-z)^{d} H_M(z)$  from (\ref{eq:Q(M)=(1-z)^(d-D)p_M}) accompanied with similar formulae for $Q_{M'}$ and $Q_{M''}$.
\end{proof}

By Proposition~\ref{prop:dim_n_mult}b, $e(M, d) = Q_M(1)$. A connection of the function $Q_M(z)$ with the multiplicity $e(M, d-1)$ is the following. 

\begin{prop}
	\label{prop:Q and multiplicity}
	If $\dim M \le d-1$ for a graded $P$-module $M$, then the multiplicity $e(M, d-1)$ is equal to minus the value of the derivative of $Q_M(z)$ at $z=1$, 
	$$
	e(M, d-1) = -Q_M'(1) = \sum_{i,j} j (-1)^{i+1} { b_{ij} }.
	$$
\end{prop}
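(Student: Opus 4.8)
The plan is to read off the result from the factorization $Q_M(z) = (1-z)^{d-D} p_M(z)$ recorded in~(\ref{eq:Q(M)=(1-z)^(d-D)p_M}), where $D = \dim M \le d-1$ and $p_M(1) = e(M)$ by Proposition~\ref{prop:dim_n_mult}(b). Everything reduces to differentiating this product at $z = 1$, and the only structural point is a case split according to the order of vanishing of $Q_M$ at $z=1$, which is governed by the exponent $d-D$.

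First I would handle the case $D = d-1$, where $Q_M(z) = (1-z)\,p_M(z)$. Differentiating gives $Q_M'(z) = -p_M(z) + (1-z)p_M'(z)$, so $Q_M'(1) = -p_M(1) = -e(M)$; since $\dim M = d-1$ forces $e(M,d-1) = e(M)$ by the definition preceding the statement, this yields $-Q_M'(1) = e(M,d-1)$. In the remaining case $D \le d-2$ the exponent satisfies $d - D \ge 2$, so in $Q_M'(z) = -(d-D)(1-z)^{d-D-1}p_M(z) + (1-z)^{d-D}p_M'(z)$ both summands carry a strictly positive power of $(1-z)$ and hence vanish at $z=1$; thus $Q_M'(1) = 0$, matching $e(M,d-1) = 0$, which holds because $\dim M < d-1$.

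The second equality is then purely formal: expanding $Q_M(z) = \sum_{i,j}(-1)^i b_{ij} z^j$, differentiating term by term, and setting $z=1$ gives $Q_M'(1) = \sum_{i,j} j(-1)^i b_{ij}$, so $-Q_M'(1) = \sum_{i,j} j(-1)^{i+1} b_{ij}$. I do not anticipate any genuine obstacle here: the computation is short, and the only place that demands care is isolating the case $D = d-1$, where the single surviving factor $(1-z)$ contributes the nonzero value $-p_M(1)$, as opposed to the higher-codimension case in which every contribution is annihilated by the repeated factor $(1-z)^{d-D}$.
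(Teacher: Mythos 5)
Your proof is correct and follows essentially the same route as the paper's own argument: both differentiate the factorization $Q_M(z) = (1-z)^{d-D}p_M(z)$ from~(\ref{eq:Q(M)=(1-z)^(d-D)p_M}), split into the cases $\dim M = d-1$ (where $-Q_M'(1) = p_M(1) = e(M) = e(M,d-1)$) and $\dim M \le d-2$ (where $(1-z)^2 \mid Q_M(z)$ forces $Q_M'(1)=0$), and invoke Proposition~\ref{prop:dim_n_mult}(b) together with the definition of $e(M,q)$. The explicit term-by-term verification of the Betti-number identity at the end is a harmless addition the paper leaves implicit.
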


\begin{proof}
	Suppose that  $ \dim M = d-1$. Then 
	$e(M,d-1) = e(M) = p_M(1)$. We have 
	$Q_M'(z) = \left[(1-z)p_M(z) \right]' = 
	(1-z) p_M'(z)-p_M(z) $, so that $Q_M'(1) = 
	-p_M(1) = - e(M,d-1)$. 
	
	Now, assume that $ D = \dim M < d-1$. Then 
	$e(M,d-1) = 0$. At the same time,  the polynomial $Q_M(z) = (1-z)^{d-D} p_M(z)$ is divisible by $(1-z)^2$, so that $Q_M'(1)=0$.
\end{proof}

\end{document}